\newcommand{\href}[2]{{#2}}
\newcommand{\Sec}[1]{\hyperref[sec:#1]{Section~\ref*{sec:#1}}} %section
\newcommand{\App}[1]{\hyperref[sec:#1]{Appendix~\ref*{sec:#1}}} %appendix
\newcommand{\Supp}[1]{\hyperref[sec:#1]{Supplement~\ref*{sec:#1}}} %supplement
\newcommand{\Eqn}[1]{\hyperref[eq:#1]{{\rm (\ref*{eq:#1})}}} %equation
\newcommand{\Part}[1]{\hyperref[part:#1]{(\ref*{part:#1})}} %part of theorem
\newcommand{\Fig}[1]{\hyperref[fig:#1]{Figure~\ref*{fig:#1}}} %figure
\newcommand{\Tab}[1]{\hyperref[tab:#1]{Table~\ref*{tab:#1}}} %table
\newcommand{\Thm}[1]{\hyperref[thm:#1]{Theorem~\ref*{thm:#1}}} %theorem
\newcommand{\Lem}[1]{\hyperref[lem:#1]{Lemma~\ref*{lem:#1}}} %lemma
\newcommand{\Prop}[1]{\hyperref[prop:#1]{Proposition~\ref*{prop:#1}}} %proposition
\newcommand{\Cor}[1]{\hyperref[cor:#1]{Corollary~\ref*{cor:#1}}} %corollary
\newcommand{\Def}[1]{\hyperref[def:#1]{Definition~\ref*{def:#1}}} %definition
\newcommand{\Alg}[1]{\hyperref[alg:#1]{Algorithm~\ref*{alg:#1}}} %algorithm
\newcommand{\Ex}[1]{\hyperref[ex:#1]{Example~\ref*{ex:#1}}} %example
\newcommand{\As}[1]{\hyperref[as:#1]{Assumption~{\rm\ref*{as:#1}}}} %assumption
\newcommand{\Reg}[1]{\hyperref[as:#1]{Condition~\ref*{reg:#1}}} %regularity condition
\newcommand{\AlgLine}[2]{\hyperref[alg:#1]{line~\ref*{line:#2} of Algorithm~\ref*{alg:#1}}}
\newcommand{\AlgLines}[3]{\hyperref[alg:#1]{lines~\ref*{line:#2}--\ref*{line:#3} of Algorithm~\ref*{alg:#1}}}
\newcommand{\Sec}[1]{{Section~\ref{sec:#1}}} %section
\newcommand{\App}[1]{{Appendix~\ref{sec:#1}}} %appendix
\newcommand{\Supp}[1]{{Supplement~\ref{sec:#1}}} %supplement
\newcommand{\Eqn}[1]{{(\ref{eq:#1})}} %equation
\newcommand{\Part}[1]{{(\ref{part:#1})}} %part of proof
\newcommand{\Fig}[1]{{Figure~\ref{fig:#1}}} %figure
\newcommand{\Tab}[1]{{Table~\ref{tab:#1}}} %table
\newcommand{\Thm}[1]{{Theorem~\ref{thm:#1}}} %theorem
\newcommand{\Lem}[1]{{Lemma~\ref{lem:#1}}} %lemma
\newcommand{\Prop}[1]{{Proposition~\ref{prop:#1}}} %proposition
\newcommand{\Cor}[1]{{Corollary~\ref{cor:#1}}} %corollary
\newcommand{\Def}[1]{{Definition~\ref{def:#1}}} %definition
\newcommand{\Alg}[1]{{Algorithm~\ref{alg:#1}}} %algorithm
\newcommand{\Ex}[1]{{Example~\ref{ex:#1}}} %example
\newcommand{\Reg}[1]{{R~\ref*{reg:#1}}} %regularity condition
\newcommand{\Real}{\mathbb{R}}
\newcommand{\dom}{{\bf dom}\,}
\newcommand{\Tra}{^{\sf T}} % Transpose
\def\vec{\mathop{\rm vec}\nolimits}
\newcommand{\V}[1]{{\bm{\mathbf{\MakeLowercase{#1}}}}} % vector
\newcommand{\VE}[2]{\MakeLowercase{#1}_{#2}} % vector element
\newcommand{\Vn}[2]{\V{#1}^{(#2)}} % n-th vector
\newcommand{\Vtilde}[1]{{\bm{\tilde \mathbf{\MakeLowercase{#1}}}}} % vector
\newcommand{\M}[1]{{\bm{\mathbf{\MakeUppercase{#1}}}}} % matrix
\newcommand{\ME}[2]{\MakeLowercase{#1}_{#2}} % matrix element
\newcommand{\Mtilde}[1]{{\bm{\tilde \mathbf{\MakeUppercase{#1}}}}} % matrix
\newcommand{\Mbar}[1]{{\bm{\bar \mathbf{\MakeUppercase{#1}}}}} % matrix
\newcommand{\Mn}[2]{\M{#1}^{(#2)}} % n-th matrix
\newcommand{\Kron}{\otimes} %Kronecker
\newcommand{\prox}{\mathop{\rm prox}\nolimits}
\newcommand{\amp}{\mathop{\:\:\,}\nolimits}
\newtheorem{assumption}{Assumption}[section]
\newtheorem{proposition}{Proposition}[section]
\newtheorem{theorem}{Theorem}[section]
\numberwithin{equation}{section}
\theoremstyle{plain}
\begin{document}

\begin{frontmatter}
\title{Convex Biclustering}
\runtitle{Convex Biclustering}

\begin{aug}
\author{\fnms{Eric C.} \snm{Chi}\ead[label=e1]{eric$\_$chi@ncsu.edu}},
\author{\fnms{Genevera I.} \snm{Allen}\ead[label=e2]{gallen@rice.edu}}
\and
\author{\fnms{Richard G.} \snm{Baraniuk}
\ead[label=e3]{richb@rice.edu}
}

\runauthor{E. Chi et al.}

\affiliation{North Carolina State University and Rice University}

\address{E. C. Chi\\
Department of Statistics \\
North Carolina State University \\
Raleigh, NC 27695\\
USA \\
\printead{e1}}

\address{G. I. Allen\\
Department of Statistics\\
Rice University\\
Houston, TX 77005\\
USA\\
\printead{e2}}

\address{R. G. Baraniuk\\
Department of Electrical\\
and Computer Engineering\\
Rice University\\
Houston, TX 77005\\
USA \\
\printead{e3}}
\end{aug}

\begin{abstract}
In the biclustering problem, we seek to simultaneously group observations and features. While biclustering has applications in a wide array of domains, ranging from text mining to collaborative filtering, the problem of 
identifying structure in high dimensional genomic data motivates this work. In this context, biclustering enables us to identify subsets of genes that are co-expressed only within a subset of experimental conditions. We present a convex formulation of the biclustering problem that possesses a unique global minimizer and an iterative algorithm, COBRA, that is guaranteed to identify it. Our approach generates an entire solution path of possible biclusters as a single tuning parameter is varied. We also show how to reduce the problem of selecting this tuning parameter to solving a trivial modification of the convex biclustering problem. The key contributions of our work are its simplicity, interpretability, and algorithmic guarantees - features that arguably are lacking in the current alternative algorithms. We demonstrate the advantages of our approach, which includes stably and reproducibly identifying biclusterings, on simulated and real microarray data.
\end{abstract}

\begin{keyword}
\kwd{Clustering}
\kwd{Convex Optimization}
\kwd{Fused Lasso}
\kwd{Gene Expression}
\kwd{Reproducible Research}
\kwd{Structured Sparsity}
\end{keyword}

\end{frontmatter}

%% ----------------------------------------------------------------------
%% Introduction
%% ----------------------------------------------------------------------
\section{Introduction}
\label{sec:introduction}

In the biclustering problem, we seek to simultaneously group observations (columns) and features (rows) in a data matrix. Such data is sometimes described as two-way, or transposable, to put the rows and columns on equal footing and to emphasize the desire to uncover structure in both the row and column variables. Biclustering is used for visualization and exploratory analysis in a wide array of domains. For example, in text mining, biclustering can identify subgroups of documents with similar properties with respect to a subgroup of words \citep{Dhi2001}. In collaborative filtering, it can be used to identify subgroups of customers with similar preferences for a subset of products \citep{HofPuz1999}. Comprehensive reviews of biclustering methods can be found in \citep{MadOli2004,TanSha2005,BusPro2008}. 

In this work, we focus on biclustering to identify patterns in high dimensional cancer genomic data.  While a cancer, such as breast cancer, may present clinically as a homogenous disease, it typically consists of several distinct subtypes at the molecular level. A fundamental goal of cancer research is the identification of subtypes of cancerous tumors that have similar molecular profiles and the genes that characterize each of the subtypes. Identifying these patterns is the first step towards developing personalized treatment strategies targeted to a patient's particular cancer subtype. 

Subtype discovery can be posed as a biclustering problem in which gene expression data is partitioned into a checkerboard-like pattern that highlights the associations between groups of patients and groups of genes that distinguish those patients. Biclustering has had some notable successes in subtype discovery. For instance,
biclustering breast cancer data has identified sets of genes whose expression levels segregated patients into five subtypes with distinct survival outcomes \citep{SPerTib2001}.  These subtypes have been reproduced in numerous studies \citep{STibPar2003}. Encouraged by these results, scientists have searched for molecular subtypes in other cancers, such as ovarian cancer \citep{TotTinGeo2008}. Unfortunately, the findings in many of these additional studies have not been as reproducable as those identified by \cite{SPerTib2001}. The failure to reproduce these other results may reflect a genuine absence of biologically meaningful groupings. But another possibility may be related to issues inherent in the computational methods currently used to identify biclusters.

While numerous methods for biclustering genomic data have been proposed \citep{MadOli2004,BusPro2008}, the most popular approach to biclustering cancer genomics data is the clustered dendrogram.  This method performs hierarchical clustering \citep{HasTib2009} on the patients (columns) as well as on the genes (rows).  The matrix is then re-ordered according to these separate row and column dendrograms and visualized as a heatmap with dendrograms plotted alongside the row and column axes.  \Fig{lung500_hclust} illustrates an example of a clustered dendrogram of expression data from a lung cancer study. The data consists of the expression levels of 500 genes across 56 individuals, a subset of the data studied in \cite{LeeSheHua2010}. Subjects belong to one of four subgroups: Normal, Carcinoid, Colon, or Small Cell.
\begin{figure}
        \centering
\begin{tabular}{cc}
\subfloat[Raw Data]{\includegraphics[scale=0.4125]{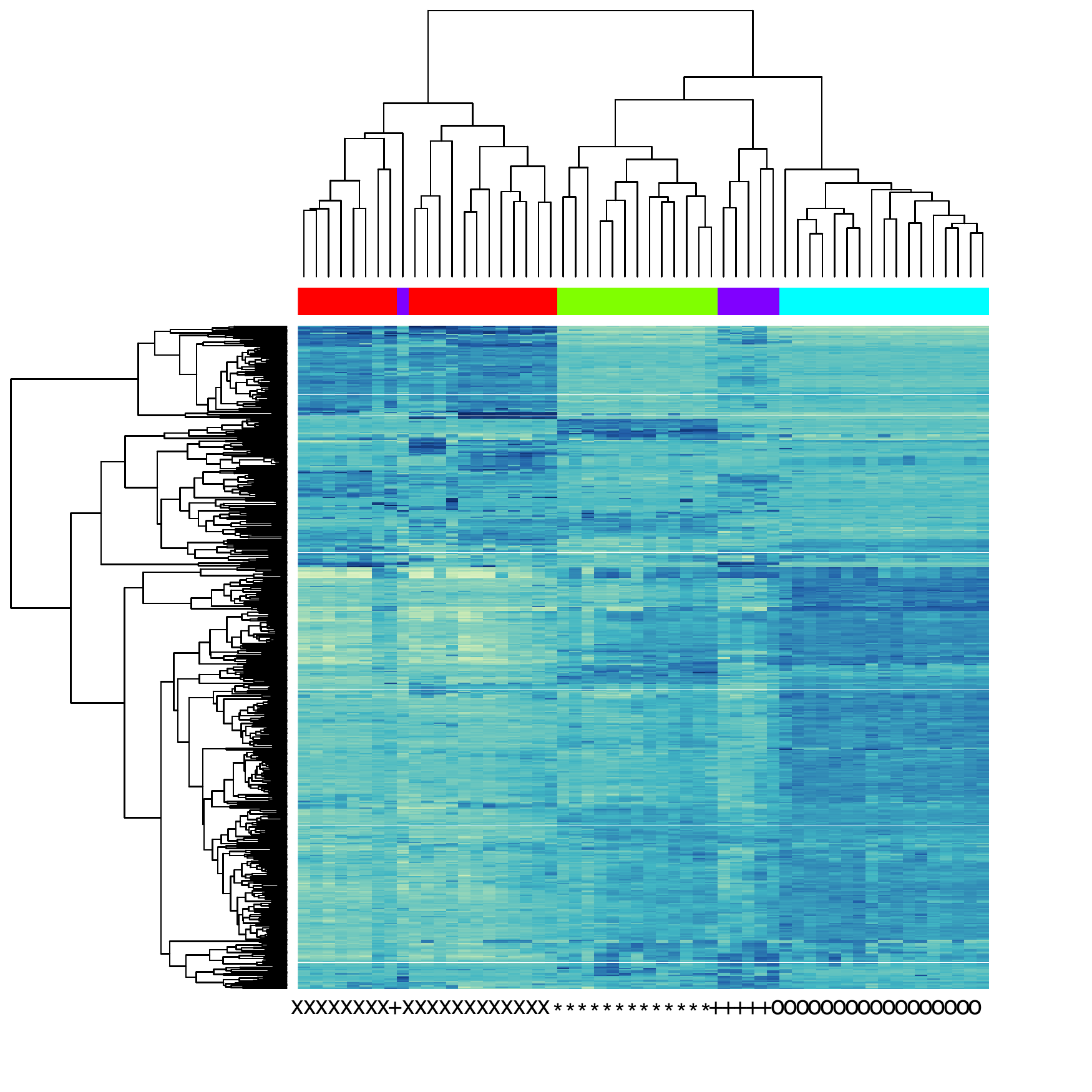}
\label{fig:lung500_hclust}} 
   & \subfloat[COBRA Smoothed Estimate]{\includegraphics[scale=0.4125]{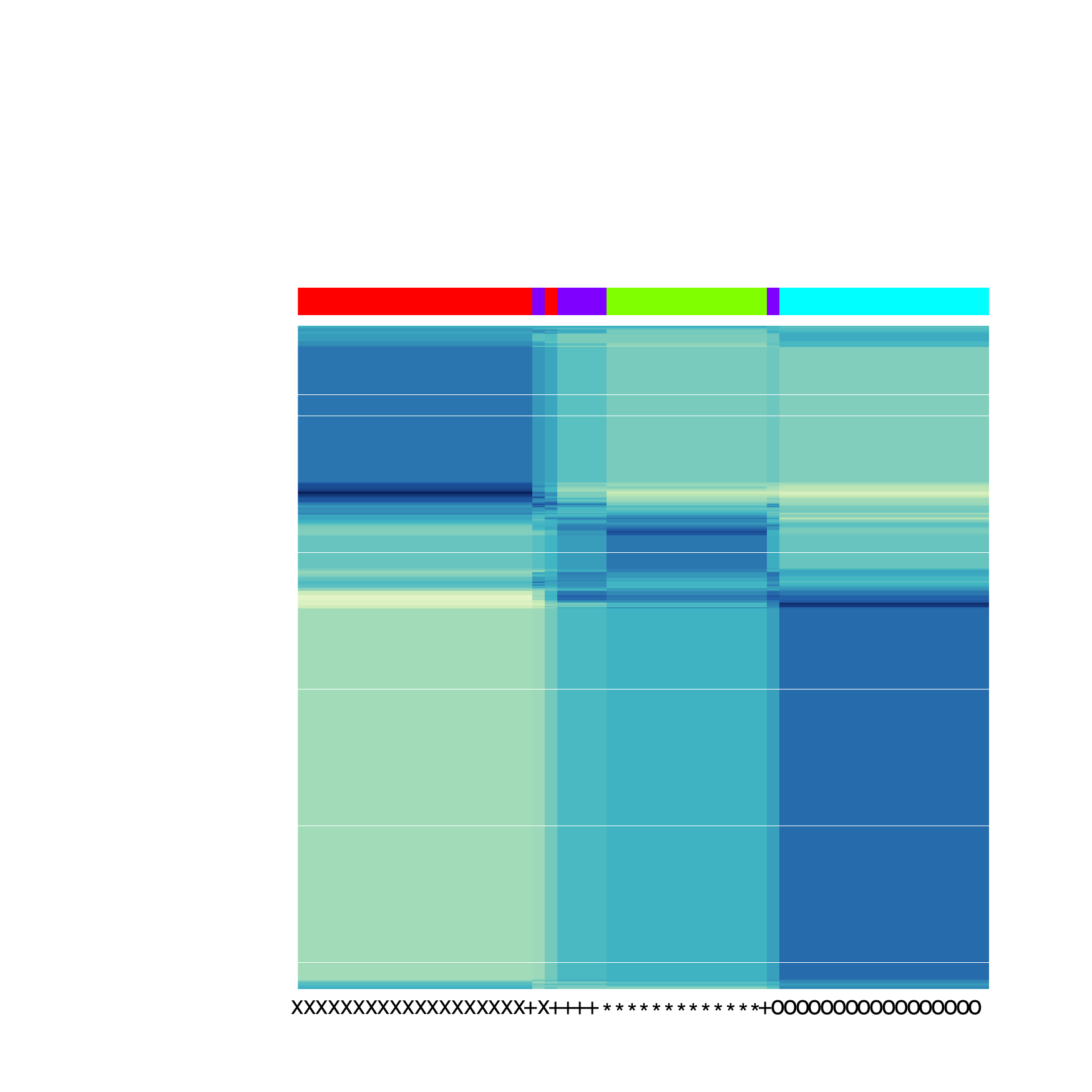}
\label{fig:lung500_cba}}\\ \vspace{2mm}
\end{tabular}
\caption{Heatmaps of the expression of 500 genes (rows) across 56 subjects (columns). \Fig{lung500_hclust} depicts the clustered dendrogram applied to the raw data; \Fig{lung500_cba} depicts COBRA smoothed data after reordering the columns and rows via the seriation package \citep{Hahsler2008}. Subjects belong to one of four subgroups: Normal (o), Carcinoid (x), Colon (*), and Small Cell (+).}

\end{figure}
This simple strategy seems to reasonably recover the clinical diagnoses and identify sets of genes whose dysregularization characterizes the subtypes.

As an algorithmic procedure, however, the clustered dendrogram has two characteristics that make it less than ideal for generating reproducible results. Dendrograms are constructed by greedily fusing observations (features) to decrease some criterion. Consequently the algorithm may return a biclustering that is only locally optimal with respect to the criterion. Since solutions may vary depending on how the algorithm is initialized, such procedures tend to be run from multiple initializations, but even then there is no guarantee that a global optimum will be reached. The algorithm is also not stable in the sense that small perturbations in the data can lead to large changes in the clustering assignments. 

More sophisticated methods have been proposed for biclustering, some based on the singular value decomposition (SVD) of the data matrix  \citep{LazOwe2002,BerIhm2003,TurBai2005,WitTib2009,LeeSheHua2010,SilKaiKop2011}, and others based on graph cuts \citep{Dhi2001,KluBasGer2003}. Some approaches are similar in spirit to the clustered dendrogram and directly cluster the rows and columns \citep{CoiGav2011,TanWit2013}. While these methods may provide worthwhile improvements in empirical performance, none of them address the two fundamental issues that dog the clustered dendrogram. Moreover, scientists may shy from using many of these methods, since their outputs are typically not as easy to visualize as compared to the simple clustered dendrogram. From a reproducible research perspective, a biclustering method should (i) give the same, ideally unique, answer regardless of how the algorithm is initialized, and (ii) be stable with respect perturbations in the data.

In this paper, we pose biclustering as a convex optimization problem and introduce a novel Convex BiclusteRing Algorithm (COBRA) for iteratively solving it. COBRA outputs results that retain the simple interpretability and visualization of the clustered dendrogram and also possesses several key advantages over existing techniques: 
\begin{inparaenum}[(i)]
\item {\bf Stability and Uniqueness:} COBRA produces the unique global minimizer to a convex program and this minimizer is continuous in the data. This means that COBRA  always maps the data to a single biclustering assignment, and this solution is stable.
\item {\bf Simplicity:} COBRA employs a single tuning parameter that controls the number of biclusters.
\item {\bf Data Adaptivity:} COBRA admits a simple and principled data adaptive procedure for choosing the tuning parameter that involves solving a convex matrix completion problem.
\end{inparaenum}

Returning to our motivating lung cancer example, \Fig{lung500_cba} illustrates the results of COBRA with the tuning parameter selected according to our data adaptive procedure. After reordering the columns and rows via the seriation package \citep{Hahsler2008}, with the `TSP' option, a clearer biclustering patterns emerge. 

%% ----------------------------------------------------------------------
%% Convex Programming
%% ----------------------------------------------------------------------
\section{A Convex Formulation of Biclustering}
\label{sec:formulation}

Our goal is to identify the groups of rows and groups of columns in a data matrix that are associated with each other. As seen in \Fig{lung500_cba}, when rows and columns are reordered according to their groupings, a checkerboard pattern emerges, namely the elements of the matrix partitions defined by row and column groups tend to display a uniform intensity.

We now describe a probabilistic model that can generate the observed checkerboard pattern. Our data, $\M{X} \in \Real^{p \times n}$, consists of $n$ samples drawn from a $p$-dimensional feature space. Suppose that the latent checkerboard structure is defined by $R$ feature groups and $C$ observation groups. If the $ij$th entry in $\M{X}$ belongs to the cluster defined by the $r$th feature group and $c$th observation group, then we assume that the observed value $\VE{x}{ij}$ is given by
$\ME{x}{ij} = \ME{\mu}{0} + \ME{\mu}{rc} + \VE{\varepsilon}{ij}$,
where $\ME{\mu}{0}$ is a baseline or grand mean shared by all entries of the data matrix, $\ME{\mu}{rc}$ is the mean of the cluster defined by the $r$th row partition and $c$th column partition, and $\VE{\varepsilon}{ij}$ are iid\@ $N(0,\sigma^2)$ for some $\sigma^2 > 0$. To ensure identifiability of the mean parameters, we assume that $\ME{\mu}{0} = 0$, which can be achieved by removing the grand sample mean from the data matrix $\M{X}$.

This biclustering model corresponds to a checkerboard mean model \citep{MadOli2004}. This model is most similar to that assumed by \cite{TanWit2013} who propose methods to estimate a checkerboard-like structure with some of the bicluster mean entries being sparse. The checkerboard model is exhaustive in that each matrix element is assigned to one bicluster. This is in contrast to other biclustering models that identify potentially overlapping row and column subsets that are not exhaustive; these are typically estimated using SVD-like methods \citep{LazOwe2002,BerIhm2003,TurBai2005,WitTib2009,LeeSheHua2010,SilKaiKop2011} or methods to find hot-spots \citep{ShaWeiNob2009}. 

Estimating the checkerboard model parameters consists of finding the partitions and the mean values of each partition. Estimating $\ME{\mu}{rc}$, given feature and observation clusterings, is trivial. Let $\mathcal{R}$ and $\mathcal{C}$ denote the indices of the $r$th row partition and $c$th column partition. The maximum likelihood estimate of $\ME{\mu}{rc}$ is simply the sample mean of the entries of $\M{X}$ over the indices defined by $\mathcal{R}$ and $\mathcal{C}$, namely
$\ME{\mu}{rc} = \frac{1}{\lvert \mathcal{R} \rvert\lvert \mathcal{C} \rvert}\sum_{i \in \mathcal{R}, j \in \mathcal{C}} \ME{X}{ij}.$

In contrast, estimating the row and column partitions, is a combinatorially hard problem and characterizes the main objective of biclustering. This task is akin to best subset selection in regression problems \citep{HasTib2009}. However, just as the best subset selection problem has been successfully attacked by solving a convex surrogate problem, namely the Lasso \citep{Tib1996}, we will develop a convex relaxation of the combinatorally hard problem of selecting the row and column partitions.

We propose to identify the partitions by minimizing the following convex criterion
\begin{eqnarray}
F_{\gamma}(\M{U}) & = & \frac{1}{2} \lVert \M{x} - \M{U} \rVert_{\text{F}}^2 + \gamma \underbrace{\left [\Omega_{\M{w}}(\M{U}) + \Omega_{\Mtilde{W}}(\M{U}\Tra) \right ]}_{J(\M{U})},
\label{eq:biclust_objective_function}
\end{eqnarray}
where $\Omega_{\M{w}}(\M{U}) = \sum_{i<j}w_{ij} \|\M{U}_{\cdot i}-\M{U}_{\cdot j} \rVert_2$,
and $\M{U}_{\cdot i}$ ($\M{U}_{i \cdot})$ denotes the $i$th column (row) of the matrix $\M{U}$. 

We have posed the biclustering problem as a penalized regression problem, where
the matrix $\M{U} \in \Real^{p \times n}$ is our estimate of the means matrix $\M{\mu}$. The quadratic term quantifies how well $\M{U}$ approximates $\M{X}$. The regularization term $J(\M{U})$ penalizes deviations away from a checkerboard pattern. The parameter $\gamma \geq 0$ tunes the tradeoff between the two terms. The parameters $w_{ij} = w_{ji}$ and $\tilde{w}_{ij} = \tilde{w}_{ji}$ are non-negative weights that will be explained shortly.

The penalty term $J(\M{U})$ is closely related to other sparsity inducing penalties. When only the rows or columns are being clustered, minimizing the objective function in \Eqn{biclust_objective_function} corresponds to solving a convex clustering problem \citep{PelDeSuy2005,HocVerBac2011,LinOhlLju2011} under an $\ell_2$-norm fusion penalty. 
The convex clustering problem in turn can be seen as a generalization of the Fused Lasso \citep{TibSauRos2005}. 
When the $\ell_1$-norm is used in place of the $\ell_2$-norm in $\Omega_{\M{W}}(\M{U})$, we recover a special case of the general Fused Lasso \citep{TibTay2011}.
Other norms can also be employed in our framework; see for example \cite{ChiLan2015}. In this paper, we restrict ourselves to the $\ell_2$-norm since it is rotationally invariant. In general, we do not want a procedure whose biclustering output may non-trivially change when the coordinate representation of the data is trivially changed.

To understand how the regularization term $J(\M{U})$ steers solutions toward a checkerboard pattern, consider the effects of $\Omega_{\M{W}}(\M{U})$ and $\Omega_{\Mtilde{W}}(\M{U}\Tra)$ separately. Suppose $J(\M{U}) = \Omega_{\M{W}}(\M{U})$. The $i$th column $\M{U}_{\cdot i}$ of the matrix $\M{U}$ can be viewed as a cluster center or centroid attached to the $i$th column $\M{x}_{\cdot i}$ of the data matrix $\M{X}$. When $\gamma=0$, the minimum is attained when $\M{U}_{\cdot i}=\M{x}_{\cdot i}$, and each column occupies a unique column cluster.
As $\gamma$ increases, the cluster centroids are shrunk together and in fact begin to coalesce. Two columns $\M{x}_{\cdot i}$ and $\M{x}_{\cdot j}$ are assigned to the same column partition if $\M{U}_{\cdot i} = \M{U}_{\cdot j}$. We will prove later that for sufficiently large $\gamma$, all columns coalesce into a single cluster. Similarly, suppose $J(\M{U}) = \Omega_{\Mtilde{W}}(\M{U}\Tra)$ and view the rows of $\M{U}$ as the cluster centroids attached to the rows of $\M{X}$. As $\gamma$ increases, the row centroids will begin to coalesce, and we likewise say that the $i$th and $j$th rows of $\M{X}$ belong to the same row partition if their centroid estimates $\M{U}_{i \cdot}$ and $\M{U}_{j \cdot}$ coincide.

When $J(\M{U})$ includes both $\Omega_{\M{W}}(\M{U})$ and $\Omega_{\Mtilde{W}}(\M{U}\Tra)$, rows and columns of $\M{U}$ are {\em simultaneously} shrunk towards each other as the parameter $\gamma$ increases. The penalized estimates exhibit the desired checkerboard structure as seen in \Fig{lung500_cba}. Note this shrinkage procedure is fundamentally different from methods like the clustered dendrogram, which independently cluster the rows and columns. By coupling row and column clustering, our formulation explicitly seeks out a solution with a checkerboard mean structure.

We now address choosing the weights $w_{ij}$ and $\tilde{w}_{ij}$. A judicious choice of the weights enables us to (i) employ a single regularization parameter $\gamma$, (ii) obtain more parsimonious clusterings, and (iii) speed up the convergence of key subroutines employed by COBRA. With these goals in mind, we recommend weights having following properties:
\begin{inparaenum}[(i)]
\item The column weights should sum to $1/\sqrt{n}$ and the row weights should sum to $1/\sqrt{p}$.
\item The column weight $w_{ij}$ should be inversely proportional to the distance between the $i$th and $j$th columns $\lVert \M{X}_{\cdot i} - \M{X}_{\cdot j} \rVert_2$. The row weights should be assigned analogously.
\item The weights should be sparse, namely consist mostly of zeros.
\end{inparaenum}

We now discuss the rationale behind our weight recommendations. The key to ensuring that a single tuning parameter suffices for identifying the checkerboard pattern, is keeping the two penalty terms
 $\Omega_{\M{W}}(\M{U})$ and $\Omega_{\Mtilde{W}}(\M{U}\Tra)$ on the same scale. If this does not hold, then either column or row clusterings will dominate. Consequently, since the columns are in $\Real^p$ and the rows are in $\Real^n$, we choose the column weights to sum to $1/\sqrt{p}$ and the row weights to sum to $1/\sqrt{n}$. Using weights that are inversely proportional to the distances between points, more aggressively shrinks rows and columns that are more similar to each other, and less aggressively shrinks rows and columns that are less similar to each other. Finally, sparse weights expedite computation. COBRA solves a sequence of convex clustering problems. The algorithm we employ for solving the convex clustering subproblem scales in storage and computational operations as $\mathcal{O}(npq)$, where $q$ is the number of non-zero weights. Shrinking all pairs of columns (rows) and taking $q = n^2$ ($q = p^2$) not only increases computational costs but also typically produces inferior clustering to sparser ones as seen in \cite{ChiLan2015}. We employ the sparse Gaussian kernel weights described in \cite{ChiLan2015}, which satisfy the properties outlined above. Additional discussion on this choice of weights is given in Web Appendix A.
  
%% ----------------------------------------------------------------------
%% Solution
%% ----------------------------------------------------------------------
\section{Properties of the Convex Formulation and COBRA's Solution}
\label{sec:solution}

The solution to minimizing \Eqn{biclust_objective_function} has several attractive properties as a function of the data $\M{X}$, the regularization parameter $\gamma$, and its weights $\M{W} = \{w_{ij}\}$ and $\Mtilde{W} = \{\tilde{w}_{ij}\}$, some of which can be exploited to expedite its numerical computation. We emphasize that these results are inherent to the minimization of the objective function \Eqn{biclust_objective_function}. They hold regardless of the algorithm used to find the minimum point, since they are a consequence of casting the biclustering problem as a convex program. Proofs of all propositions can be found in Web Appendix B. First, minimizing \Eqn{biclust_objective_function} is a well-posed optimization problem.
\begin{proposition}
\label{prop:existence_uniqueness} The function $F_\gamma(\M{U})$ defined in \Eqn{biclust_objective_function} has a unique global minimizer.
\end{proposition}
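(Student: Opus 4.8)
The plan is to establish the two defining properties of a well-posed convex program separately: \emph{existence} of a minimizer, which follows from continuity together with coercivity, and \emph{uniqueness}, which follows from strict convexity. Because the domain $\Real^{p\times n}$ is finite dimensional, these two ingredients are all that is required.

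First I would argue existence. The map $\M{U}\mapsto F_\gamma(\M{U})$ is continuous, being a sum of the squared Frobenius norm and the penalty $J(\M{U})$, which is itself a finite sum of compositions of the (continuous) $\ell_2$-norm with the linear maps $\M{U}\mapsto \M{U}_{\cdot i}-\M{U}_{\cdot j}$ and $\M{U}\mapsto \M{U}_{i\cdot}-\M{U}_{j\cdot}$. Since the weights are nonnegative, $J(\M{U})\ge 0$, so
\[
F_\gamma(\M{U}) \;\ge\; \tfrac{1}{2}\lVert \M{X}-\M{U}\rVert_{\text{F}}^2,
\]
and the right-hand side tends to $+\infty$ as $\lVert \M{U}\rVert_{\text{F}}\to\infty$. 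Hence $F_\gamma$ is coercive, and a continuous coercive function on a finite-dimensional space attains its infimum; a minimizer therefore exists.

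Next I would establish uniqueness by showing $F_\gamma$ is strictly convex. The penalty $J$ is convex: each summand is a norm of a linear image of $\M{U}$, norms are convex, convexity is preserved under composition with a linear map, and nonnegative combinations of convex functions are convex. The data-fidelity term $\tfrac12\lVert \M{X}-\M{U}\rVert_{\text{F}}^2$ is \emph{strictly} convex in $\M{U}$, since its Hessian with respect to $\vec \M{U}$ is the identity, which is positive definite. The sum of a strictly convex function and a convex function is strictly convex, so $F_\gamma$ is strictly convex, and a strictly convex function has at most one global minimizer. Combining this with existence yields the claim.

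The point worth flagging is that the penalty $J$ is only convex, not strictly convex: it is a seminorm that vanishes whenever all columns of $\M{U}$ coincide and all rows coincide (for instance, on constant matrices). Strict convexity of the objective therefore rests entirely on the quadratic term, the penalty contributing only convexity and nonnegativity. I do not anticipate a genuine obstacle here, as this is a standard well-posedness statement; the only care needed is to keep the two roles of the two terms cleanly separated, namely coercivity and strict convexity supplied by the quadratic term, and convexity and nonnegativity supplied by the penalty.
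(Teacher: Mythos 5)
Your proposal is correct and follows exactly the paper's route: the paper's own proof simply asserts that existence and uniqueness are immediate consequences of the coerciveness and strict convexity of $F_\gamma$, and your argument supplies precisely those two ingredients (coercivity from the nonnegativity of $J$ plus the quadratic term, strict convexity from the quadratic term plus convexity of the seminorm penalty). The additional care you take in flagging that $J$ is only a seminorm and that strict convexity rests entirely on the data-fidelity term is a useful elaboration but not a departure from the paper's argument.
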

Furthermore, since $F_\gamma(\M{U})$ is convex, the only local minimum is the global minimum, and any algorithm that converges to a stationary point of $F_\gamma(\M{U})$ will converge to the global minimizer.
The next result will have consequences for numerical computation and is also the foundation underpinning COBRA's stability.
\begin{proposition}
\label{prop:solution_path_continuity} The minimizer $\M{U}^\star$ of \Eqn{biclust_objective_function} is jointly continuous in $(\M{X},\gamma, \M{W},\Mtilde{W})$.
\end{proposition}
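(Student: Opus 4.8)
The plan is to exploit the strong convexity of $F_\gamma$, whose modulus is uniform over all parameter values. Write the parameter tuple as $\theta = (\M{X}, \gamma, \M{W}, \Mtilde{W})$ and regard the criterion in \Eqn{biclust_objective_function} as a function $F(\M{U}; \theta)$. The penalty $J(\M{U})$ is convex, and the data-fidelity term $\frac{1}{2}\lVert \M{X} - \M{U} \rVert_{\text{F}}^2$ has Hessian equal to the identity in $\M{U}$; hence $F(\cdot; \theta)$ is $1$-strongly convex for \emph{every} $\theta$, the modulus being unaffected by $\gamma$, $\M{W}$, and $\Mtilde{W}$. Equivalently, $\M{U}^\star = \prox_{\gamma J}(\M{X})$, so nonexpansiveness of the proximal map already gives $1$-Lipschitz continuity in $\M{X}$ alone; the task is to upgrade this to joint continuity in all four arguments. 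The main obstacle is that $J$ is non-smooth — a sum of $\ell_2$ norms, non-differentiable precisely where columns or rows of $\M{U}$ coincide — so the first-order optimality conditions cannot be differentiated and an implicit-function-theorem argument is unavailable. Strong convexity is exactly what lets us sidestep this.

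First I would record a two-point inequality. Fix tuples $\theta_1, \theta_2$ with objectives $F_1 = F(\cdot; \theta_1)$, $F_2 = F(\cdot; \theta_2)$ and unique minimizers $\M{U}_1^\star, \M{U}_2^\star$ (uniqueness is \Prop{existence_uniqueness}). The strong-convexity lower bound at each minimizer gives $F_1(\M{U}_2^\star) \geq F_1(\M{U}_1^\star) + \tfrac{1}{2}\lVert \M{U}_1^\star - \M{U}_2^\star \rVert_{\text{F}}^2$ together with the symmetric inequality exchanging the roles of $1$ and $2$. Adding and cancelling yields
\begin{equation}
\lVert \M{U}_1^\star - \M{U}_2^\star \rVert_{\text{F}}^2 \;\leq\; \bigl[F_1(\M{U}_2^\star) - F_2(\M{U}_2^\star)\bigr] + \bigl[F_2(\M{U}_1^\star) - F_1(\M{U}_1^\star)\bigr], \label{eq:sc_two_point}
\end{equation}
so it suffices to control how much the objective value at a fixed point changes when the parameters are perturbed.

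Next I would confine the minimizers to a compact set. Evaluating the objective at $\M{U} = \M{X}$ and using optimality of $\M{U}^\star$ gives $\tfrac{1}{2}\lVert \M{X} - \M{U}^\star \rVert_{\text{F}}^2 \leq F(\M{U}^\star; \theta) \leq F(\M{X}; \theta) = \gamma\, J(\M{X})$, since $J \geq 0$ and the quadratic term vanishes at $\M{U} = \M{X}$. As $\theta$ ranges over a small neighborhood of a fixed tuple, the right-hand side stays bounded ($\M{X}$ and $\gamma$ enter continuously and $J$ depends continuously on $\M{X}, \M{W}, \Mtilde{W}$), so all minimizers in question lie in one bounded ball $B$. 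On $B$ each term of $F(\M{U}; \theta)$ — the quadratic $\frac{1}{2}\lVert \M{X} - \M{U}\rVert_{\text{F}}^2$ and each weighted norm $\gamma w_{ij}\lVert \M{U}_{\cdot i} - \M{U}_{\cdot j}\rVert_2$ — is jointly continuous and locally Lipschitz in $(\M{U}, \theta)$, so $\sup_{\M{U} \in B}\lvert F_1(\M{U}) - F_2(\M{U})\rvert \to 0$ as $\theta_2 \to \theta_1$.

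Finally I would combine the pieces: both bracketed terms on the right of \Eqn{sc_two_point} are bounded in absolute value by $\sup_{\M{U} \in B}\lvert F_1(\M{U}) - F_2(\M{U})\rvert$, which vanishes as $\theta_2 \to \theta_1$. Hence $\lVert \M{U}_1^\star - \M{U}_2^\star \rVert_{\text{F}} \to 0$, establishing joint continuity of $\M{U}^\star$ in $(\M{X}, \gamma, \M{W}, \Mtilde{W})$. I expect the only delicate bookkeeping to be the uniform (in $\M{U} \in B$) Lipschitz estimate for the penalty difference across differing weight matrices; everything else follows directly from the uniform strong-convexity modulus, which is the crux that renders the non-smoothness of $J$ harmless.
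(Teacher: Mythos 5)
Your argument is correct, and it takes a genuinely different route from the paper's. The paper proves continuity qualitatively, by contradiction: it supposes a sequence of parameters $\Vn{\zeta}{m} \to \V{\zeta}$ whose minimizers stay $\epsilon$-far from $\M{U}^\star(\V{\zeta})$, establishes boundedness of the minimizer sequence via an auxiliary function $g(\M{U}) = \sup_{\Vtilde{\zeta}}\, f(\M{U},\Vtilde{\zeta})$ over a parameter ball together with the reverse triangle inequality, passes to a convergent subsequence, and uses joint continuity of the objective to show the subsequential limit must itself be the (unique) minimizer at $\V{\zeta}$ -- a contradiction. You instead exploit the uniform strong-convexity modulus quantitatively: summing the two strong-convexity inequalities at the two minimizers yields the bound $\lVert \M{U}_1^\star - \M{U}_2^\star \rVert_{\text{F}}^2 \leq 2\sup_{\M{U}\in B}\lvert F_1(\M{U}) - F_2(\M{U})\rvert$, and you then verify this supremum vanishes as the parameters converge. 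Both proofs need a compactness step, but yours is obtained more cheaply (evaluate the objective at $\M{U} = \M{X}$, giving $\tfrac{1}{2}\lVert \M{X}-\M{U}^\star\rVert_{\text{F}}^2 \leq \gamma J(\M{X})$) than the paper's sup-function construction. What your approach buys is an explicit modulus of continuity -- indeed local H\"older-$\tfrac12$ dependence on the parameters and, as you note in passing, $1$-Lipschitz dependence on $\M{X}$ via nonexpansiveness of $\prox_{\gamma J}$ -- which is strictly more information than the paper's existence-of-limit argument; the paper's route, on the other hand, requires nothing beyond continuity of $f$ in $(\M{U},\V{\zeta})$ and would survive if strong convexity were weakened to strict convexity plus coercivity. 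The one piece of bookkeeping you flag, the uniform Lipschitz estimate for the penalty difference over $B$, is indeed routine: the difference is $\sum_{i<j}(\gamma_1 w_{1,ij}-\gamma_2 w_{2,ij})\lVert \M{U}_{\cdot i}-\M{U}_{\cdot j}\rVert_2$ plus the analogous row sum, and each factor $\lVert \M{U}_{\cdot i}-\M{U}_{\cdot j}\rVert_2$ is bounded on $B$, so the supremum is controlled by $\sum_{i<j}\lvert\gamma_1 w_{1,ij}-\gamma_2 w_{2,ij}\rvert$ up to a constant. No gaps.
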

Continuity of $\M{U}^\star$ in the regularization parameter $\gamma$ suggests employing warm-starts, or using the solution at one $\gamma$ as the starting point for a problem with a slightly larger $\gamma$, because small changes in $\gamma$ will result in small changes in $\M{U}^\star$. Continuity of $\M{U}^\star$ in $\M{X}$ tells us that the solution varies smoothly with perturbations in the data, our main stability result. Recall that the $i$th and $j$th columns (rows) are assigned to the same cluster if $\M{U}^\star_{\cdot i} = \M{U}^\star_{\cdot j}$ ($\M{U}^\star_{i\cdot} = \M{U}^\star_{j\cdot}$). Since $\M{U}^\star$ varies smoothly in the data, so do the differences $\M{U}^\star_{\cdot i} - \M{U}^\star_{\cdot j}$. While assigning entries to biclusters is an inherently discrete operation, we will see in \Sec{stability} that this continuity result manifests itself in assignments that are robust to perturbations in the data. Continuity of $\M{U}^\star$ in the weights also give us stability guarantees and assurances that COBRA's biclustering results should be locally insensitive to changes in the weights.

The parameter $\gamma$ tunes the complexity of COBRA's solution. We next show that COBRA's most complicated (small $\gamma$) and simplest (large $\gamma$) solutions coincide with the clustered dendrogram's most complicated and simplest solutions.
Clearly when $\gamma = 0$, the solution is just the data, namely $\M{U}^\star = \M{X}$. To get some intuition on how the solution behaves as $\gamma$ increases, observe that the penalty $J(\M{U})$ is a semi-norm. Moreover, under suitable conditions on the weights, spelled out in \As{connectedness} below,  $J(\M{U})$ is zero if and only if $\M{U}$ is a constant matrix. 
\begin{assumption}
\label{as:connectedness}
  For any pair of columns (rows), indexed by $i$ and $j$ with $i < j$, there exists a sequence of indices $i \rightarrow k \rightarrow \cdots \rightarrow l \rightarrow j$ along which the weights, $w_{ik}, \ldots, w_{lj}$ $(\tilde{w}_{ik}, \ldots, \tilde{w}_{lj})$ are positive.
\end{assumption}
\begin{proposition}
\label{prop:zero}
Under \As{connectedness}, $J(\M{U}) = 0$ if and only if $\M{U} = c\V{1}\V{1}\Tra$ for some $c \in \Real$.
\end{proposition}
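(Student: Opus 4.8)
The plan is to prove the two implications separately, with the reverse implication carrying the real content. Throughout I would lean on one structural fact: since the weights $w_{ij}$ and $\tilde{w}_{ij}$ are nonnegative and norms are nonnegative, the penalty $J(\M{U}) = \Omega_{\M{W}}(\M{U}) + \Omega_{\Mtilde{W}}(\M{U}\Tra)$ is a sum of nonnegative terms $w_{ij}\lVert \M{U}_{\cdot i} - \M{U}_{\cdot j}\rVert_2$ and $\tilde{w}_{ij}\lVert \M{U}_{i\cdot} - \M{U}_{j\cdot}\rVert_2$.

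For the forward (``if'') direction I would simply substitute $\M{U} = c\V{1}\V{1}\Tra$. Every column of this matrix equals $c\V{1}$ and every row equals $c\V{1}\Tra$, so each column difference $\M{U}_{\cdot i} - \M{U}_{\cdot j}$ and each row difference $\M{U}_{i\cdot} - \M{U}_{j\cdot}$ vanishes. Hence $\Omega_{\M{W}}(\M{U}) = 0$ and $\Omega_{\Mtilde{W}}(\M{U}\Tra) = 0$, giving $J(\M{U}) = 0$.

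For the reverse (``only if'') direction, suppose $J(\M{U}) = 0$. Because $J$ is a sum of nonnegative terms, each term must individually vanish: whenever $w_{ij} > 0$ we get $\M{U}_{\cdot i} = \M{U}_{\cdot j}$, and whenever $\tilde{w}_{ij} > 0$ we get $\M{U}_{i\cdot} = \M{U}_{j\cdot}$. These equalities hold only along edges of positive weight, so the task is to upgrade them to equality across \emph{all} columns and all rows. Here I would invoke \As{connectedness}: for any pair $i,j$ there is a path $i \rightarrow k \rightarrow \cdots \rightarrow l \rightarrow j$ of positive weights, and chaining $\M{U}_{\cdot i} = \M{U}_{\cdot k} = \cdots = \M{U}_{\cdot l} = \M{U}_{\cdot j}$ along it yields $\M{U}_{\cdot i} = \M{U}_{\cdot j}$. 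Thus all columns coincide, so $\M{U} = \V{v}\V{1}\Tra$ for a common column $\V{v} \in \Real^p$. The identical argument applied to the rows forces all rows to coincide, so $v_i\V{1}\Tra = v_j\V{1}\Tra$ for all $i,j$, whence $\V{v} = c\V{1}$ for a scalar $c$, and $\M{U} = c\V{1}\V{1}\Tra$.

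The only real subtlety, and the step I would write most carefully, is the path-chaining argument: transitivity of equality along a connected path is elementary, but one must observe that the path supplied by \As{connectedness} consists entirely of strictly positive-weight edges, so that the vanishing of each associated penalty term genuinely delivers the consecutive column (row) equalities. I would also note that the column argument alone yields only $\M{U} = \V{v}\V{1}\Tra$, so both the row and column connectedness hypotheses are needed to conclude that $\V{v}$ is constant. Everything else is routine.
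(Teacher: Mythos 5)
Your proposal is correct and follows essentially the same route as the paper's own proof: both directions rest on the per-edge vanishing of the nonnegative penalty terms together with \As{connectedness} to propagate column (row) equality along positive-weight paths. The only cosmetic difference is that you chain equalities directly by transitivity where the paper invokes the triangle inequality and the minimum weight along the path; these are equivalent, and your handling of the final step (columns equal gives $\M{U}=\V{v}\V{1}\Tra$, then row connectedness forces $\V{v}=c\V{1}$) matches the paper's conclusion that $\M{U}$ must be a constant matrix.
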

This result suggests that as $\gamma$ increases the solution to the biclustering problem converges to the solution of the following constrained optimization problem:
\begin{eqnarray*}
\underset{\M{U}}{\min}\; \frac{1}{2} \lVert \M{X} - \M{U} \rVert_{\text{F}}^2 \quad \text{subject to $\M{U} = c\V{1}\V{1}\Tra$ for some $c \in \Real$},
\end{eqnarray*}
the solution to which is just the global mean $\Mbar{X}$, whose entries are all identically the average value of $\M{X}$ over all its entries. The next result formalizes our intuition that the centroids eventually coalesce to $\Mbar{X}$ as $\gamma$ becomes sufficiently large.
\begin{proposition}
\label{prop:coalesce}
Under \As{connectedness}, $F_{\gamma}(\M{U})$ is minimized by the grand mean $\Mbar{\M{x}}$ for $\gamma$ sufficiently large.
\end{proposition}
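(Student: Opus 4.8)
The plan is to verify the first-order optimality condition directly at the candidate point $\M{U}=\Mbar{X}$. Since $F_\gamma$ is convex and finite everywhere, $\Mbar{X}$ is the (unique, by \Prop{existence_uniqueness}) global minimizer if and only if $\V{0}\in\partial F_\gamma(\Mbar{X})$. The quadratic data-fit term is smooth with gradient $\M{U}-\M{X}$, so this condition reads $\M{X}-\Mbar{X}\in\gamma\,\partial J(\Mbar{X})$. The entire proof thus reduces to showing that the residual $\M{X}-\Mbar{X}$ lands in the scaled subdifferential once $\gamma$ is large enough.

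First I would compute $\partial J(\Mbar{X})$. Because $\Mbar{X}$ is a constant matrix, every column difference $\M{U}_{\cdot i}-\M{U}_{\cdot j}$ and every row difference vanishes there, so each summand of $J$ is evaluated at the nonsmooth point of the Euclidean norm, whose subdifferential is the closed unit ball. Using the sum and chain rules for the difference maps, a generic element of $\partial\Omega_{\M{w}}(\Mbar{X})$ is the column-wise ``divergence'' $\M{G}^{\mathrm{c}}$ of a collection of edge vectors $\{\V{z}_{ij}\}_{i<j}\subset\Real^p$ with $\lVert\V{z}_{ij}\rVert_2\le w_{ij}$, where the $k$th column of $\M{G}^{\mathrm{c}}$ collects the signed contributions of the edges incident to vertex $k$; symmetrically, $\partial\Omega_{\Mtilde{W}}(\Mbar{X}\Tra)$ contributes a row-wise divergence $\M{G}^{\mathrm{r}}$ of row edge vectors bounded by $\tilde{w}_{ij}$. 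Two structural facts drop out: summing the columns of $\M{G}^{\mathrm{c}}$ gives $\M{G}^{\mathrm{c}}\V{1}=\V{0}$, while summing the rows of $\M{G}^{\mathrm{r}}$ gives $\V{1}\Tra\M{G}^{\mathrm{r}}=\V{0}$, so every subgradient $\M{G}=\M{G}^{\mathrm{c}}+\M{G}^{\mathrm{r}}$ has entries summing to zero.

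The key step is a matching surjectivity statement supplied by \As{connectedness}. For a fixed feature coordinate, the map sending scalar edge flows to their vertex divergences is exactly the transposed oriented incidence operator of the weight graph; connectedness makes that graph connected, so its range is precisely the set of zero-sum vertex vectors. Consequently the attainable column divergences fill out $\{\M{A}:\M{A}\V{1}=\V{0}\}$ and the row divergences fill out $\{\M{B}:\V{1}\Tra\M{B}=\V{0}\}$, the magnitude constraints limiting only the size of the edge vectors, not which matrices are reachable. A short orthogonality computation shows these two subspaces together span all matrices whose entries sum to zero, their common orthogonal complement being the line of constant matrices. Since $\M{X}-\Mbar{X}$ has entries summing to zero by the definition of the grand mean, I can therefore write $\M{X}-\Mbar{X}=\M{A}+\M{B}$, realized by \emph{some} fixed edge vectors $\{\V{z}_{ij}^0\}$ and $\{\tilde{\V{z}}_{ij}^0\}$.

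Finally I would invoke scaling. The condition $\M{X}-\Mbar{X}\in\gamma\,\partial J(\Mbar{X})$ asks for $(\M{X}-\Mbar{X})/\gamma$ to be realized with edge vectors obeying $\lVert\V{z}_{ij}\rVert_2\le w_{ij}$ and the analogous row bound. Taking $\V{z}_{ij}=\V{z}_{ij}^0/\gamma$ works as soon as $\gamma\ge\lVert\V{z}_{ij}^0\rVert_2/w_{ij}$ over the finitely many positive-weight edges actually used, and likewise for the rows, so any $\gamma$ beyond this finite threshold makes $\Mbar{X}$ satisfy the optimality condition. I expect the realizability step to be the main obstacle: a pure limiting argument (letting $\gamma\to\infty$ and invoking \Prop{zero}) only shows $\M{U}^\star(\gamma)\to\Mbar{X}$, not equality at a finite $\gamma$, so securing the \emph{exact, finite-$\gamma$} coalescence is precisely what the incidence-operator range computation, and hence \As{connectedness}, is needed for.
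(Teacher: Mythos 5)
Your argument is correct, and at bottom it certifies optimality with the very same object as the paper, just viewed from the primal rather than the dual side. The paper's proof vectorizes the problem, derives the Lagrangian dual $\max_{\V{v}\in C_\gamma}\,-\tfrac12\lVert\M{A}\Tra\V{v}\rVert_2^2+\langle\V{v},\M{A}\V{x}\rangle$, observes that for large $\gamma$ the unconstrained maximizer $\V{v}^\star=(\M{A}\M{A}\Tra)^\dagger\M{A}\V{x}$ is feasible, and then shows the dual value equals $F_\gamma(\Mbar{X})$ because $\mathrm{Ker}(\M{A})=\mathrm{span}(\V{1})$; your proof instead verifies $\M{X}-\Mbar{X}\in\gamma\,\partial J(\Mbar{X})$ by exhibiting edge vectors whose divergence realizes the residual. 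These are the same certificate: your $\{\V{z}_{ij}^0\}$ is a preimage of $\M{X}-\Mbar{X}$ under $\M{A}\Tra$, the paper's $\V{v}^\star$ is the minimum-norm such preimage, and the feasibility condition $\V{v}^\star\in C_\gamma$ is exactly your bound $\gamma\ge\lVert\V{z}_{ij}^0\rVert_2/w_{ij}$. Both hinge on the identical graph-theoretic lemma that connectedness (\As{connectedness}) forces the incidence operator's kernel to be the constants, equivalently that its adjoint's range is the zero-sum vectors; your dimension count showing $\{\M{A}:\M{A}\V{1}=\V{0}\}+\{\M{B}:\V{1}\Tra\M{B}=\V{0}\}$ exhausts the zero-sum matrices replaces the paper's Kronecker-product kernel computation. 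What your route buys is economy: no dual derivation, no pseudo-inverse, and no appeal to weak duality --- just the sum and chain rules for subdifferentials of finite convex functions (which hold with equality here, so no qualification issues arise). What the paper's route buys is an explicit, computable candidate for the threshold $\gamma_{\max}$, namely the smallest $\gamma$ with $\V{v}^\star\in C_\gamma$, which is useful in practice for capping the solution path. Your closing remark is also on point: \Prop{zero} and a limiting argument give only convergence to $\Mbar{X}$, and the finite-$\gamma$ statement genuinely requires the surjectivity-onto-zero-sum-vectors step in one form or another.
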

Thus, as $\gamma$ increases from 0, the centroids matrix $\M{U}^\star$ traces a continuous solution path that starts from $np$ biclusters, consisting of $\ME{U}{ij} = \ME{x}{ij}$, to a single bicluster, where $\ME{U}{ij} = (1/np)\sum_{i'j'} \ME{X}{i'j'}$ for all $i,j$.

%% ----------------------------------------------------------------------
%% Estimation Algorithm
%% ----------------------------------------------------------------------
\section{Estimation of Biclusters with COBRA}
\label{sec:algorithm}

Having characterized our estimator of the checkerboard means as the minimizer to \Eqn{biclust_objective_function}, we now turn to the task of computing it. From here on, we fix the data $\M{X}$ and the weights $\M{W}$ and $\Mtilde{W}$ and consider the biclustering solution as
a function of the parameter $\gamma$, denoting the solution as $\M{U}_\gamma$. 
The penalty term $J(\M{U})$ in \Eqn{biclust_objective_function} makes minimization challenging since it is non-smooth and not separable over any block partitioning of $\M{U}$. Coordinate descent \citep{FriHasH2007,WuLan2008} is an effective solution when the non-smooth penalty term is separable over some block partitioning of the variables, which is unfortunately not the case for \Eqn{biclust_objective_function}. Another popular iterative method for minimizing non-smooth convex functions is the alternating direction method of multipliers (ADMM) \citep{BoyParChu2011}. While an ADMM algorithm is feasible, we take a more direct approach with the Dykstra-like proximal algorithm (DLPA) proposed by \cite{BauCom2008} because it yields a simple meta-algorithm that can take advantage of fast solvers for the convex clustering problem.

DLPA generalizes a classic algorithm for fitting restricted least squares regression problems \citep{Dyk1983} and solves minimization problems of the form
\begin{eqnarray}
\label{eq:dlpa}
\underset{\M{U}}{\min}\; \frac{1}{2} \lVert \M{X} - \M{U} \rVert_{\text{F}}^2 + f(\M{U}) + g(\M{U}),
\end{eqnarray}
where $f$ and $g$ are lower-semicontinuous, convex functions. The biclustering problem is clearly an instance of \Eqn{dlpa}. Setting $f = \gamma \Omega_{\M{W}}$ and $g = \gamma\Omega_{\Mtilde{W}}$ in \Eqn{dlpa} gives us the pseudocode for COBRA shown in \Alg{COBRA}. The operation $\prox_{\gamma \Omega_{\M{W}}}(\M{Z})$ is the proximal mapping of the function $\gamma \Omega_{\M{W}}$ and is defined to be 
\begin{eqnarray*}
\prox_{\gamma \Omega_{\M{W}}}(\M{Z}) & = & \underset{\M{V}}{\arg\min}\;\left[\frac{1}{2} \lVert \M{Z} - \M{v} \rVert_{\text{F}}^2 + \gamma \Omega_{\M{W}}(\M{V})\right].
\end{eqnarray*}
Each proximal mapping in \Alg{COBRA} corresponds to solving a convex clustering problem.

The COBRA is very intuitive. The matrices $\M{Y}_m\Tra$ and $\M{U}_m$ are estimates of the means matrix at the $m$th iteration. The matrices $\M{P}_m$ and $\M{Q}_m$ encode discrepancies between these two estimates.  We alternate between clustering the rows of the matrix $\M{U}_m + \M{P}_m$ and the columns of the matrix $\M{Y}_m + \M{Q}_m$.  The following result guarantees that $\M{Y}_m\Tra$ and $\M{U}_m$ converge to the desired solution.
\begin{proposition}
\label{prop:COBRA} The COBRA iterates $\M{U}_m$ and $\M{Y}_m\Tra$ in \Alg{COBRA} converge to the unique global minimizer of the convex biclustering objective \Eqn{biclust_objective_function}.
\end{proposition}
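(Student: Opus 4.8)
The plan is to recognize that COBRA is nothing more than the Dykstra-like proximal algorithm (DLPA) of \cite{BauCom2008} applied to the biclustering objective, and then to invoke that algorithm's convergence guarantee. To do this I would set $f = \gamma\Omega_{\M{W}}$ and $g(\M{U}) = \gamma\Omega_{\Mtilde{W}}(\M{U}\Tra)$, so that \Eqn{biclust_objective_function} becomes exactly the instance of \Eqn{dlpa} that DLPA is designed to solve. The convergence theorem in \cite{BauCom2008} states that, for any two proper, lower-semicontinuous, convex functions $f$ and $g$ whose effective domains have nonempty intersection, the DLPA iterates converge to the unique minimizer of $\tfrac{1}{2}\lVert\M{X}-\M{U}\rVert_{\text{F}}^2 + f(\M{U}) + g(\M{U})$. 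The bulk of the proof is therefore verifying that our choice of $f$ and $g$ meets these hypotheses and that the abstract iteration specializes to the updates in \Alg{COBRA}.

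The hypothesis check is routine. Both $\Omega_{\M{W}}$ and $\Omega_{\Mtilde{W}}$ are finite sums of weighted Euclidean norms of linear images of $\M{U}$, hence they are finite-valued, convex, and continuous on all of $\Real^{p\times n}$; consequently $f$ and $g$ are proper, lower-semicontinuous, and convex, and their effective domains are the entire space, so $\dom f \cap \dom g = \Real^{p\times n} \neq \emptyset$. Existence and uniqueness of the minimizer to which DLPA converges is already furnished by \Prop{existence_uniqueness}. It then remains only to confirm that this limit is the biclustering solution, which is immediate once $f$ and $g$ are identified as above, since $f+g = \gamma J$ and the quadratic data-fit term is common to \Eqn{biclust_objective_function} and \Eqn{dlpa}.

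The one step requiring care --- and the main obstacle --- is the bookkeeping around the transpose in $g$ and its proximal map, since DLPA is stated for a single Hilbert space while COBRA alternates between operating on columns and on rows. I would equip $\Real^{p\times n}$ with the Frobenius inner product, making it a Hilbert space, and then compute $\prox_g$ explicitly. Using the change of variables $\M{V}\mapsto\M{V}\Tra$ together with the invariance of the Frobenius norm under transposition, one finds
\begin{eqnarray*}
\prox_{g}(\M{Z}) = \left[\prox_{\gamma\Omega_{\Mtilde{W}}}(\M{Z}\Tra)\right]\Tra,
\end{eqnarray*}
so that applying the proximal map of $g$ amounts to solving a convex clustering problem on the rows of $\M{Z}$, equivalently the columns of $\M{Z}\Tra$, exactly as \Alg{COBRA} does when it updates $\M{Y}_m\Tra$. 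Matching this identity, together with the analogous and trivial one for $\prox_f = \prox_{\gamma\Omega_{\M{W}}}$, against the DLPA template line by line confirms that the sequences $\M{U}_m$ and $\M{Y}_m\Tra$ generated by \Alg{COBRA} coincide with the two DLPA iterate sequences, both of which converge to the common minimizer. This completes the reduction and hence the proof.
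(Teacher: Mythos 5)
Your proposal is correct and follows essentially the same route as the paper's own proof: identify COBRA as the Dykstra-like proximal algorithm of \cite{BauCom2008} with $f=\gamma\Omega_{\M{W}}$ and $g(\M{U})=\gamma\Omega_{\Mtilde{W}}(\M{U}\Tra)$, verify that both are finite-valued continuous convex functions on all of $\Real^{p\times n}$, and invoke the DLPA convergence theorem (Proposition 5.3 of Combettes and Pesquet). Your explicit derivation of the identity $\prox_{g}(\M{Z})=\bigl[\prox_{\gamma\Omega_{\Mtilde{W}}}(\M{Z}\Tra)\bigr]\Tra$ is a welcome piece of bookkeeping that the paper leaves implicit.
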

\Prop{COBRA} not only ensures the algorithmic convergence of \Alg{COBRA}, but it also provides a natural stopping rule. We stop iterating once
$\lVert \M{U}_m - \M{Y}_m\Tra \rVert_{\text{F}}$ falls below some tolerance $\tau > 0$.
A proof of \Prop{COBRA}, as well as additional technical details and discussion on DLPA and COBRA, can be found in Web Appendix C.

\begin{algorithm}[t]
Set $\M{u}_0 = \M{x}, \M{p}_0 = \M{0}, \M{q}_{0} = \V{0}$ for $m=0, 1, \ldots$
\begin{algorithmic}[0]
  \caption{Convex BiclusteRing Algorithm (COBRA)}
  \label{alg:COBRA}
\Repeat
\State $\M{y}_{m} = \prox_{\gamma\Omega_{\Mtilde{W}}}(\M{u}_m\Tra + \M{p}_m\Tra)$
\Comment Convex Clustering of Rows
\State $\M{p}_{m+1} = \M{u}_m + \M{p}_m - \M{y}_m\Tra$
\State $\M{u}_{m+1} = \prox_{\gamma\Omega_{\M{W}}}(\M{y}_m\Tra + \M{q}_m\Tra)$
\Comment Convex Clustering of Columns
\State $\M{q}_{m+1} = \M{y}_m + \M{q}_m - \M{u}_{m+1}\Tra$
\Until{convergence}
\end{algorithmic}
\end{algorithm}

The advantage of using DLPA is that COBRA is agnostic to the actual algorithm used to solve the proximal mapping. This is advantageous since we cannot analytically compute $\prox_{\gamma\Omega_{\M{W}}}(\M{Z})$. In this paper we use the alternating minimization algorithm (AMA) introduced in \cite{ChiLan2015} to solve the convex clustering problem. The algorithm performs projected gradient ascent on the Lagrangian dual problem. Its main advantage is that it requires computational work and storage that is linear in the size of the data matrix $\M{X}$, when we use the sparse Gaussian kernel weights described in Web Appendix A. A second advantage is that hard clustering assignments are trivially obtained from variables employed in the splitting method. Nonetheless, the DLPA framework makes it trivial to swap in more efficient solvers that may become available in the future.

%% ----------------------------------------------------------------------
%% Choosing the regularization parameter
%% ----------------------------------------------------------------------
\section{Model Selection}
\label{sec:tuning_parameter}

Estimating the number of clusters or biclusters in a data set is a major challenge. With many existing biclustering methods, this is further exacerbated by the many tuning parameters that must be selected and the fact that biclustering assignments do not always change smoothly with the number of biclusters. For example, the sparse SVD method \citep{LeeSheHua2010} requires three tuning parameters: two parameters controlling the sparsity of the left and right singular vectors of the sparse SVD and one controlling its rank. Furthermore, selecting the number of biclusters and other tuning parameters can be a major computational burden for large data sets. For example, the sparse biclustering method \citep{TanWit2013} uses cross-validation to select the number of row and column partitions. This can be time consuming if a large range of possible number of row and column partitions are explored. In contrast, COBRA has one parameter $\gamma$, that controls both the number of biclusters and bicluster assignments; moreover, the number of biclusters and assignments varies smoothly with $\gamma$.

\subsection{Hold-Out Validation}
We present a simple but principled approach to selecting $\gamma$ in a data-driven manner by posing the model selection problem as another convex program.
We randomly select a hold-out set of elements in the data matrix and assess the quality of a model $\M{U}_\gamma$ on how well it predicts the hold-out set. This idea was first proposed by \cite{Wol1978} for model selection in principal component analysis and has been used more recently to select tuning parameters for matrix completion problems \citep{MazHasTib2010}. Denote these index pairs $\Theta \subset \{1, \ldots, p\} \times \{1, \ldots, n\}$, and let $\lvert \Theta \rvert$ denote the cardinality of the set $\Theta$. We may select a relatively small fraction of the elements, say 10\%, for validation, namely $\lvert \Theta \rvert \approx 0.1 \times np$. Denote the projection operator onto the set of indices $\Theta$ by $\mathcal{P}_{\Theta}(\M{X})$. The $ij$th entry of $\mathcal{P}_{\Theta}(\M{X})$ is $\ME{x}{ij}$ if $(i,j) \in \Theta$ and is zero otherwise.
We then solve the following convex optimization problem
\begin{eqnarray}
\label{eq:validation}
\underset{\M{U}}{\min}\; \tilde{F}_\gamma(\M{U}) & := &
\frac{1}{2} \lVert \mathcal{P}_{\Theta^c}(\M{x}) - \mathcal{P}_{\Theta^c}(\M{U}) \rVert_{\text{F}}^2 + \gamma J(\M{U})
\end{eqnarray}
for a sequence of $\gamma \in \mathcal{G} = \{\gamma_1=0, \ldots, \gamma_{\max}\}$. Recall that we denote the minimizer of $\tilde{F}_\gamma(\M{U})$ by $\M{U}_{\gamma}$.
We choose the $\gamma$ that minimizes the prediction error over the hold-out set $\Theta$, namely
$\gamma^\star = \underset{\gamma \in \mathcal{G}}{\arg\min}\; \lVert \mathcal{P}_\Theta(\M{X}) - \mathcal{P}_\Theta(\M{U}_\gamma) \rVert_\text{F}.$

%% ----------------------------------------------------------------------
%% MM algorithm
%% ----------------------------------------------------------------------
\subsection{Solving the Hold-Out Problem}
\label{sec:MM_algorithm}

The problem defined in (\ref{eq:validation}) can be seen as a convex matrix completion problem. \Alg{MM} summarizes a simple procedure for reducing the problem of minimizing $\tilde{F}_\gamma(\M{U})$ to solving a sequence of the complete biclustering problems \Eqn{biclust_objective_function}. The solution from the previous iteration is used to fill in the missing entries in the current iteration; COBRA is then applied to the complete data matrix. This approach is identical to the soft-impute approach of \cite{MazHasTib2010} for solving the matrix completion problem using a nuclear norm penalty instead of our fusion penalty $J(\M{U})$. The similarity is not a coincidence as both procedures are instances of a majorization-minimization (MM) algorithm \citep{LanHunYan2000} which apply the same majorization on the smooth quadratic term. We defer details on this connection to Web Appendix D. \Alg{MM} has the following convergence guarantees for the imputation algorithm.
\begin{proposition}
\label{prop:mm_algorithm}
The limit points of the sequence of iterates $\Mn{U}{m}$ of \Alg{MM} are solutions to \Eqn{validation}.
\end{proposition}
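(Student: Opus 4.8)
The plan is to cast \Alg{MM} as a majorization--minimization (MM) scheme for \Eqn{validation} and then run the standard MM convergence argument, borrowing the structural results of \Sec{solution} for the two problem-specific ingredients. First I would make the surrogate explicit: with $\widetilde{\M{X}}^{(m)} := \mathcal{P}_{\Theta^c}(\M{X}) + \mathcal{P}_{\Theta}(\Mn{U}{m})$ the completed data matrix (observed entries kept, held-out entries imputed by the current iterate), the identity
\[
\tfrac{1}{2}\lVert \widetilde{\M{X}}^{(m)} - \M{U}\rVert_{\text{F}}^2 + \gamma J(\M{U}) \;=\; \tilde{F}_\gamma(\M{U}) + \tfrac{1}{2}\lVert \mathcal{P}_{\Theta}(\Mn{U}{m} - \M{U})\rVert_{\text{F}}^2
\]
holds because the two quadratics agree on $\Theta^c$ while $\gamma J$ is common to both. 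The extra term is nonnegative and vanishes at $\M{U} = \Mn{U}{m}$, so $g(\M{U}\mid\Mn{U}{m}) := \tfrac{1}{2}\lVert \widetilde{\M{X}}^{(m)} - \M{U}\rVert_{\text{F}}^2 + \gamma J(\M{U})$ majorizes $\tilde{F}_\gamma$ and is tangent at $\Mn{U}{m}$. Minimizing $g(\cdot\mid\Mn{U}{m})$ is precisely the complete biclustering problem \Eqn{biclust_objective_function} with data $\widetilde{\M{X}}^{(m)}$, so by \Prop{existence_uniqueness} the update $\Mn{U}{m+1}$ is uniquely defined and is exactly what COBRA computes.

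Next I would extract the usual MM consequences. The majorization, the minimizing property of $\Mn{U}{m+1}$, and the tangency give the descent chain $\tilde{F}_\gamma(\Mn{U}{m+1}) \le g(\Mn{U}{m+1}\mid\Mn{U}{m}) \le g(\Mn{U}{m}\mid\Mn{U}{m}) = \tilde{F}_\gamma(\Mn{U}{m})$, so $\{\tilde{F}_\gamma(\Mn{U}{m})\}$ decreases and, being bounded below by zero, converges. Since $g(\cdot\mid\Mn{U}{m})$ is $1$-strongly convex and $\Mn{U}{m+1}$ is its minimizer, the strong-convexity inequality evaluated at $\Mn{U}{m}$ yields $\tfrac{1}{2}\lVert\Mn{U}{m+1}-\Mn{U}{m}\rVert_{\text{F}}^2 \le \tilde{F}_\gamma(\Mn{U}{m}) - \tilde{F}_\gamma(\Mn{U}{m+1})$; summing telescopes the right side to a finite bound and forces $\lVert\Mn{U}{m+1}-\Mn{U}{m}\rVert_{\text{F}} \to 0$. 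To be sure limit points exist, I would show the iterates are bounded by proving $\tilde{F}_\gamma$ coercive: writing a divergent sequence as $t_k\M{V}_k$ with $\lVert\M{V}_k\rVert_{\text{F}}=1$ and $\M{V}_k\to\M{V}$, homogeneity of the seminorm $J$ sends $\gamma t_k J(\M{V}_k)\to\infty$ unless $J(\M{V})=0$; in that case \Prop{zero} forces $\M{V}$ to be a nonzero constant matrix, whose restriction to $\Theta^c$ is nonzero (at least one entry is observed), so the data term grows quadratically. Either way $\tilde{F}_\gamma(t_k\M{V}_k)\to\infty$, so the decreasing objective confines the iterates to a compact sublevel set.

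Finally I would identify each limit point with a minimizer. Let $\Mn{U}{m_k}\to\bar{\M{U}}$. The MM map $\M{U}\mapsto \prox_{\gamma J}\!\big(\mathcal{P}_{\Theta^c}(\M{X}) + \mathcal{P}_{\Theta}(\M{U})\big)$ is continuous by \Prop{solution_path_continuity}, and together with $\lVert\Mn{U}{m+1}-\Mn{U}{m}\rVert_{\text{F}}\to 0$ this makes $\bar{\M{U}}$ a fixed point of the map, i.e.\ $0\in\partial g(\bar{\M{U}}\mid\bar{\M{U}})$. A short computation shows the gradient of the smooth part of $g(\cdot\mid\bar{\M{U}})$ at $\bar{\M{U}}$ equals $\mathcal{P}_{\Theta^c}(\bar{\M{U}}-\M{X})$, which is exactly the gradient of the smooth part of $\tilde{F}_\gamma$ at $\bar{\M{U}}$; since the two objectives share the nonsmooth term $\gamma J$, their subdifferentials at $\bar{\M{U}}$ coincide, giving $0\in\partial\tilde{F}_\gamma(\bar{\M{U}})$. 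Convexity of $\tilde{F}_\gamma$ promotes this to global optimality, so $\bar{\M{U}}$ solves \Eqn{validation}. The routine part is the majorization identity and the descent bookkeeping; the main obstacle is supplying the two facts that make the abstract MM argument bite for an objective whose data term only sees $\Theta^c$ -- coercivity (rescued by the connectedness characterization \Prop{zero}) to guarantee limit points, and continuity of the biclustering solution map (\Prop{solution_path_continuity}) to pin limit points down as fixed points.
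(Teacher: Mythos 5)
Your proof is correct and rests on exactly the same majorization as the paper's: completing the held-out entries with the current iterate and observing that the resulting complete-data objective dominates $\tilde{F}_\gamma$ with tangency at $\Mn{U}{m}$, so that each COBRA call is an MM step. The paper then finishes by invoking Meyer's monotone convergence theorem to conclude that limit points are fixed points of the MM map $\psi$, and, like you, identifies fixed points with minimizers by matching the subgradient optimality conditions of the surrogate and of $\tilde{F}_\gamma$ (both arguments use continuity of $\psi$, which you correctly source from \Prop{solution_path_continuity}). Your two departures are in the bookkeeping: (i) you replace Meyer's theorem with an explicit sufficient-decrease bound from the $1$-strong convexity of the surrogate, giving $\lVert\Mn{U}{m+1}-\Mn{U}{m}\rVert_{\text{F}}\to 0$ and hence the fixed-point property of limit points directly --- this is more self-contained and arguably tighter than the paper's appeal to an external theorem; (ii) for existence of limit points you prove coercivity of $\tilde{F}_\gamma$ itself, which requires \As{connectedness} (via \Prop{zero}) and $\Theta^c\neq\emptyset$, whereas the paper gets compactness for free from the surrogate $g(\cdot\mid\Mn{U}{0})$, whose full quadratic term is coercive without any condition on the weights. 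Your route therefore quietly adds a hypothesis not stated in the proposition (though it is one the paper assumes throughout for its other results, and your sublevel-set containment argument is actually cleaner than the paper's, whose set $S$ is not obviously invariant under the iteration). Neither difference affects correctness.
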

Thus, we have turned the model selection problem of selecting both the number of biclusters and bicluster assignments into a principled convex program with strong convergence guarantees. %Notice also that 

\begin{algorithm}[t]
\begin{algorithmic}[1]
  \caption{COBRA with missing data}
  \label{alg:MM}
\State Initialize $\Mn{U}{0}$.
\Repeat
\State $\M{M} \gets \mathcal{P}_{\Theta^c}(\M{X}) + \mathcal{P}_{\Theta}(\Mn{U}{m})$
\State $\Mn{U}{m+1} \gets \text{COBRA}(\M{M})$
\Until{convergence}
\end{algorithmic}
\end{algorithm}

%% ----------------------------------------------------------------------
%% Numerical Comparisons
%% ----------------------------------------------------------------------
\section{Simulation Studies}
\label{sec:comparison}
We compare COBRA and two other biclustering methods that also assume an underlying checkerboard mean structure. The first is the clustered dendrogram; hard biclustering assignments are made for the clustered dendrogram using the widely used dynamic tree cutting algorithm \citep{LanZha2008} implemented in the package {\tt dynamicTreeCut}. The second is the sparse biclustering method \citep{TanWit2013} implemented in the package {\tt sparseBC} All parameters were selected according to methods in the R packages.

Assessing the quality of a clustering is almost as hard as the clustering problem itself, as evidenced by a plethora of quantitative measures for comparing how similar two clusterings are. In this paper, we use the following three measures: the Rand index (RI), the adjusted Rand index (ARI), and the variation of information (VI).  We included the RI \citep{Ran1971}, since it is one of the most widely used criteria for comparing partitions; it maps a pair of partitions to a number between 0 and 1, where 1 indicates perfect agreement between two partitions.  Despite its popularity, the RI has some limitations (See Web Appendix E). Consequently, we also use the ARI \citep{HubAra1985}, which was engineered to address deficiencies in the RI. Like the RI, the ARI takes a maximum value of 1 if the clustering solution is identical to the true structure and takes a value close to 0 if the clustering result is obtained from random partitioning. Finally, we compared clustering results using the VI \citep{Meila2007}. Unlike the RI and ARI, the VI is a metric. Consequently, under the VI we can speak rigorously about a neighborhood around a given clustering. In fact, the nearest neighbors of a clustering $\mathcal{C}$ under the VI metric are clusterings obtained by splitting or merging small clusters in $\mathcal{C}$. While not as popular as the RI and ARI, the VI is perhaps the most appropriate for assessing two hierarchical clusterings. As a metric, the VI takes a minimum value of 0 when there is perfect agreement between two partitions. Definitions of these criteria are given in Web Appendix E.

To simulate data with a checkerboard partition mean pattern, we consider the partition of the data matrix as the set of biclusters induced by taking the cross-product of the row and column groups. For all experiments, we used a single fixed weight assignment rule (See Web Appendix A), therefore COBRA selects a single tuning parameter $\gamma$ by validation. We perform computations in serial on a multi-core computer with 24 3.3 GHz Intel Xeon processors and 189 GB of RAM.  Note that run times may vary depending on input parameters chosen. For example, COBRA estimates can be computed to high accuracy at greater computation, and sparse biclustering can explore a wider range of candidate row and column clusters at greater computation. To be fair, we did not cherry pick these parameters but picked some reasonable values and used them throughout our numerical experiments. For example, in sparse biclustering when there are 8 true row clusters and 8 true column clusters, we set the range of row and column clusters to be 1 to 12.

Finally, we also considered two variants of COBRA that employ standard refinements on the Lasso: the adaptive Lasso \citep{Zou2006} and the thresholded Lasso \citep{Meinshausen2009}. These refinements address the well known issue that the Lasso tends to select too many variables. Thus, we anticipate that COBRA estimates may identify too many biclusters. Consequently, we also compared the performance of an adaptive COBRA and thresholded COBRA in our study. Details on these refinements are in Web Appendix G.

We compare COBRA, the clustered dendrogram, and the sparse biclustering (spBC) algorithm of \cite{TanWit2013} on their abilities to recover a checkerboard pattern. 
Again for the clustered dendrogram, row and column assignments are made with the dynamic tree cutting (DCT) method \citep{LanZha2008}. We simulate a $200\times200$ data matrix with a checkerboard bicluster structure, where $\ME{X}{ij} \sim$ iid~$N(\mu_{rc},\sigma^2)$ and $\mu_{rc}$ took on one of 25 equally spaced values between -6 and 6, namely $\mu_{rc} \sim$ Uniform$\{-6,-5.5,\ldots,5.5,6\}$. We consider a high signal-to-noise ratio (SNR) situation where the minimum difference among bicluster means ($0.5$) is comparable to the noise ($\sigma = 1.5$) and a low SNR one where the minimum difference is dominated by the noise ($\sigma=3.0$). 
To assess the performance as the number of column and row clusters are varied, we generated data using 16, 32, and 64 biclusters, corresponding to 2, 4, and 8 row groups and 8 column groups, respectively.
Since typical clusters will not be equal in size, rows and columns are assigned to each of the groups randomly according to a non-uniform distribution. The probability that a row is assigned to the $i$th group is inversely proportional to $i$. Columns are assigned analogously to groups. 

We computed the RI, ARI, and VI between the true biclusters and those obtained by the COBRA variants, DCT, and spBC. \Tab{gaussian_checkerboard} reports the average RI, ARI, and VI over 50 replicates as well as the average number of recovered biclusters $\hat{N}_b$ and run times; $N_b$ denotes the true number of biclusters. In the high SNR scenario, all methods do well across all measures. In the low SNR scenario, the COBRA variants often perform nearly as well or better than the other two methods. While DCT is significantly faster than all other methods, it also performs the worst at recovering the true biclusters in the low SNR scenario. The run times for COBRA indicate that it is computationally competitive compared to alternative biclustering solutions.

\begin{table}[th]
\begin{tabular}{ l  c c c c c c c }
& $N_b$ & $\sigma$ & COBRA & COBRA (A) & COBRA (T) & DCT & spBC \\ \hline
RI & 16 & 1.5 & 0.993 & {\bf 0.994} & 0.993 & 0.952 &  0.969 \\  
& 32 & 1.5 &  {\bf 0.999}  & {\bf 0.999}  &  {\bf 0.999}  & 0.993  & 0.997 \\
& 64 & 1.5 &  {\bf 0.999}  & {\bf 0.999}  &  {\bf 0.999}  &  {\bf 0.999}  &  {\bf 0.999} \\
& 16 & 3.0 & 0.971  &  {\bf 0.982}  & 0.959  & 0.944  & 0.966 \\  
& 32 & 3.0 &  {\bf 0.997}  &  {\bf 0.997}  & 0.996  & 0.990  &  {\bf 0.997} \\  
& 64 & 3.0 &  {\bf 0.999}  &  {\bf 0.999}  &  {\bf 0.999}  &  {\bf 0.999}  &  {\bf 0.999} \\ \hline
ARI & 16 & 1.5 &  {\bf 0.952} & 0.924 & 0.950  & 0.713 & 0.804 \\ 
& 32 & 1.5 & 0.995  & 0.978  &  {\bf 0.996}  & 0.916  & 0.965 \\
& 64 & 1.5 &  {\bf 0.999}  & 0.996  &  {\bf 0.999}  & 0.981  & 0.995 \\
& 16 & 3.0 & 0.798  &  {\bf 0.909}  & 0.741  & 0.449  & 0.784 \\
& 32 & 3.0 & 0.958  &  {\bf 0.982}  & 0.945  & 0.844  & 0.958 \\
& 64 & 3.0 & 0.992  & 0.992  & 0.985  & 0.979  &  {\bf 0.993} \\ \hline
VI & 16 & 1.5 &  {\bf 0.117}  & 0.132  &  {\bf 0.117}  & 0.713 & 0.180 \\
& 32 & 1.5 & 0.013  & 0.032  &  {\bf 0.009}  & 0.195  & 0.150 \\
& 64 & 1.5 &  {\bf 0.001}  & 0.022  &  {\bf 0.001}  & 0.043  & 0.271 \\
& 16 & 3.0 & 0.627  & 0.446  & 0.730  & 2.245  &  {\bf 0.260} \\
& 32 & 3.0 &  {\bf 0.097}  & 0.125  & 0.127  & 0.516  & 0.166 \\
& 64 & 3.0 &  {\bf 0.020}  & 0.046  & 0.034  & 0.061  & 0.181 \\ \hline
$\hat{N}_{b}$ & 16 & 1.5 & 18.8  & 25.3 & {\bf 16.8}  & 10.7  & 14.7 \\
& 32 & 1.5 & 32.9  & 38.1  & {\bf 32.1}  & 28.8  & 31.2 \\
& 64 & 1.5 & {\bf 64.3}  & 69.7  & {\bf 64.3}  & 62.5  & 60.3 \\
& 16 & 3.0 & 43.7  & 46.3  & 30.3  & 54.1  & {\bf 13.9} \\
& 32 & 3.0 & {\bf 30.4}  & 44.2  & 29.9  & 44.7  & 30.1 \\
& 64 & 3.0 & 65.1  & 77.1  & {\bf 63.2}  & 65.9  & 61.0 \\ \hline
time (sec) & 16 & 1.5 & 26.83  & 50.36 & 27.30 & {\bf 0.21}  & 558.27 \\
& 32 & 1.5 & 30.07  & 57.49  & 30.49  & {\bf 0.21}  & 401.96 \\
& 64 & 1.5 & 29.93  & 58.54  & 30.35  & {\bf 0.21}  & 288.91 \\
& 16 & 3.0 & 35.55  & 67.56  & 36.02  & {\bf 0.22}  & 564.85 \\
& 32 & 3.0 & 33.99  & 66.06  & 34.44  & {\bf 0.21}  & 432.71 \\
& 64 & 3.0 & 33.09  & 65.10  & 33.50  & {\bf 0.20}  & 284.24 \\ \hline \\
\end{tabular}
\caption{\label{tab:gaussian_checkerboard} 
Checkerboard mean structure with iid $N(0,\sigma^2)$ noise: low-noise ($\sigma = 1.5$) and high-noise ($\sigma = 3.0$). COBRA (A) is the adaptive COBRA and COBRA (T) is the thresholded COBRA. Details on these two variants are in Web Appendix G.}
\end{table}
%% Minor edit regarding example in Web Appendix F
In closing our discussion on simulations, we reiterate that COBRA is designed to recover checkerboard patterns. While checkerboard patterns feature prominently in a range of applications, we also acknowledge that they are not universal. Nonetheless, by examining both the estimated biclusters and bicluster means, COBRA can potentially identify the correct biclusters even when the checkerboard assumption is violated. We discuss how COBRA can accomplish this in more detail with a case study in Web Appendix F. 

\section{Application to Genomics}
\label{sec:stability}

To illustrate COBRA in action on a real example, we revisit the lung cancer data studied by \cite{LeeSheHua2010}.  We have selected the 500 genes with the greatest variance from the original collection of 12,625 genes.
Subjects belong to one of four subgroups; they are either normal subjects (Normal) or have been diagnosed with one of three types of cancers: pulmonary carcinoid tumors (Carcinoid), colon metastases (Colon), and small cell carcinoma (Small Cell). 

We first illustrate how the solution $\M{U}_\gamma$ evolves as $\gamma$ varies. \Fig{lung_cba_path} shows snap shots of the COBRA solution path of this data set, as the parameter $\gamma$ increases. The path captures the whole range of behavior between under-smoothed estimates of the mean structure (small $\gamma$), where each cell is assigned its own bicluster, to over-smoothed estimates (large $\gamma$), where all cells belong to a single bicluster.  In between these extremes, we see rows and columns ``fusing" together as $\gamma$ increases. Thus we have visual confirmation that minimizing \Eqn{biclust_objective_function} over a range of $\gamma$, yields a convex formulation of the clustered dendrogram.

While generating the entire solution path enables us to visualize the hierarchical relationships between biclusterings for different $\gamma$, we may ultimately require a hard biclustering assignment. By applying the validation procedure described in \Sec{tuning_parameter}, we arrive at the smoothed mean estimate shown previously in \Fig{lung500_cba}

We next conduct a simulation experiment based on the lung cancer data to test the stability and reproducibility of biclustering methods, critical qualities for real scientific analysis. To expedite computation in these experiments, we restrict our attention to the 150 genes with the highest variance. We first apply the biclustering methods on the original data to obtain baseline biclusterings. We then add iid $N(0,\sigma^2)$, noise where $\sigma = 0.5, 1.0, 1.5$ to create a perturbed data set on which to apply the same set of methods. We compute the RI, ARI, and VI between the baseline clustering and the one obtained on the perturbed data.  \Tab{stability} shows the average RI, ARI, and VI of 50 replicates as well as run times. For all values of $\sigma$, we see that the two COBRA variants tend to produce the most stable and reproducible results. The fact that the ARI scores are poor for plain COBRA but the RI and VI scores are good indicate that COBRA tends to shrink the same sets of rows and columns together even if it fails to fuse them together consistently. Again the run time results indicate that COBRA is computationally competitive. For completeness, results from an identical stability study for methods that do not assume a checkerboard pattern can be found in Web Appendix H.

\begin{figure}
        \centering
        \subfloat[$\gamma = 0$]{\label{fig:lung0}%
        \includegraphics[width=1.7in]{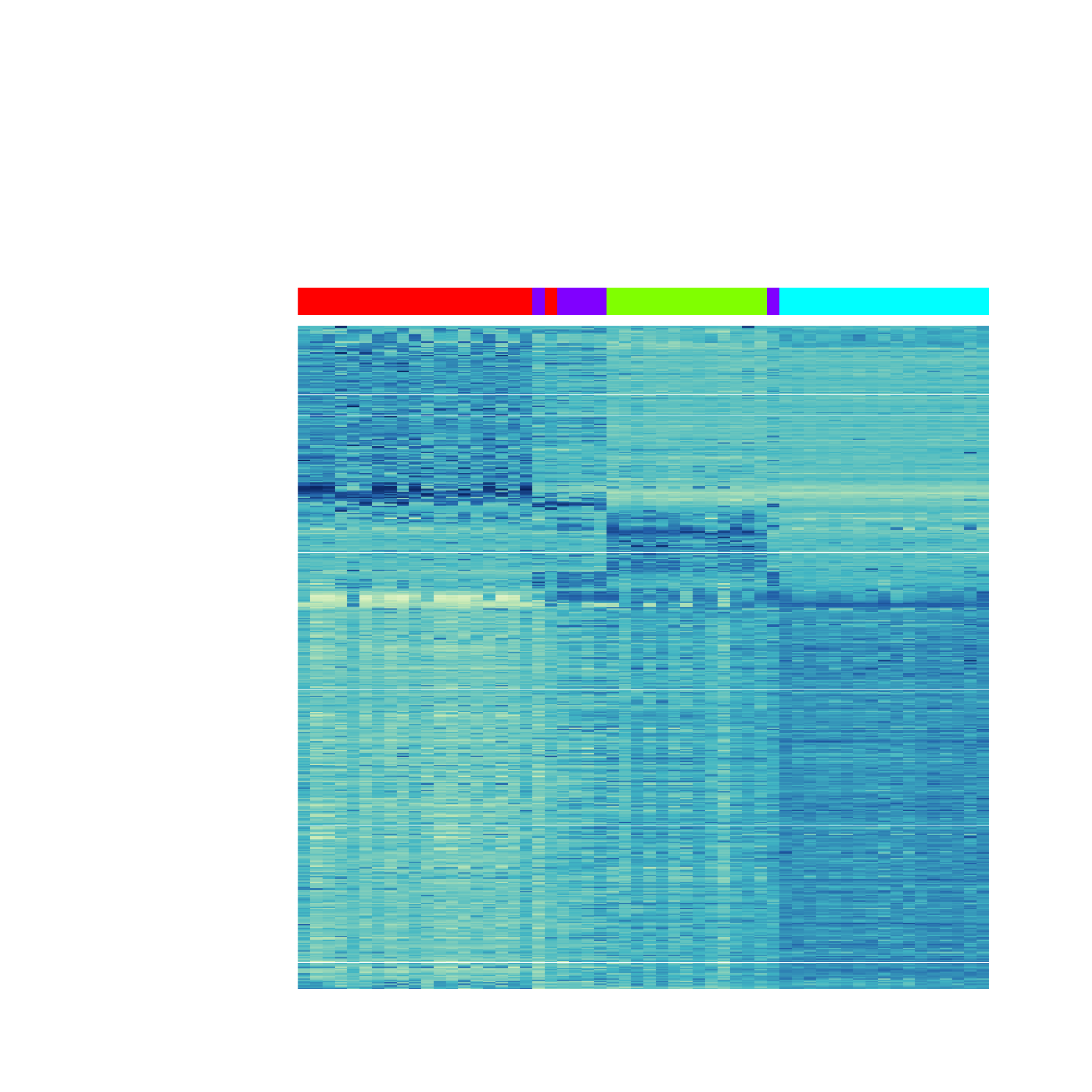}} \hspace{-1cm}
        \subfloat[$\gamma = 10^{1.45}$]{\label{fig:lung5}%
        \includegraphics[width=1.7in]{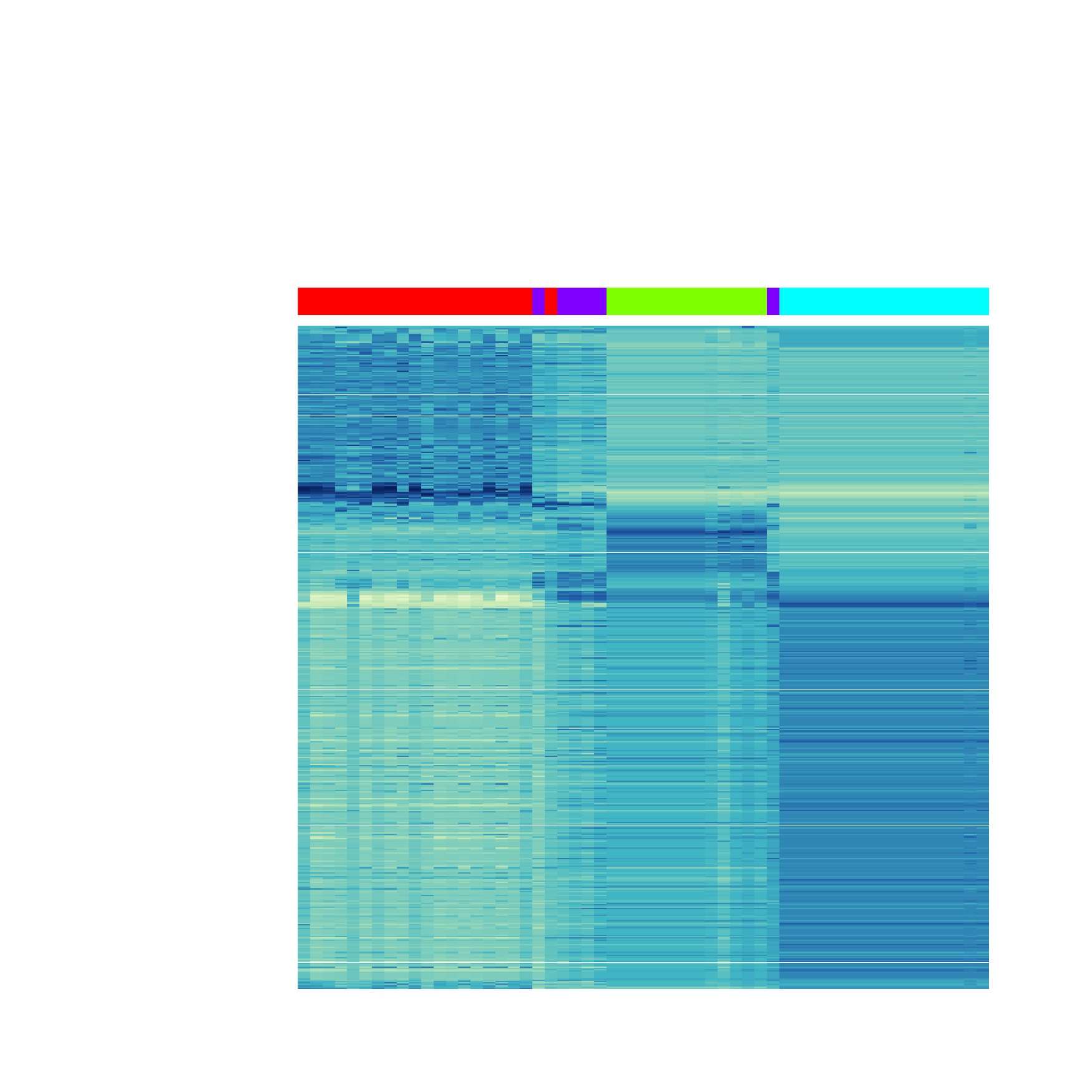}} \hspace{-1cm}  
        \subfloat[$\gamma = 10^{1.79}$]{\label{fig:lung7}%
        \includegraphics[width=1.7in]{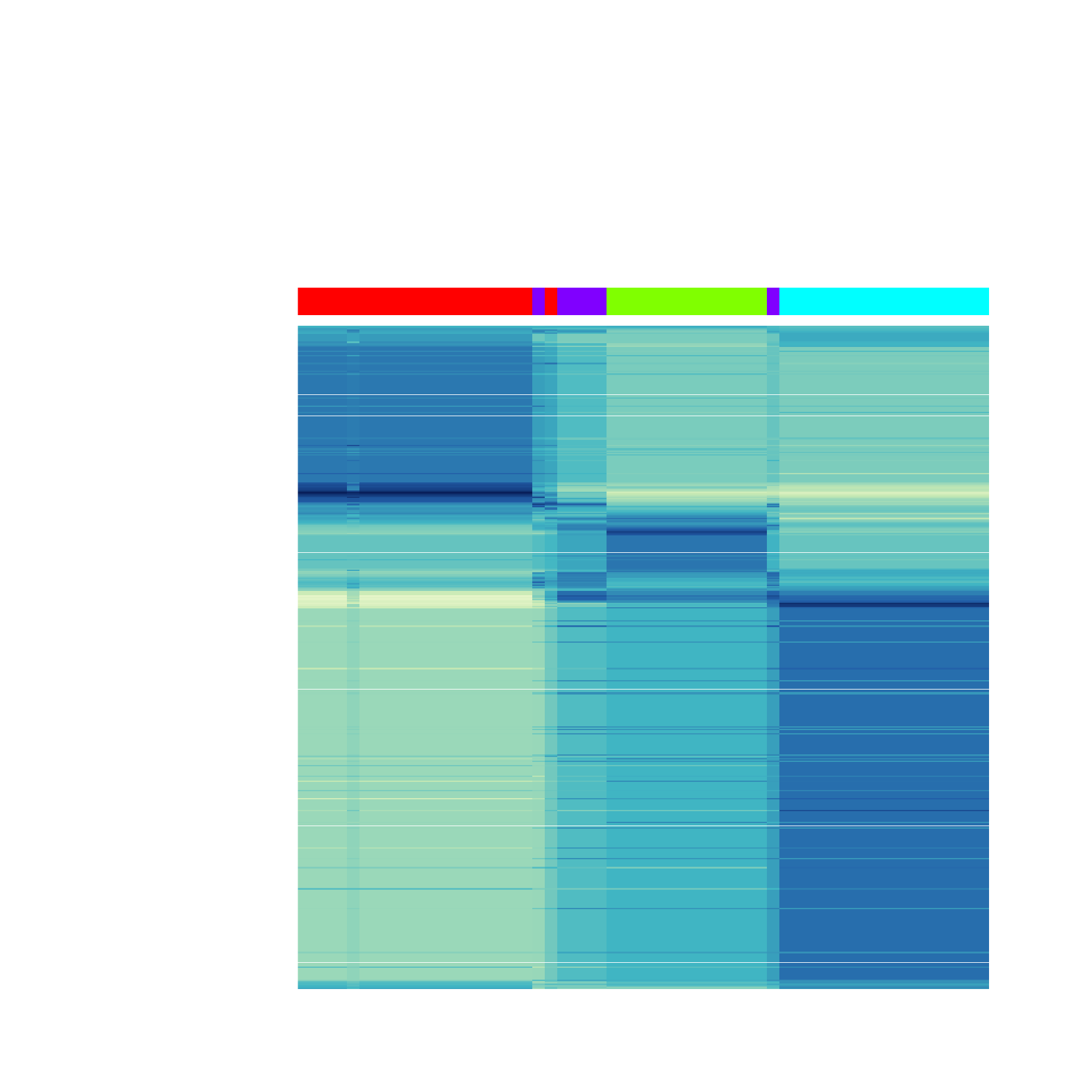}} \hspace{-1cm}
        \subfloat[$\gamma = 10^{2.01}$]{\label{fig:lung9}%
        \includegraphics[width=1.7in]{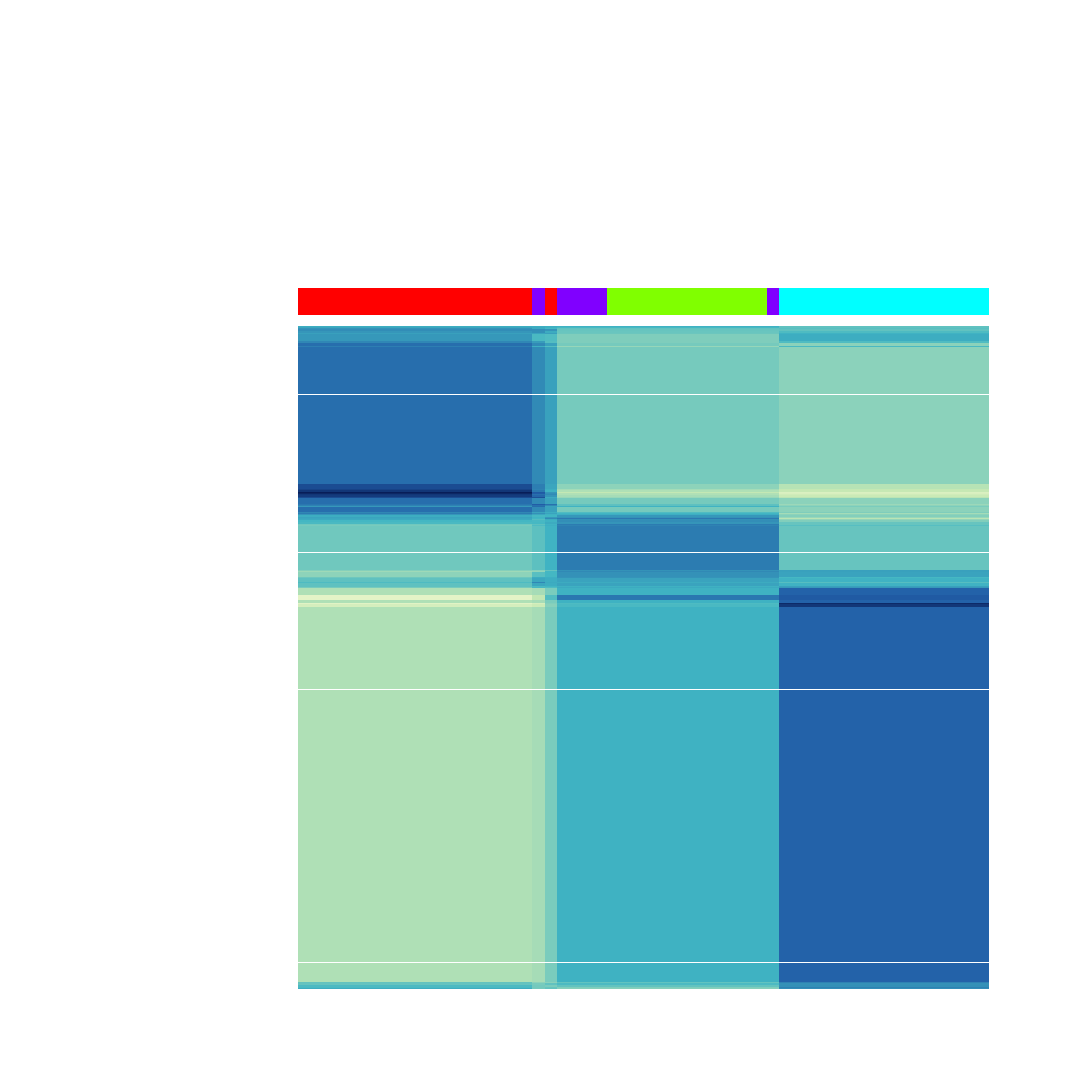}} \\ \vspace{-0.5cm}
        \subfloat[$\gamma = 10^{2.24}$]{\label{fig:lung12}%
        \includegraphics[width=1.7in]{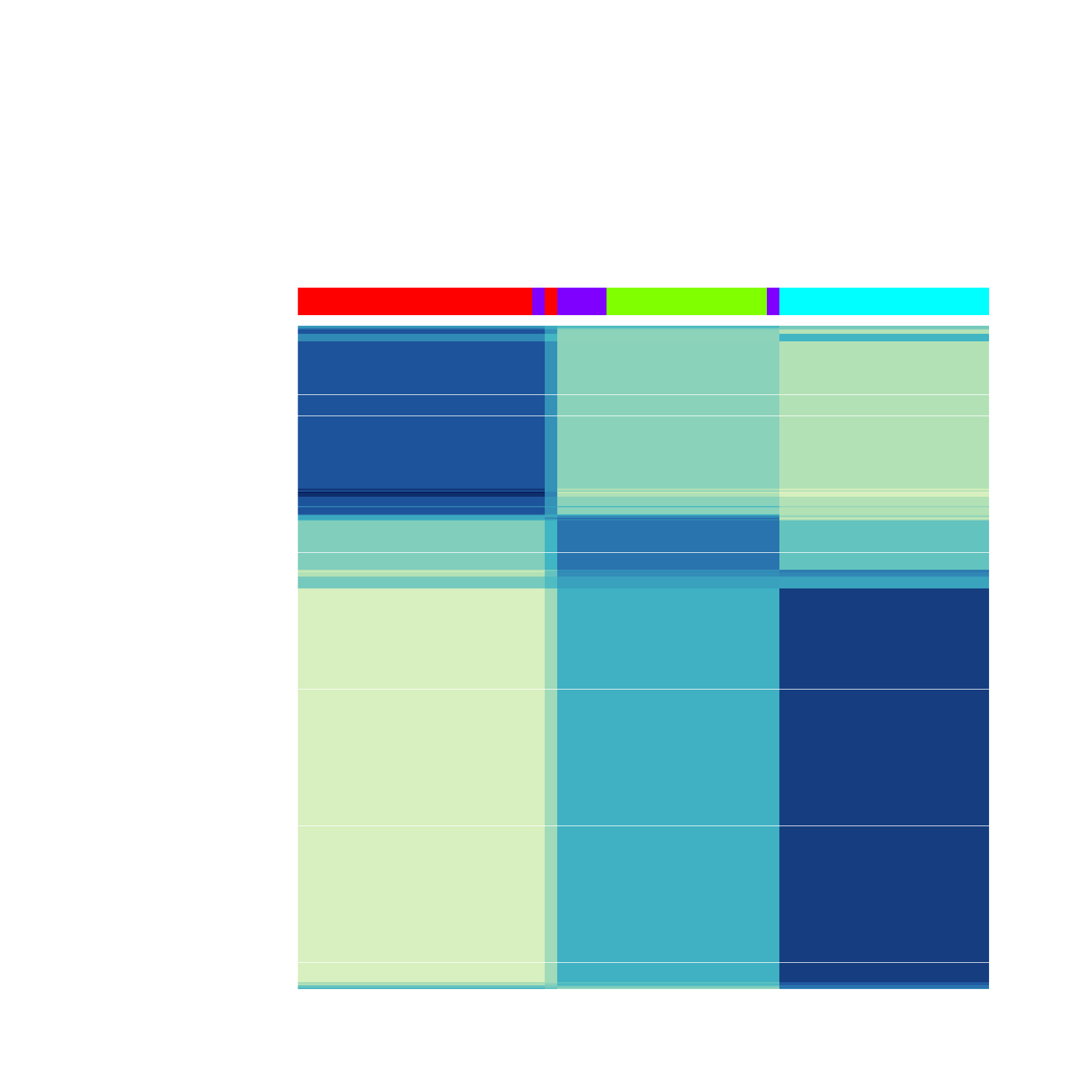}} \hspace{-1cm}
        \subfloat[$\gamma = 10^{2.35}$]{\label{fig:lung17}%
        \includegraphics[width=1.7in]{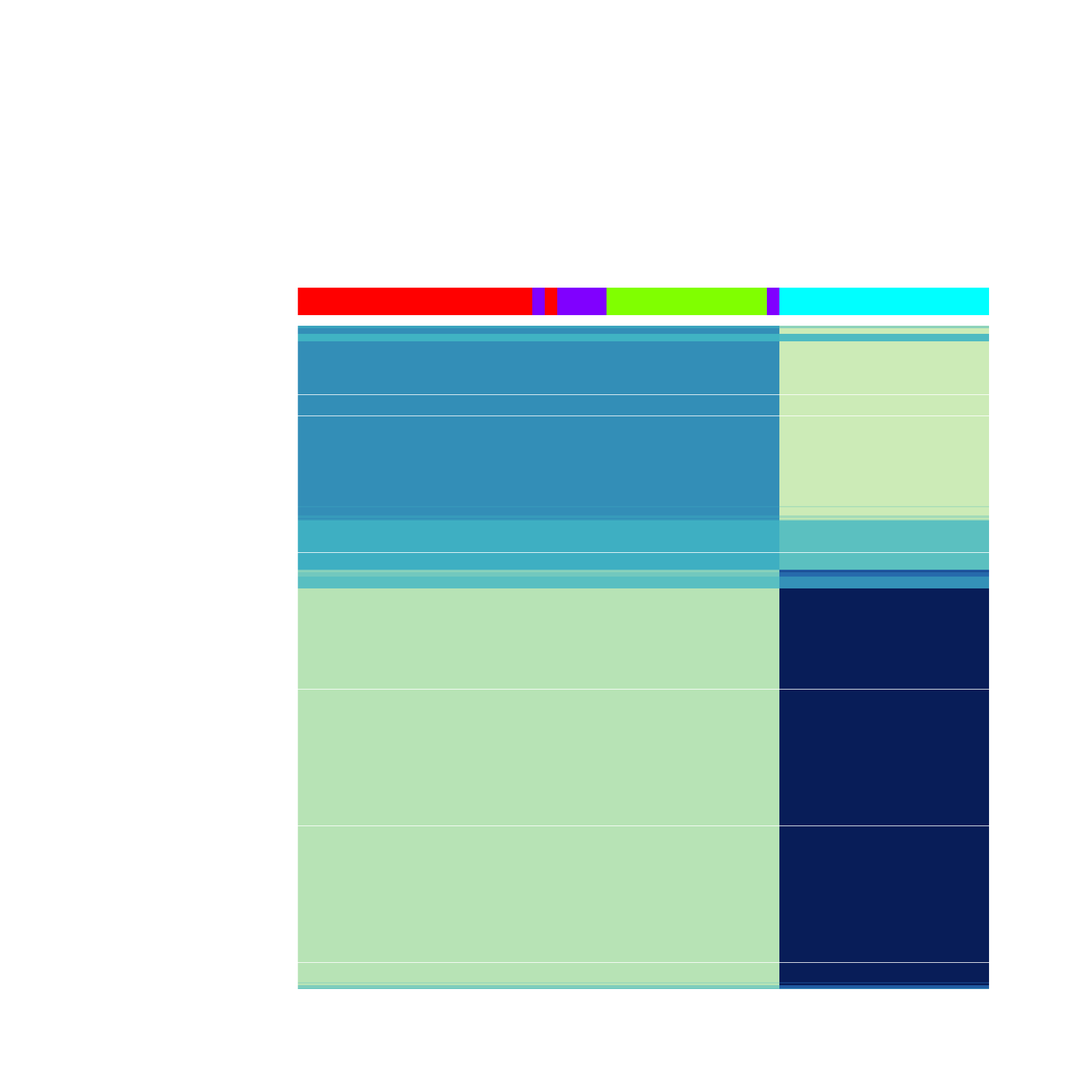}} \hspace{-1cm}
        \subfloat[$\gamma = 10^{3.03}$]{\label{fig:lung18}%
        \includegraphics[width=1.7in]{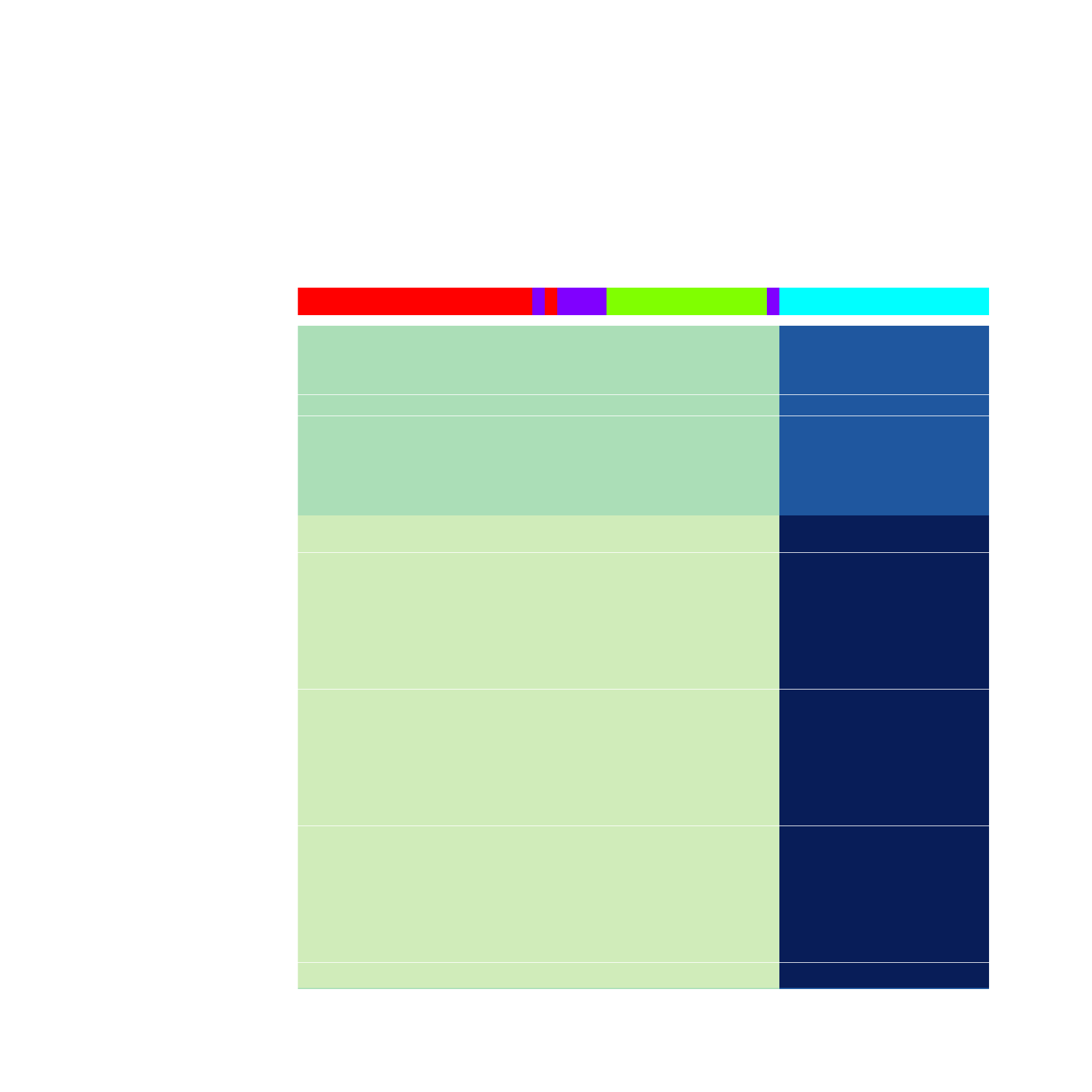}} \hspace{-1cm}
        \subfloat[$\gamma = 10^{3.14}$]{\label{fig:lung20}%
        \includegraphics[width=1.7in]{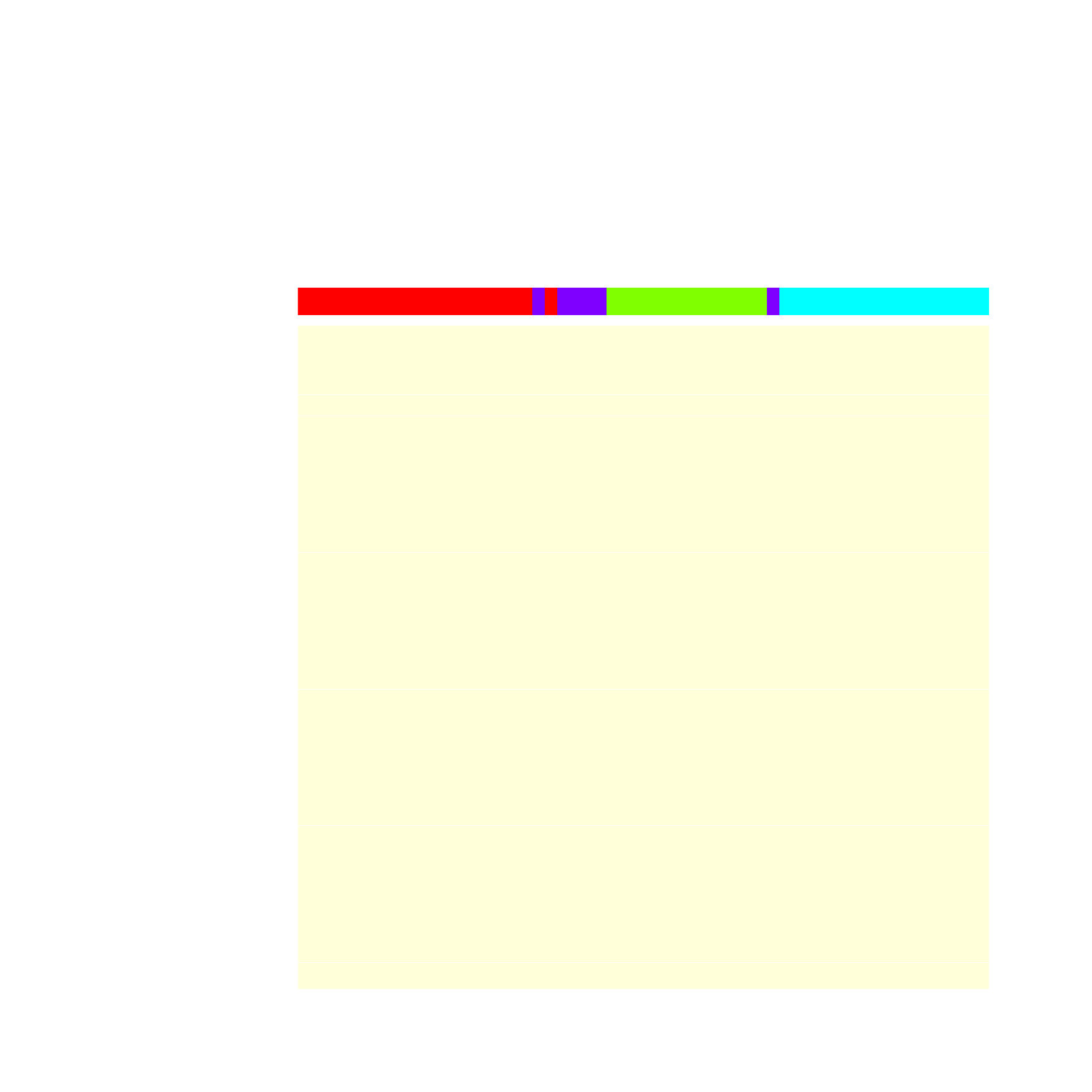}} \\
     \caption{Snap shots of the COBRA solution path of the lung cancer data set, as the parameter $\gamma$ increases. The path captures the whole range of behavior between under-smoothed estimates of the mean structure (small $\gamma$), where each cell is assigned its own bicluster, to over-smoothed estimates (large $\gamma$), where all cells belong to a single bicluster.}
     \label{fig:lung_cba_path}
\end{figure}

\begin{table}[th]
\begin{tabular}{ l c c c c c c }
& $\sigma$ & COBRA & COBRA (A) & COBRA (T) & DCT & spBC \\ \hline
RI & 0.5 & 0.984 & {\bf 0.992} & 0.959 & 0.979 & 0.974 \\ 
& 1.0 & 0.981 & {\bf 0.990} & 0.944 & 0.974 & 0.965 \\
& 1.5 & 0.973 & {\bf 0.989} & 0.896 & 0.973 & 0.936 \\ \hline
ARI & 0.5 & 0.350 & 0.788 & {\bf 0.813} & 0.530 & 0.642 \\ 
& 1.0 & 0.233 & 0.686 & {\bf 0.766} & 0.439 & 0.544 \\
& 1.5 & 0.201 & {\bf 0.667} & 0.644 & 0.340 & 0.397 \\ \hline
VI & 0.5 & 1.924 & 0.882 & {\bf 0.776} & 2.120 & 1.568 \\ 
& 1.0 & 2.380 & 1.276 & {\bf 0.962} & 2.769 & 2.174 \\
& 1.5 & 2.721 & {\bf 1.312} & 1.320 & 3.505 & 2.915 \\ \hline
time (sec) & 0.5 & 15.44 & 23.46 & 15.65 & {\bf 0.07} & 151.59 \\ 
& 1.0 & 25.21 & 34.51 & 25.53 & {\bf 0.11} & 197.00 \\
& 1.5 & 18.18 & 26.43 & 18.50 & {\bf 0.12} & 207.88 \\ \hline \\
\end{tabular}
\caption{\label{tab:stability} 
Stability and reproducibility of biclusterings in lung cancer microarray data. COBRA variants, the clustered dendrogram with dynamic tree cutting, and sparse Biclustering are applied to the lung cancer data to obtain baseline biclusterings. We then perturb the data by adding iid\@ $N(0,\sigma^2)$ noise where $\sigma = 0.5$ (Small Pert.), 1.0 (Medium Pert.), 1.5 (Large Pert.).} 
\end{table}

%% ----------------------------------------------------------------------
%% Discussion
%% ----------------------------------------------------------------------
\section{Discussion}
\label{sec:discussion}

Our proposed method for biclustering, COBRA, can be considered a principled reformulation of the clustered dendrogram. Unlike the clustered dendrogram, COBRA returns a unique global minimizer of a goodness-of-fit criterion, but like the clustered dendrogram, COBRA is simple to interpret. COBRA also sports two key improvements over existing biclustering methods. First, it is more stable. COBRA biclustering assignments on perturbations of the data agree noticeably more frequently than those of existing biclustering algorithms. Second, it admits an effective and efficient model selection procedure for selecting the number of biclusters, that reduces the problem to solving a sequence of convex biclustering problems. The upshot of these two qualities is that COBRA produces results that are both simple to interpret and reproducible.

The simplicity of our means model is also its greatest weakness, since we consider only checkerboard patterns, namely we 
assign each observation to exactly one bicluster and do not consider overlapping biclusters \citep{CheChu2000,LazOwe2002,ShaWeiNob2009}. Nonetheless, while models that allow for overlapping biclusters might be more flexible, they are also harder to interpret.

While our simulation studies demonstrated the effectiveness of COBRA, there is room for improvement. We highlight an intriguing suggestion made during the review of this article. In many real-world applications there is no ``true" fixed number of biclusters. Instead, the underlying latent structure may be a continuum of biclusters at different scales of row and column aggregation. Indeed, COBRA has the potential to estimate a multiscale model of the data. When the weights are uniform, all columns (rows) are averaged together. When the weights are positive only among nearest neighbors, only nearest neighboring columns (rows) are averaged together. Thus, by tuning the weights, we can obtain smoothed estimates of the data at different scales of resolution. 

The ability to smooth estimates at different scales suggests a connection to computational harmonic analysis. Indeed, \cite{CoiGav2011} explore the biclustering problem through a wavelet representation of the data matrix. They also seek a representation that is smooth with respect to partitions of the row and column graphs that specify the similarity among the observations and features. A checkerboard mean structure at different scales can be obtained via operations in the wavelet domain, namely by thresholding wavelet coefficients corresponding to different scales. We are currently exploring how to adapt \cite{CoiGav2011}'s strategy to solve a sequence of COBRA problems at different scales in order to recover a continuum of biclusters.

An R package, called cvxbiclustr, implementing COBRA is available on CRAN.

%  The \backmatter command formats the subsequent headings so that they
%  are in the journal style.  Please keep this command in your document
%  in this position, right after the final section of the main part of 
%  the paper and right before the Acknowledgements, Supplementary Materials,
%  and References sections. 

\section*{Acknowledgements}
EC acknowledges support from CIA Postdoctoral Fellowship 2012-12062800003. GA acknowledges support from NSF DMS 1209017, 1264058, and 1317602.
RB acknowledges support from ARO MURI W911NF-09-1-0383 and AFOSR grant FA9550-14-1-0088.
\appendix

%% ----------------------------------------------------------------------
%% Weights
%% ----------------------------------------------------------------------
\section*{Web Appendix A. Column and Row Weights}
\label{sec:weights}

\setcounter{equation}{3}
Recall that our goal is to minimize the following convex criterion
\begin{eqnarray}
\label{eq:biclust_objective_function}
F_{\gamma}(\M{U}) & = & \frac{1}{2} \lVert \M{x} - \M{U} \rVert_{\text{F}}^2 + \gamma \underbrace{\left [\Omega_{\M{w}}(\M{U}) + \Omega_{\Mtilde{W}}(\M{U}\Tra) \right ]}_{J(\M{U})},
\end{eqnarray}
where $\Omega_{\M{w}}(\M{U}) = \sum_{i<j}w_{ij} \|\M{U}_{\cdot i}-\M{U}_{\cdot j} \rVert_2$, and $\M{U}_{\cdot i}$ ($\M{U}_{i \cdot})$ denotes the $i$th column (row) of the matrix $\M{U}$. In this work, we use the sparse Gaussian kernel weights proposed in \cite{ChiLan2015} for the weights $\M{W}$ and $\Mtilde{W}$ that define
the terms $\Omega_{\M{W}}(\M{U})$ and $\Omega_{\Mtilde{W}}(\M{U})$.

We construct the weights in two steps. We describe these steps for computing the column weights; the row weights are computed analogously. We start by computing pre-weights between the $i$th and $j$th columns as $\hat{w}_{ij} = \iota^k_{\{i,j\}} \exp(-\phi \lVert \M{x}_{\cdot i} - \M{x}_{\cdot _j} \rVert_2^2)$, as the product of two terms. The first factor $\iota^k_{\{i,j\}}$ is 1 if $j$ is among $i$'s $k$-nearest-neighbors or vice versa and 0 otherwise. The first term controls the sparsity of the weights. The second factor is a Gaussian kernel that puts greater pressure on similar columns to fuse and less pressure on dissimilar columns to fuse. The nonnegative constant $\phi$ controls the rate at which the pressure to fuse is applied as a function of the distance between columns; the value $\phi = 0$ corresponds to uniform weights.  The pre-weights $\hat{w}_{ij}$ are then normalized to sum to $1/\sqrt{p}$.

For all experiments in the paper, we set $\phi = 0.5$ and set $k = 10$ for both row and column weights.

We briefly discuss the rationale behind our weight choice here and refer readers to \cite{ChiLan2015} for a more detailed exposition. 
\cite{ChiLan2015} give several examples that show that restricting positive weights to nearest neighbors enhances both computational efficiency and clustering quality. In their examples they showed that if dense Gaussian kernel weights were used, cluster centroids shrunk towards each other 
as the tuning parameter $\gamma$ increased but no fusions would occur along the path save a single simultaneous fusion of all cluster centroids for a sufficiently large $\gamma$. Thus, while the two factors defining the weights act similarly, sensible fusions along the solution path could be achieved only by using them together. This is best illustrated in the half-moons example in \citep{ChiLan2015}.

%Nonetheless, although we use the exact $k$-nearest neighbors in this paper, we do not believe finding the exact $k$-nearest neighbors is necessary; we conjecture that the quality of solutions would not suffer greatly if other rules for selectively assigning large weights to the closest pairs of points are used. For example, approximate nearest neighbors could be used instead \citep{SlaCas2008}. 

%% ----------------------------------------------------------------------
%% Proof of Solution Existence & Uniqueness
%% ----------------------------------------------------------------------
\section*{Web Appendix B. Proofs of Solution Properties}
%\label{sec:proof_existence_uniqueness}

In this appendix, we give proofs of propositions in Section 3 of our paper.

\setcounter{section}{3}

\begin{proposition}[Existence and Uniqueness]
\label{prop:existence_uniqueness} The function $F_\gamma(\M{U})$ defined in (1) has a unique global minimizer.
\end{proposition}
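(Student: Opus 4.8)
The plan is to establish two properties of $F_\gamma$: strict convexity and coercivity. Together these guarantee a unique global minimizer, because a strictly convex function attains its infimum at no more than one point, while coercivity (together with continuity) ensures that the infimum is in fact attained. I would carry out the convexity argument first, isolate where strictness comes from, and then dispatch existence via a compactness argument.

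First I would verify convexity of the penalty. Each summand $w_{ij}\lVert \M{U}_{\cdot i} - \M{U}_{\cdot j}\rVert_2$ is the composition of the Euclidean norm with the linear map $\M{U} \mapsto \M{U}_{\cdot i} - \M{U}_{\cdot j}$, scaled by the nonnegative weight $w_{ij}$; hence it is convex. Summing over $i<j$ shows $\Omega_{\M{W}}(\M{U})$ is convex, and the identical argument applied to $\M{U}\Tra$ shows $\Omega_{\Mtilde{W}}(\M{U}\Tra)$ is convex, so $J(\M{U})$ is convex. These terms are only seminorms and need not be strictly convex---for instance $J$ vanishes on every constant matrix---so the strictness needed for uniqueness must come entirely from the data-fidelity term.

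Next I would argue that the quadratic term $\frac{1}{2}\lVert \M{X} - \M{U}\rVert_{\text{F}}^2$ is strictly convex in $\M{U}$: viewed as a function of $\vec(\M{U})$ it is a positive-definite quadratic whose Hessian is the identity. Since a strictly convex function plus a convex function is strictly convex, $F_\gamma$ is strictly convex, which yields uniqueness of any minimizer. For existence I would establish coercivity. Because $J(\M{U}) \ge 0$ and $\gamma \ge 0$, we have $F_\gamma(\M{U}) \ge \frac{1}{2}\lVert \M{X} - \M{U}\rVert_{\text{F}}^2$, and the right-hand side tends to $+\infty$ as $\lVert \M{U}\rVert_{\text{F}} \to \infty$. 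Hence every sublevel set $\{\M{U} : F_\gamma(\M{U}) \le \alpha\}$ is bounded, and it is closed because $F_\gamma$ is continuous (a sum of a quadratic and finitely many norms). A continuous function on a nonempty compact set attains its minimum, so a global minimizer exists; combined with strict convexity it is unique.

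I do not anticipate a genuine obstacle here, as the result is essentially a packaging of standard convex-analytic facts. The only point requiring care is recognizing that the regularizer $J$ alone is not strictly convex, so the uniqueness argument must locate all of the strict convexity in the least-squares term rather than in the penalty; once that is observed, the remaining steps are routine.
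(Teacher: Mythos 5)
Your proposal is correct and follows exactly the route the paper takes: the paper's proof simply asserts that existence and uniqueness are immediate consequences of the coercivity and strict convexity of $F_\gamma$, and your argument fills in precisely those two facts (strict convexity coming entirely from the quadratic data-fidelity term, coercivity from the bound $F_\gamma(\M{U}) \ge \frac{1}{2}\lVert \M{X}-\M{U}\rVert_{\text{F}}^2$). Your write-up is a correct, more detailed expansion of the paper's one-line proof.
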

We first recall a few definitions and concepts useful in optimization \citep{Lan2013}. A function is {\em coercive} if all its sub level sets are compact. A function $f$ is {\em convex} if $f(\alpha\V{x} + (1-\alpha)\V{y}) \leq \alpha f(\V{x}) + (1-\alpha)f(\V{y})$
for all $\alpha \in (0,1)$ and $\V{x}, \V{y}$ in its domain. A function $f$ is {\em strictly convex} if the inequality is strict. 
\begin{proof}
The existence and uniqueness of a global minimizer $\M{U}^\star$ are immediate consequences of the coerciveness and strict convexity of $F_{\gamma}(\M{U})$.
$\square$
\end{proof}

%% ----------------------------------------------------------------------
%% Proof of Solution Path Continuity
%% ----------------------------------------------------------------------
%\subsection{Continuity}
%\label{sec:proof_continuity}

\begin{proposition}[Continuity]
The solution $\M{U}^\star$ of (1) is jointly continuous in $(\M{X},\gamma, \M{W},\Mtilde{W})$.
\end{proposition}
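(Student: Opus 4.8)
The plan is to regard the minimizer as a single-valued map $\M{U}^\star(\theta)$ of the parameter tuple $\theta = (\M{X},\gamma,\M{W},\Mtilde{W})$, which is well defined by the existence-and-uniqueness result just established, and to show that $\theta_n \to \theta$ forces $\M{U}^\star(\theta_n) \to \M{U}^\star(\theta)$. Continuity in $\M{X}$ alone is in fact immediate: for fixed $(\gamma,\M{W},\Mtilde{W})$ the map $\M{X}\mapsto\M{U}^\star$ is exactly the proximal operator $\prox_{\gamma J}(\M{X})$ of the convex function $\gamma J$, which is firmly nonexpansive and therefore $1$-Lipschitz in $\M{X}$. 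The real content is the joint statement, and for that I would run a compactness-plus-stability-of-argmin argument rather than attempt an explicit modulus of continuity.

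First I would record a uniform coercive bound on the minimizers. Writing $F_\theta$ for the objective with parameters $\theta$ and noting that $J(\M{0})=0$, evaluating the objective at $\M{U}=\M{0}$ gives
\[
\frac{1}{2}\lVert \M{X}_n - \M{U}^\star(\theta_n)\rVert_{\text{F}}^2 \;\le\; F_{\theta_n}(\M{U}^\star(\theta_n)) \;\le\; F_{\theta_n}(\M{0}) \;=\; \frac{1}{2}\lVert \M{X}_n\rVert_{\text{F}}^2,
\]
so that $\lVert \M{U}^\star(\theta_n)\rVert_{\text{F}} \le 2\lVert \M{X}_n\rVert_{\text{F}}$ by the triangle inequality. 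Since $\M{X}_n\to\M{X}$, the right-hand side is bounded, so all the minimizers $\M{U}^\star(\theta_n)$ lie in a single fixed compact ball $B$. This step prevents mass from escaping to infinity and is, to my mind, the crux that must be set up correctly; the remainder is a soft limiting argument.

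Next I would argue along subsequences. Given any subsequence of $\{\M{U}^\star(\theta_n)\}$, compactness of $B$ extracts a further subsequence $\M{U}^\star(\theta_{n_k})\to\bar{\M{U}}$. The map $(\M{U},\theta)\mapsto F_\theta(\M{U})$ is jointly continuous, being assembled from Frobenius norms, inner products, and the weight-linear terms $\Omega_{\M{W}}$ and $\Omega_{\Mtilde{W}}$; in particular $F_{\theta_{n_k}}(\M{U}^\star(\theta_{n_k}))\to F_\theta(\bar{\M{U}})$, while $F_{\theta_{n_k}}(\M{V})\to F_\theta(\M{V})$ for every fixed $\M{V}$. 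Hence for an arbitrary $\M{V}$,
\[
F_\theta(\bar{\M{U}}) \;=\; \lim_k F_{\theta_{n_k}}(\M{U}^\star(\theta_{n_k})) \;\le\; \lim_k F_{\theta_{n_k}}(\M{V}) \;=\; F_\theta(\M{V}),
\]
so $\bar{\M{U}}$ minimizes $F_\theta$, and by the uniqueness already proved $\bar{\M{U}}=\M{U}^\star(\theta)$. Since every subsequence admits a sub-subsequence converging to the same limit $\M{U}^\star(\theta)$, the full sequence converges, which establishes joint continuity. This is, in effect, Berge's maximum theorem applied on the compact set $B$ with a constant constraint correspondence, specialized to a single-valued argmin; the only nonroutine ingredient is the uniform coercive bound of the second paragraph.
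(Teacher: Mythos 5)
Your proof is correct, and its skeleton --- uniform boundedness of the minimizers, extraction of a convergent subsequence, passage to the limit in the inequality $F_{\theta_n}(\M{U}^\star(\theta_n)) \le F_{\theta_n}(\M{V})$ using joint continuity of the objective, and identification of the limit via uniqueness --- is the same argmin-stability argument the paper uses (the paper phrases it as a proof by contradiction, which is only a cosmetic difference). Where you genuinely depart from the paper is in the key boundedness step, and your version is simpler. The paper establishes boundedness by constructing an envelope function $g(\M{U}) = \sup_{\lVert \tilde{\zeta}-\zeta\rVert \le 1} f(\M{U},\tilde{\zeta})$, arguing that $g$ is strongly convex with a finite minimum value, and then applying the reverse triangle inequality to $\frac{1}{2}\lVert \M{U}^{(m)} - \M{X}^{(m)}\rVert_{\text{F}}^2 \le g(\M{U}^*)$. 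You instead exploit the fact that $J$ is a semi-norm, so $J(\M{0})=0$ for every choice of weights and $\gamma$, and evaluate the objective at $\M{U}=\M{0}$ to get $\lVert \M{U}^\star(\theta_n)\rVert_{\text{F}} \le 2\lVert \M{X}_n\rVert_{\text{F}}$ directly; this gives the same uniform compact ball with far less machinery and makes transparent that the bound is uniform over all parameter values, not just a neighborhood. Your additional observation that, for fixed $(\gamma,\M{W},\Mtilde{W})$, the map $\M{X}\mapsto \M{U}^\star$ is the proximal operator of $\gamma J$ and hence nonexpansive is a nice bonus the paper does not mention: it yields an explicit Lipschitz modulus in the data, which is the stability property the authors actually care about, whereas the compactness argument only delivers qualitative continuity.
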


\begin{proof}
Without loss of generality, we can absorb the regularization parameter $\gamma$ into the weights $\V{w} = (\vec(\M{W})\Tra,\vec(\Mtilde{W})\Tra)\Tra \in \Real^{\frac{p(p-1)}{2} + \frac{n(n-1)}{2}}$. Thus, we can check to see if the solution $\M{U}^\star$ is continuous in the variable $\V{\zeta} = (\vec(\M{X})\Tra,\V{w}\Tra)\Tra$. It is easy to verify that the following function is jointly continuous in $\M{U}$ and $\V{\zeta}$
\begin{eqnarray}
f(\M{U},\V{\zeta}) & = & \frac{1}{2} \lVert \M{X} - \M{U} \rVert_{\text{F}}^2 + J_{\V{W}}(\M{U}),
\end{eqnarray}
where
\begin{eqnarray}
J_{\V{W}}(\M{U}) & = & \frac{1}{\sqrt{p}} \Omega_{\M{w}}(\M{U}) + \frac{1}{\sqrt{n}}\Omega_{\Mtilde{w}}(\M{U}\Tra)
\end{eqnarray}
is a convex function of $\M{U}$ that is continuous in $\V{W}$. Let
\begin{eqnarray}
\M{U}^\star(\V{\zeta}) & = & \underset{\M{U}}{\arg\min}\; f(\M{U},\V{\zeta}).
\end{eqnarray}

We proceed with a proof by contradiction. Suppose $\M{U}^\star(\V{\zeta})$ is not continuous at a point $\V{\zeta}$. Then there exists an $\epsilon > 0$ and a sequence $\{\Vn{\zeta}{m}\}$ converging to $\M{\zeta}$ such that $\lVert \Mn{U}{m} - \M{U}^\star(\V{\zeta}) \rVert_{\text{F}} \geq \epsilon$ for all $m$ where
\begin{eqnarray}
\Mn{U}{m} & = & \underset{\M{U}}{\arg\min}\; f(\M{U},\Vn{\zeta}{m}).
\end{eqnarray}
Note that since $f(\M{U},\V{\zeta})$ is strongly convex in $\M{U}$, the minimizers $\Mn{U}{m}$ and $\M{U}^\star(\V{\zeta})$ exist and are unique. Without loss of generality we can assume
$\lVert \Vn{\zeta}{m} - \V{\zeta} \rVert_{\text{F}} \leq 1$.
This fact will be used later in proving the boundedness of the sequence $\Mn{U}{m}$.

Fix an arbitrary point $\Mtilde{U}$. If $\Mn{U}{m}$ is a bounded sequence then we can pass to a convergent subsequence with limit $\Mbar{U}$. Note that $f(\Mn{U}{m}, \Vn{\zeta}{m}) \leq f(\Mtilde{U}, \Vn{\zeta}{m})$ for all $m$. Since $f$ is continuous in $(\M{U},\V{\zeta})$, taking limits gives us the inequality
\begin{eqnarray}
f(\Mbar{U}, \V{\zeta}) & \leq & f(\Mtilde{U}, \V{\zeta}).
\end{eqnarray}
Since $\Mtilde{U}$ was selected arbitrarily, it follows that $\Mbar{U} = \M{U}^\star(\V{\zeta})$, which is a contradiction. It only remains for us to show that the sequence $\Mn{U}{m}$ is bounded.

Consider the function
\begin{eqnarray}
g(\M{U}) & = & \underset{\Vtilde{\zeta} : \lVert \Vtilde{\zeta}- \V{\zeta} \rVert_{\text{F}} \leq 1}{\sup}\;
\frac{1}{2} \lVert \Mtilde{X} - \M{U} \rVert_{\text{F}}^2 + J_\Vtilde{W}(\M{U}).
\end{eqnarray}
Note that $g$ is convex, since it is the point-wise supremum of a collection of convex functions. Since $f(\M{U}, \Vn{\zeta}{m}) \leq g(\M{U})$ and $f$ is strongly convex in $\M{U}$, it follows that $g(\M{U})$ is also strongly convex and therefore has a unique global minimizer $\M{U}^*$ such that $g(\M{U}^*) < \infty$. It also follows that
\begin{eqnarray}
\label{eq:ineqA}
f(\Mn{U}{m},\Vn{\zeta}{m}) & \leq & f(\M{U}^*,\Vn{\zeta}{m}) \amp \leq \amp g(\M{U}^*)
\end{eqnarray}
for all $m$. By the reverse triangle inequality it follows that
\begin{eqnarray}
\label{eq:ineqB}
\frac{1}{2} \left (
 \lVert \Mn{U}{m} \rVert_{\text{F}} - \lVert \Mn{X}{m} \rVert_{\text{F}} 
\right )^2 \amp \leq \amp \frac{1}{2} \lVert \Mn{U}{m} - \Mn{X}{m} \rVert_{\text{F}}^2 \amp \leq \amp
f(\Mn{U}{m},\Vn{\zeta}{m}).
\end{eqnarray}
Combining the inequalities in \Eqn{ineqA} and \Eqn{ineqB}, we arrive at the conclusion that
\begin{eqnarray}
\frac{1}{2} \left (
 \lVert \Mn{U}{m} \rVert_{\text{F}} - \lVert \Mn{X}{m} \rVert_{\text{F}} 
\right )^2 \amp \leq \amp g(\M{U}^*),
\end{eqnarray}
for all $m$. Suppose the sequence $\Mn{U}{m}$ is unbounded, namely $\lVert \Mn{U}{m} \rVert_{\text{F}} \rightarrow \infty$. 
But since $\Mn{X}{m}$ converges to $\M{X}$, the left hand side must diverge. Thus, we arrive at a contradiction if $\Mn{U}{m}$ is unbounded.
$\square$
\end{proof}

%% ----------------------------------------------------------------------
%% Proof of Zero of Semi-norm
%% ----------------------------------------------------------------------
%\subsection{Zeroes of the fusion penalty}
%\label{sec:zero}

\begin{proposition}[Zeroes of the fusion penalty]
Under Assumption 1, \\ $J(\M{U}) = 0$ if and only if $\M{U} = c\V{1}\V{1}\Tra$ for some $c \in \Real$.
\end{proposition}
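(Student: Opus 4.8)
The plan is to prove the two implications separately; the reverse direction is a direct computation, while the forward direction rests on a connectivity argument supplied by Assumption~1.

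For the direction $\M{U} = c\V{1}\V{1}\Tra \Rightarrow J(\M{U}) = 0$, I would simply observe that every column of $\M{U}$ equals $c\V{1}$ and every row equals $c\V{1}\Tra$, so each column difference $\M{U}_{\cdot i} - \M{U}_{\cdot j}$ and each row difference $\M{U}_{i\cdot} - \M{U}_{j\cdot}$ vanishes. Hence both $\Omega_{\M{w}}(\M{U})$ and $\Omega_{\Mtilde{W}}(\M{U}\Tra)$ are zero, and therefore so is $J(\M{U})$. This is a one-line verification.

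For the forward direction, suppose $J(\M{U}) = 0$. First I would note that $J$ is the sum of the two \emph{nonnegative} quantities $\Omega_{\M{w}}(\M{U})$ and $\Omega_{\Mtilde{W}}(\M{U}\Tra)$, so each must vanish on its own. Next, each of these is itself a sum of nonnegative terms $w_{ij}\lVert \M{U}_{\cdot i}-\M{U}_{\cdot j}\rVert_2$ (respectively $\tilde{w}_{ij}\lVert \M{U}_{i\cdot}-\M{U}_{j\cdot}\rVert_2$), so every individual term is zero. In particular, whenever $w_{ij} > 0$ we must have $\M{U}_{\cdot i} = \M{U}_{\cdot j}$, and whenever $\tilde{w}_{ij} > 0$ we must have $\M{U}_{i\cdot} = \M{U}_{j\cdot}$. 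I would then invoke Assumption~1: fixing an arbitrary pair of columns $i, j$, there is a path $i \rightarrow k \rightarrow \cdots \rightarrow l \rightarrow j$ along which all weights are positive, and along each edge the incident columns coincide, so by transitivity $\M{U}_{\cdot i} = \M{U}_{\cdot j}$. Since the pair was arbitrary, all columns of $\M{U}$ are equal, i.e. $\M{U} = \V{v}\V{1}\Tra$ for some $\V{v} \in \Real^p$. Applying the identical argument to the rows (using the row weights and the row part of Assumption~1) shows all rows of $\M{U}$ are equal, which forces $\V{v} = c\V{1}$ for some scalar $c$, and hence $\M{U} = c\V{1}\V{1}\Tra$.

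The only delicate point, and really the crux of the argument, is the passage from ``equality across every positive-weight edge'' to ``equality across all pairs.'' This is precisely what Assumption~1 is designed to deliver: it guarantees that the weighted graph on the columns (and likewise on the rows) is connected, so pairwise equality propagates along paths by transitivity of the relation $\M{U}_{\cdot i} = \M{U}_{\cdot j}$. Everything else reduces to routine nonnegativity bookkeeping.
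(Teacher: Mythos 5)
Your proof is correct and follows essentially the same route as the paper's: both deduce that every individual penalty term vanishes, conclude equality of columns (rows) across every positive-weight edge, and then use the connectivity guaranteed by Assumption~1 to propagate that equality to all pairs. The only cosmetic difference is that the paper formalizes the propagation step with a triangle-inequality bound involving the minimum weight along the connecting path, whereas you invoke transitivity of equality directly; the two are equivalent.
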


\begin{proof}
We first show that
\begin{eqnarray}
\Omega_{\M{W}}(\M{U}) & = & \sum_{i < j} \VE{w}{ij} \lVert \M{U}_{\cdot i} - \M{U}_{\cdot i} \rVert_{\text{F}}
\end{eqnarray}
is positive if and only if $\M{U}_{\cdot i} = \M{U}_{\cdot j}$ for all $i < j$, namely all the columns of $\M{U}$ are the same. Clearly if the columns of $\M{U}$ are the same, then $\Omega_{\M{W}}(\M{U})$ is zero. Suppose that $\Omega_{\M{W}}(\M{U})$ is zero. Then it must that be $\M{U}_{\cdot i} = \M{U}_{\cdot j}$ for every $\VE{w}{ij} > 0$. Consider a pair $(i,j)$ such that $\VE{w}{ij} = 0$. By Assumption 1, there exists a path $i \rightarrow k \rightarrow \cdots \rightarrow l \rightarrow j$ along which the weights are positive.  Let $w$ denote the smallest weight along this path, namely $w = \min \{\VE{w}{ik}, \ldots, \VE{w}{lj}\}$. By the triangle inequality
\begin{eqnarray}
\lVert \M{U}_{\cdot i} - \M{U}_{\cdot j} \rVert_{\text{F}} & \leq & \lVert \M{U}_{\cdot i} - \M{U}_{\cdot k} \rVert_{\text{F}}
+ \cdots + \lVert \M{U}_{\cdot l} - \M{U}_{\cdot j} \rVert_{\text{F}}.
\end{eqnarray}
We can then conclude that
\begin{eqnarray}
w\lVert \M{U}_{\cdot i} - \M{U}_{\cdot j} \rVert_{\text{F}} & \leq & \Omega_{\M{W}}(\M{U}) \amp = \amp 0.
\end{eqnarray}
It follows that $\M{U}_{\cdot i} = \M{U}_{\cdot j}$, since $w$ is positive. By a similar argument it follows that
\begin{eqnarray}
\Omega_{\Mtilde{W}}(\M{U}) & = & \sum_{i < j} \tilde{w}_{ij} \lVert \M{U}_{i \cdot} - \M{U}_{j\cdot} \rVert_{\text{F}}
\end{eqnarray}
is zero if and only if $\M{U}_{i \cdot} = \M{U}_{j \cdot}$ for all $i < j$, or in other words if the rows of $\M{U}$ are all the same. Thus, $J_\M{W}(\M{U}) = 0$ if and only if $\M{U}$ is a constant matrix.
\end{proof}

%% ----------------------------------------------------------------------
%% Proof of Coalescence
%% ----------------------------------------------------------------------
%\subsection{Coalescence}
\label{sec:coalesce}

\begin{proposition}[Coalescence]
\label{prop:coalesce}
Under Assumption 1, $F_{\gamma}(\M{U})$ is minimized by the grand mean $\Mbar{\M{x}}$ for $\gamma$ sufficiently large.
\end{proposition}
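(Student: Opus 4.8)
The plan is to verify the first-order (subgradient) optimality condition for the convex objective $F_\gamma$ directly at the candidate point $\Mbar{X} = \bar{x}\,\V{1}\V{1}\Tra$, where $\bar x = (np)^{-1}\sum_{ij}\ME{x}{ij}$. Since $F_\gamma$ is convex, $\Mbar{X}$ is its global minimizer if and only if $\V{0} \in \partial F_\gamma(\Mbar{X})$, and since the quadratic term is smooth with gradient $\M{U}-\M{X}$ and the subdifferential sum rule applies (both penalties are finite everywhere), this is equivalent to
\[
\M{X} - \Mbar{X} \;\in\; \gamma\,\partial J(\Mbar{X}).
\]
The residual $\M{R} := \M{X} - \Mbar{X}$ is fixed and independent of $\gamma$, while $\gamma\,\partial J(\Mbar{X})$ dilates as $\gamma$ grows; hence it suffices to show that $\tfrac1\gamma\M{R} \in \partial J(\Mbar{X})$ for all sufficiently large $\gamma$.

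Next I would characterize $\partial J(\Mbar{X})$. At $\Mbar{X}$ all columns and all rows coincide, so every difference $\M{U}_{\cdot i}-\M{U}_{\cdot j}$ and $\M{U}_{i\cdot}-\M{U}_{j\cdot}$ vanishes and each summand of $J$ is evaluated at the single nonsmooth point of $\lVert\cdot\rVert_2$, whose subdifferential there is the closed unit ball. Applying the chain rule through the linear difference maps, a generic element $\M{G}$ of $\partial\Omega_{\M{W}}(\Mbar{X})$ has the form $\M{G}=\sum_{i<j} w_{ij}\,\V{s}_{ij}(\V{e}_i-\V{e}_j)\Tra$ with $\lVert\V{s}_{ij}\rVert_2\le 1$, and any such matrix satisfies $\M{G}\V{1}=\V{0}$, i.e.\ it has zero row sums. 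Symmetrically, a generic element of the subdifferential of the row penalty $\M{U}\mapsto\Omega_{\Mtilde{W}}(\M{U}\Tra)$ has the form $\sum_{i<j}\tilde w_{ij}(\V{e}_i-\V{e}_j)\Vtilde{s}_{ij}\Tra$ with $\lVert\Vtilde{s}_{ij}\rVert_2\le 1$ and has zero column sums.

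The decisive step is to show these subgradient families are rich enough. Here Assumption 1 enters: connectedness of the column weight graph means the incidence vectors $\{\V{e}_i-\V{e}_j : w_{ij}>0\}$ span the hyperplane $\V{1}^\perp\subset\Real^n$, so the homogeneous linear map $(\V{s}_{ij})\mapsto\sum_{i<j}w_{ij}\V{s}_{ij}(\V{e}_i-\V{e}_j)\Tra$ is surjective onto the space of all zero-row-sum $p\times n$ matrices (the norm bounds play no role in surjectivity). The analogous statement holds for the row subdifferential and the zero-column-sum matrices. I would then split the residual explicitly: since $\V{1}\Tra\M{R}\V{1}=\sum_{ij}(\ME{x}{ij}-\bar x)=0$, setting $\M{A}=\tfrac1n\M{R}\V{1}\V{1}\Tra$ and $\M{B}=\M{R}-\M{A}$ gives, by a one-line computation, $\V{1}\Tra\M{A}=\V{0}\Tra$ and $\M{B}\V{1}=\V{0}$. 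Thus $\M{B}$ lies in the range of the column subdifferential and $\M{A}$ in that of the row subdifferential. Writing $\M{B}=\gamma\sum_{i<j}w_{ij}\V{s}_{ij}(\V{e}_i-\V{e}_j)\Tra$ and $\M{A}=\gamma\sum_{i<j}\tilde w_{ij}(\V{e}_i-\V{e}_j)\Vtilde{s}_{ij}\Tra$, the vectors $\V{s}_{ij},\Vtilde{s}_{ij}$ scale like $1/\gamma$, so once $\gamma$ exceeds a threshold read off from any fixed representation of $\M{A}$ and $\M{B}$, all the constraints $\lVert\V{s}_{ij}\rVert_2\le1$ and $\lVert\Vtilde{s}_{ij}\rVert_2\le1$ are met. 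This yields $\tfrac1\gamma\M{R}\in\partial J(\Mbar{X})$ and hence optimality of $\Mbar{X}$.

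The main obstacle is precisely the surjectivity claim, which is exactly where Assumption 1 is indispensable: without connectedness the incidence vectors would fail to span $\V{1}^\perp$ and some balanced residuals could not be matched. The remainder is bookkeeping I would want to state carefully — justifying the sum rule for $\partial J$, confirming that the single scalar constraint $\V{1}\Tra\M{R}\V{1}=0$ on the residual is precisely the obstruction cut out by $\partial J(\Mbar{X})$ (equivalently, that the sum of the zero-row-sum and zero-column-sum subspaces is the full codimension-one space of balanced matrices, a quick dimension count), and making the threshold on $\gamma$ explicit.
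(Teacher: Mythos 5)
Your argument is correct, and it takes a genuinely different (primal) route from the paper's (dual) one. The paper rewrites $\gamma J$ via the dual-norm variational representation, passes to the Lagrangian dual $\max_{\V{v} \in C_\gamma} -\tfrac{1}{2}\lVert \M{A}\Tra \V{v}\rVert_2^2 + \langle \V{v}, \M{A}\V{x}\rangle$ with edge-indexed blocks constrained to balls of radius $w_l\gamma$, notes that for large $\gamma$ the unconstrained maximizer $\V{v}^\star = (\M{A}\M{A}\Tra)^\dagger \M{A}\V{x}$ is feasible, and then certifies optimality of $\Mbar{X}$ by showing the dual value equals $F_\gamma(\Mbar{X})$, which reduces to proving $\mathrm{Ker}(\M{A}) = \mathrm{span}(\V{1})$ using the fact that the incidence matrix of a connected graph on $n$ vertices has rank $n-1$. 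Your proof is the primal mirror image: the blocks of the paper's dual certificate are precisely $\gamma w_{ij}\V{s}_{ij}$ in your notation, and your surjectivity claim (incidence vectors span $\V{1}^\perp$) is equivalent to the paper's kernel computation --- Assumption 1 enters at the same point in both. What your version buys is the elimination of the duality machinery and the pseudoinverse: you verify $\V{0} \in \partial F_\gamma(\Mbar{X})$ directly, with the concrete splitting of the residual into $\tfrac{1}{n}\M{R}\V{1}\V{1}\Tra$ (zero column sums, matched by the row penalty) plus its complement (zero row sums, matched by the column penalty), and you get an explicit threshold on $\gamma$ from any fixed representation of these two parts, paralleling the paper's remark that the smallest $\gamma$ with $\V{v}^\star \in C_\gamma$ bounds $\gamma_{\max}$. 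The paper's version, in exchange, exhibits a canonical (least-squares) certificate rather than an existence statement. The bookkeeping you flag --- the subdifferential sum rule for finite-valued convex functions, and the dimension count showing that the zero-row-sum and zero-column-sum subspaces together fill out the codimension-one space of balanced matrices containing $\M{R}$ --- is routine and goes through exactly as you describe.
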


\begin{proof}
We will show that there is a $\gamma_{\max}$ such that for all $\gamma \geq \gamma_{\max}$, the grand mean matrix $\Mbar{X}$ is the unique global minimizer to the primal objective \Eqn{biclust_objective_function}. We will certify that $\Mbar{X}$ is the solution to the primal problem by showing that the optimal value of a dual problem, which lower bounds the primal, equals $F_\gamma(\Mbar{X})$.

Throughout the proof, we will work with the vectorization of matrices, namely the vector obtained by stacking the columns of a matrix on top of each other. We denote the vectorization of a matrix $\M{X}$ by its corresponding bold lower case, namely $\V{x} = \vec(\M{X})$. Thus, we will construct a dual to the following representation of the primal problem,
\begin{eqnarray}
\label{eq:primal}
\underset{\V{u}}{\text{minimize}}\; F_\gamma(\V{u}) = \frac{1}{2} \lVert \V{x} - \V{U} \rVert_2^2 + \gamma J(\V{u}).
\end{eqnarray}
In order to rewrite the penalty $J$ in terms of the vector $\V{u}$, we use the identity $\vec(\M{M}\M{N}\M{P}) = (\M{P}\Tra \Kron \M{M})\vec(\M{N})$ where $\Kron$ denotes the Kronecker product between two matrices. Thus,
\begin{eqnarray}
J(\V{u}) & = & \sum_{i < j} \VE{w}{ij} \lVert \M{A}_{ij} \V{u} \rVert_2 + \sum_{i < j} \tilde{w}_{ij} \lVert \Mtilde{A}_{ij} \V{u} \rVert_2,
\end{eqnarray}
where
\begin{eqnarray}
\M{A}_{ij} & = & (\V{e}_i - \V{e}_j)\Tra \Kron \M{I}, \\
\Mtilde{A}_{ij} & = & \M{I} \Kron (\V{e}_i - \V{e}_j)\Tra,
\end{eqnarray}
and $\V{e}_i$ is the $i$th standard basis vector. To keep things notationally simpler, we have absorbed the normalizations by $\sqrt{p}$ and $\sqrt{n}$ into the weights $\VE{w}{ij}$ and $\tilde{w}_{ij}$.

We first introduce some notation in order to write the relevant dual problem to the primal problem (\ref{eq:primal}). Note that the column weights $\VE{w}{ij}$ can be identified with a column graph of $n$ nodes, where there is an edge between the $i$th and $j$th node if and only if $\VE{w}{ij} > 0$. The row weights $\tilde{w}_{ij}$ can also be identified with an analogous row graph of $p$ nodes.
Let $\mathcal{E}_c$ and $\mathcal{E}_r$ denote the sets of edges in the column and row graphs, and let $\lvert \mathcal{E}_c \rvert$ and $\lvert \mathcal{E}_r \rvert$ denote their respective cardinalities. The edge-incidence matrix of the column graph $\M{\Phi}_c \in \Real^{\lvert \mathcal{E}_c \rvert \times n}$ encodes its connectivity and is defined as
\begin{eqnarray}
\ME{\phi}{c,li} = \begin{cases}
1 & \text{If node $i$ is the head of edge $l$,} \\
-1 & \text{If node $i$ is the tail of edge $l$,} \\
0 & \text{otherwise.}
\end{cases}
\end{eqnarray}
The row edge-incidence matrix $\M{\Phi}_r \in \Real^{\lvert \mathcal{E}_r \rvert \times p}$ is defined similarly.

We begin deriving the dual problem by recalling that norms possess a variational representation in terms of their dual norms, namely
\begin{eqnarray}
\lVert \V{Y} \rVert & = & \underset{ \lVert \V{Z} \rVert_\dagger \leq 1 }{\max} \langle \V{Z}, \V{Y} \rangle,
\end{eqnarray}
where $\lVert \cdot \rVert_\dagger$ is the dual norm of $\lVert \cdot \rVert$. Using this fact and working through some tedious algebra, we can rewrite the penalty term in \Eqn{primal} compactly as
\begin{eqnarray}
\gamma J(\V{U}) & = & \underset{\V{v} \in C_\gamma}{\max}\; \left \langle \M{A}\Tra \V{v}, \V{u} \right \rangle.
\end{eqnarray}
The vector $\V{v}$ is the concatenation of several vectors, namely
\begin{eqnarray}
\V{v} = (\V{v}_1\Tra,\ldots, \V{v}_{\mathcal{E}_c}\Tra, \Vtilde{v}_1\Tra, \ldots, \Vtilde{v}_{\mathcal{E}_r}\Tra)\Tra,
\end{eqnarray}
where $\V{v}_l \in \Real^{n}, \Vtilde{v}_l \in \Real^p$. The matrix $\M{A}$ can be expressed in terms of the row and column edge-incidence matrices, namely
\begin{eqnarray}
\M{A} & = & \begin{pmatrix}
\M{\Phi}_c \Kron \M{I} \\
\M{I} \Kron \M{\Phi}_r \\
\end{pmatrix}.
\end{eqnarray}
Finally, the constraints on the vector $\V{v}$ are encoded in the set
$C_\gamma = \{ \V{v} : \lVert \V{V}_{l} \rVert_{2} \leq \VE{w}{l}\gamma$ and $\lVert \Vtilde{V}_{l} \rVert_2 \leq \tilde{w}_{l} \gamma\}$.

Thus, the primal problem \Eqn{primal} can be expressed as the following saddle point problem
\begin{eqnarray}
\underset{\V{v} \in C_\gamma}{\max}\;
\underset{\V{U}}{\min}\; \frac{1}{2} \lVert \V{x} - \V{u} \rVert_2^2 + \left \langle \M{A}\Tra \V{v}, \V{u} \right \rangle.
\end{eqnarray}
By performing the minimization with respect to $\V{u}$, we obtain a dual maximization problem that provides a lower bound on the primal objective
\begin{eqnarray}
\underset{\V{v} \in C_\gamma}{\max}\;
-\frac{1}{2} \lVert \M{A}\Tra\V{v} \rVert_2^2 + \langle \V{v}, \M{A}\V{x} \rangle.
\end{eqnarray}

For sufficiently large $\gamma$, the solution to the dual maximization problem coincides with the solution to the unconstrained maximization problem
\begin{eqnarray}
\underset{\V{v}}{\max}\;
-\frac{1}{2} \lVert \M{A}\Tra\V{v} \rVert_2^2 + \langle \V{v}, \M{A}\V{x} \rangle,
\end{eqnarray}
whose solution is $\V{v}^\star = \left (\M{A}\M{A}\Tra \right )^\dagger\M{A}\V{x}$.  Plugging $\V{v}^\star$ into the dual objective gives an optimal value of
\begin{eqnarray}
\frac{1}{2} \lVert \M{A}\Tra\left (\M{A}\M{A}\Tra \right )^\dagger \M{A} \V{x} \rVert_2^2 ,
\end{eqnarray}
which we rewrite as
\begin{eqnarray}
\frac{1}{2} \lVert \V{x} - \left [\M{I} - \M{A}\Tra\left (\M{A}\M{A}\Tra \right )^\dagger \M{A} \right ]\V{x} \rVert_2^2.
\end{eqnarray}
Note that  $\left [\M{I} - \M{A}\Tra\left (\M{A}\M{A}\Tra \right )^\dagger \M{A} \right ]$
is the projection onto the orthogonal complement of the column space of $\M{A}\Tra$, which is equivalent to the null space or kernel of $\M{A}$, denoted Ker$(\M{A})$. We will show shortly that Ker($\M{A}$) is the span of the all ones vector. Therefore, $\left [\M{I} - \M{A}\Tra\left (\M{A}\M{A}\Tra \right )^\dagger \M{A} \right ]\V{x} = \frac{1}{np}\langle \V{x}, \V{1} \rangle \V{1}$.

Before showing that Ker($\M{A}$) is the span of $\V{1}$, we note that the smallest $\gamma$ such that $\V{v}^\star \in C_\gamma$ is an upper bound on $\gamma_{\max}$. 
%Even though this condition is cumbersome to write out explicitly in terms of $\V{v}^\star$, we see that $\gamma_{\max}$ can indeed be written down explicitly.

%Note that $\M{A} \in \Real^{\lvert \mathcal{E} \rvert + \lvert \tilde{\mathcal{E}} \rvert \times np}$.
%\begin{eqnarray}
%\M{A} & = & \begin{pmatrix}
%\M{\Phi}_c \Kron \M{I} \\
%\M{I} \Kron \M{\Phi}_r \\
%\end{pmatrix}.
%\end{eqnarray}
%
%Let Ra($\M{A}$) denote the range space of $\M{A}$ and Ker($\M{A}$) the kernel space of $\M{A}$.

We now argue that Ker($\M{A}$) is the span of $\V{1} \in \Real^{np}$.
We rely on the following fact: If $\M{\Phi}$ is an incidence matrix of a connected graph with $n$ vertices, then the rank of $\M{\Phi}$ is $n-1$
(See Theorem 7.2 in Chapter 7 of \cite{Deo1974}). According to Assumption 1 in the paper, the column and row graphs are connected; it follows that $\M{\Phi}_c \in \{-1,0,1\}^{\lvert \mathcal{E}_c \rvert \times n}$ has rank $n-1$ and $\M{\Phi}_r \in \{-1,0,1\}^{\lvert \mathcal{E}_r \rvert \times p}$ has rank $p-1$. It follows then that Ker($\M{\Phi}_c$) and Ker($\M{\Phi}_r$) have dimension one. Furthermore, since each row of $\M{\Phi}_c$ and $\M{\Phi}_r$ has one $1$ and one $-1$, it follows that $\V{1} \in$ Ker($\M{\Phi}_c$) $\subset \Real^n$, and likewise $\V{1} \in$ Ker($\M{\Phi}_r$) $\subset \Real^p$. A vector $\V{z} \in $Ker($\M{A}$) if and only if $\V{z} \in$ Ker($\M{\Phi}_c \Kron \M{I}$) $\cap$ Ker($\M{I} \Kron \M{\Phi}_r$).

Recall that if the singular values of a matrix $\M{A}$ are $\sigma_{\M{A},i}$ and the singular values of a matrix $\M{B}$ are $\sigma_{\M{B},j}$, then the singular values of their Kronecker product $\M{A} \Kron \M{B}$ are $\sigma_{\M{A},i}\sigma_{\M{B},j}$. It follows then that the rank of $\M{A} \Kron \M{B}$ is the product of the ranks of $\M{A}$ and $\M{B}$.

The above rank property of Kronecker products of matrices implies that the dimension of Ker($\M{\Phi}_c \Kron \M{I})$ equals $p$ and the dimension of Ker($\M{I} \Kron \M{\Phi}_r)$ equals $n$. It is easy to see then that the linearly independent set of vectors $\{\V{1} \Kron \V{e}_1, \ldots,
\V{1} \Kron \V{e}_p\}$, where $\V{1} \in \Real^n$ and $\V{e}_i \in \Real^p$, forms a basis for Ker($\M{\Phi}_c \Kron \M{I}$). Likewise, the linearly independent set of vectors $\{ \V{e}_1 \Kron \V{1} , \ldots,
\V{e}_n  \Kron \V{1}\}$, where $\V{1} \in \Real^p$ and $\V{e}_i \in \Real^n$, forms a basis for Ker($\M{I} \Kron \M{\Phi}_r$).

Take an element from Ker($\M{\Phi}_c \Kron \M{I}$), namely $\V{1} \Kron \V{a}$, where $\V{1} \in \Real^n$ and $\V{a} \in \Real^p$. We will show that in order for $\V{1} \Kron \V{a} \in$ Ker($\M{i} \Kron \M{\Phi}_r$), $\V{a}$ must be a multiple of 1. Consider the relevant matrix-vector product 
\begin{eqnarray}
(\M{I} \Kron \M{\Phi}_r) (\V{1} \Kron \V{a}) \amp = \amp (\M{I} \Kron \M{\Phi}_r) \vec(\V{a}\V{1}\Tra) \amp = \amp\M{\Phi}_r \V{a}\V{1}\Tra,
\end{eqnarray}
where we again used the fact that $\vec(\M{M}\M{N}\M{P}) = (\M{P}\Tra \Kron \M{M})\vec(\M{N})$ and the fact that $\vec(\V{b}\V{c}\Tra) = \V{c}\Kron\V{b}$.
Note that $(\M{I} \Kron \M{\Phi}_r) (\V{1} \Kron \V{a}) = \V{0}$ if and only if $\M{\Phi}_r\V{a} = \V{0}$. But
the only way for $\M{\Phi}_r\V{a}$ to be zero is for $\V{a} = c\V{1}$ for some $c \in \Real$. A similar argument shows that the only non-trivial vector in Ker($\M{I} \Kron \M{\Phi}_r$) that also belongs to Ker($\M{\Phi}_c \Kron \M{I}$) is $\tilde{c}\V{1}$ for $\tilde{c} \in \Real$. Thus, we have shown that Ker($\M{A}$) is the span of $\V{1}$. $\square$
\end{proof}

%% ----------------------------------------------------------------------
%% Dykstra-like Proximal Algorithm
%% ----------------------------------------------------------------------
\section*{Web Appendix C. DLPA, COBRA, and a Proof of Proposition 4.1}
\label{sec:DLPA}

We give expanded technical treatment of DLPA and COBRA as well as results described in Section 4. We begin by reviewing some basic concepts convex analysis \citep{BauCom2008,ComPes2011}. Recall that the domain of a convex function $f$ is the set of $\V{x}$ such that $f(\V{x}) < \infty$. For $\sigma > 0$ the mapping
\begin{eqnarray}
\prox_{\sigma f}(\V{u}) & = & \underset{\V{v}}{\arg\min}\;\left[\sigma f(\V{v})+ \frac{1}{2} \lVert \V{u} - \V{v} \rVert_2^2 \right]
\end{eqnarray}
is called the proximal map of the function $f(\V{v})$. The proximal map exists and is unique whenever the function $f(\V{v})$ is convex and lower-semicontinuous. 
Norms and semi-norms satisfy these conditions,
and for many norms of interest the proximal map can be evaluated by either an explicit formula or an efficient algorithm. For example, the proximal map for the $\ell_1$-norm is the ubiquitous element-wise soft-thresholding operator, namely the $l$th element of the proximal mapping is given by
\begin{eqnarray}
\left [\prox_{\sigma \lVert \cdot \rVert_1}(\V{u}) \right]_l & = & \left [ 1 - \frac{\sigma}{| \VE{u}{l} |} \right ]_+ \VE{u}{l}.
\end{eqnarray}

%Recall that the objective of the convex biclustering problem is given by
%\begin{eqnarray}
%F_{\gamma}(\M{U}) & = & \frac{1}{2} \lVert \M{x} - \M{U} \rVert_{\text{F}}^2 + \gamma \underbrace{\left [\Omega_{\M{w}}(\M{U}) + \Omega_{\Mtilde{W}}(\M{U}\Tra) \right ]}_{J(\M{U})}.
%\label{eq:biclust_objective_function}
%\end{eqnarray}
Closer inspection of \Eqn{biclust_objective_function} shows that we seek the proximal mapping of the sum of two lower-semicontinuous, convex functions, namely
\begin{eqnarray}
\prox_{f + g}(\M{U}) & = & \underset{\M{U}}{\arg\min}\; \frac{1}{2} \lVert \M{X} - \M{U} \rVert_{\text{F}}^2 + f(\M{U}) + g(\M{U}),
\end{eqnarray}
where $f(\M{U}) = \gamma\Omega_{\M{W}}(\M{U})$ and $g(\M{U}) = \gamma\Omega_{\Mtilde{W}}(\M{U}\Tra)$.

This problem is reminiscent of the classic problem of finding the projection of a point onto the intersection of two nonempty and closed convex sets. Indeed, it is the problem when the functions $f$ and $g$ are respectively the indicator functions of two nonempty and closed convex sets $A$ and $B$. Then we can pose the problem of finding the projection of $\M{X}$ onto the set $A \cap B$ as the optimization problem
\begin{eqnarray}
\underset{\M{U}}{\min}\; \frac{1}{2} \lVert \M{X} - \M{U} \rVert_{\text{F}}^2 + \delta_A(\M{U}) + \delta_B(\M{U}),
\end{eqnarray}
where $\delta_A$ is the set indicator function, which is 0 for all $\M{U} \in A$ and $\infty$ for all $\M{U} \not\in A$.

von Neumann's alternating projection method provides an iterative solution when the two sets $A$ and $B$ are vector subspaces \citep{Deu1992}. His strategy was subsequently generalized by Dykstra to closed convex cones in Euclidean spaces \citep{Dyk1983} and generalized further by Boyle and Dykstra to the intersection of convex sets in Hilbert spaces \citep{BoyDyk1986}. Finally, \cite{BauCom2008} derived a Dykstra-like proximal algorithm that iteratively solves for the desired proximal mapping of the sum of two convex functions \citep{BauCom2008,ComPes2011}, which we describe next.

Let $f$ and $g$ be lower-semicontinuous convex functions on $\Real^n$, with $\dom f \cap \dom g \not = \emptyset$, and let $\V{x} \in \Real^n$. Bauschke and Combettes' algorithm, shown in \Alg{DLPA},
iteratively solves the following problem
\begin{eqnarray}
\label{eq:dlpa}
\underset{\V{u} \in \Real^n}{\min}\; \frac{1}{2} \lVert \V{u} - \V{x} \rVert_2^2 + f(\V{u}) + g(\V{u})
\end{eqnarray}
and is guaranteed to converge.
\begin{theorem}[Proposition 5.3 in \cite{ComPes2011}]
\label{thm:dlpa}
\Alg{DLPA} converges to the solution of \Eqn{dlpa}.
\end{theorem}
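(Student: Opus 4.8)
This theorem is quoted from the literature (Proposition~5.3 of \cite{ComPes2011}), so the most economical route is to check that its hypotheses hold here --- $f$ and $g$ are proper, lower-semicontinuous, and convex with $\dom f \cap \dom g \neq \emptyset$ --- and then invoke it. I nonetheless sketch the self-contained argument one would reconstruct. First I would pin down the target. The objective in \Eqn{dlpa} is the sum of the strongly convex quadratic $\frac{1}{2}\lVert \V{u} - \V{x} \rVert_2^2$ and the convex functions $f,g$, hence is strongly convex and coercive and admits a \emph{unique} minimizer $\V{u}^\star$, characterized by
\[
\V{x} - \V{u}^\star \in \partial f(\V{u}^\star) + \partial g(\V{u}^\star).
\]
The whole task is to show the iterates converge to this $\V{u}^\star$.

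The key structural step is to recognize \Alg{DLPA} as block coordinate minimization on the Fenchel dual. By induction the iterates satisfy the invariant $\V{u}_m + \V{p}_m + \V{q}_m = \V{x}$ (it holds at initialization, and each of the two updates preserves it). Combining this invariant with the Moreau decomposition $\V{z} = \prox_f(\V{z}) + \prox_{f^*}(\V{z})$, the accumulator updates become
\[
\V{p}_{m+1} = \prox_{f^*}(\V{x} - \V{q}_m), \qquad \V{q}_{m+1} = \prox_{g^*}(\V{x} - \V{p}_{m+1}),
\]
which are exactly the successive (Gauss--Seidel) minimizations of the dual energy
\[
D(\V{p},\V{q}) = \tfrac{1}{2} \lVert \V{x} - \V{p} - \V{q} \rVert_2^2 + f^*(\V{p}) + g^*(\V{q}),
\]
with the primal iterate recovered as $\V{u}_m = \V{x} - \V{p}_m - \V{q}_m$. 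Thus the question reduces to convergence of Dykstra's dual block-coordinate scheme, and the limit $\V{u}^\star = \V{x} - \V{p}^\star - \V{q}^\star$ is precisely the minimizer above.

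Convergence of this scheme I would establish by a Fej\'{e}r-monotonicity / telescoping argument in the style of Boyle--Dykstra \citep{BoyDyk1986}. Pairing the two subgradient inclusions $\V{p}_{m+1} \in \partial f(\V{y}_m)$ and $\V{q}_{m+1} \in \partial g(\V{u}_{m+1})$ (read off from the prox steps) with the firm nonexpansiveness of the proximal maps yields an inequality of the form
\[
\lVert \V{u}_{m+1} - \V{u}^\star \rVert_2^2 \le \lVert \V{u}_m - \V{u}^\star \rVert_2^2 - r_m + c_m,
\]
where $r_m \ge 0$ collects the squared increments $\lVert \V{u}_m - \V{y}_m \rVert_2^2$ and the cross terms $c_m$ telescope, their partial sums being bounded by $\lVert \V{x} - \V{u}^\star \rVert_2^2$. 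Summing over $m$ then forces the residuals to be summable, so $\lVert \V{u}_m - \V{y}_m \rVert_2 \to 0$ and the sequence stays bounded. Any cluster point inherits the stationarity inclusion through closedness of the subdifferential graphs, and uniqueness of $\V{u}^\star$ together with finite dimensionality upgrades this to convergence of the full sequence.

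The main obstacle is precisely the telescoping bookkeeping in that energy inequality: arranging the subgradient inequalities produced by the two prox steps against the updates of $\V{p}_m$ and $\V{q}_m$ so that the cross terms collapse to a quantity controlled by $\lVert \V{x} - \V{u}^\star \rVert_2^2$. This summation-by-parts is the subtle point already present in the classical Boyle--Dykstra analysis, and its extension from projections onto convex sets to proximal maps of general convex functions is exactly what \cite{BauCom2008} supply. Everything else --- uniqueness of $\V{u}^\star$, the invariant, the dual-coordinate-descent identification, and the cluster-point analysis --- is routine once this inequality is in hand.
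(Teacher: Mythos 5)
The paper offers no proof of this statement at all --- it is imported verbatim as Proposition~5.3 of \cite{ComPes2011}, exactly as the first paragraph of your proposal recommends, so your approach matches the paper's. Your supplementary sketch (the invariant $\V{u}_m+\V{p}_m+\V{q}_m=\V{x}$, the Moreau-decomposition reading of the accumulator updates as alternating dual prox steps, and deferral of the Boyle--Dykstra telescoping inequality to \cite{BauCom1986} -- style arguments as extended by \cite{BauCom2008}) is a correct outline of the standard argument, apart from the immaterial point that your $f^*$ and $g^*$ labels are swapped relative to the algorithm as printed: since $\V{p}$ accumulates the $g$-prox residual, one has $\V{p}_{m+1}=\prox_{g^*}(\V{x}-\V{q}_m)$ and $\V{q}_{m+1}=\prox_{f^*}(\V{x}-\V{p}_{m+1})$.
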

\setcounter{algorithm}{2}
\begin{algorithm}[t]
Set $\V{u}_0 = \V{x}, \V{p}_0 = \V{0}, \V{q} = \V{0}$ for $m=0, 1, \ldots$
\begin{algorithmic}[0]
  \caption{Dykstra-Like Proximal Algorithm (DLPA)}
  \label{alg:DLPA}
\Repeat
\State $\V{y}_{m} = \prox_g(\V{u}_m + \V{p}_m)$
\State $\V{p}_{m+1} = \V{u}_m + \V{p}_m - \V{y}_m$
\State $\V{u}_{m+1} = \prox_f(\V{y}_m + \V{q}_m)$
\State $\V{q}_{m+1} = \V{y}_m + \V{q}_m - \V{u}_{m+1}$
\Until{convergence}
\end{algorithmic}
\end{algorithm}
Setting $f(\M{U}) = \gamma\Omega_{\M{W}}(\M{U})$ and $g(\M{U}) = \gamma\Omega_{\Mtilde{W}}(\M{U})$ in \Alg{DLPA} yields COBRA outlined in Algorithm 1 in the paper. Consequently, the convergence of COBRA (Proposition 4.1) follows immediately from \Thm{dlpa}, since $\Omega_{\M{W}}(\M{U})$ and $\Omega_{\Mtilde{W}}(\M{U})$ are both continuous convex functions over all of $\Real^{np}$.

\section*{Web Appendix D. Majorization-Minimization (MM) algorithms and a Proof of Proposition 5.1}
\label{sec:mm_proof}

Recall that in the model selection problem we seek the minimizer of the following validation objective
\begin{eqnarray}
\label{eq:validation}
\tilde{F}_\gamma(\M{U}) & = &
\frac{1}{2} \lVert \mathcal{P}_{\Theta^c}(\M{x}) - \mathcal{P}_{\Theta^c}(\M{U}) \rVert_{\text{F}}^2 + \gamma J(\M{U}).
\end{eqnarray}
In this appendix, we elaborate on how to extend COBRA via a Majorization-Minimization (MM) algorithm to handle missing data in order to solve \Eqn{validation}.

We begin with a brief review of MM algorithms. The basic strategy behind an MM algorithm is to convert a hard optimization problem into a sequence of simpler ones. The MM principle requires majorizing the objective function $f(\V{u})$ by a surrogate function $g(\V{u} \mid \Vtilde{u})$ anchored at the current point $\Vtilde{u}$.  Majorization is a combination of the tangency condition $g(\Vtilde{u} \mid \Vtilde{u}) =  f(\Vtilde{u})$ and the domination condition $g(\V{u} \mid \Vtilde{u})  \geq f(\V{u})$ for all $\V{u} \in \Real^n$.  The associated MM algorithm is defined by the iterates $\Vn{u}{m+1} := \underset{\V{u}}{\arg \min}\; g(\V{u} \mid \Vn{u}{m})$. It is straightforward to verify that the MM iterates generate a descent algorithm driving the objective function downhill, namely that $f(\Vn{u}{m+1}) \leq f(\Vn{u}{m})$ for all $m$.

Returning to our original problem,  we observe that the following quadratic function of $\M{U}$ is always nonnegative
\begin{eqnarray}
\label{eq:quad}
\frac{1}{2} \sum_{(i,j) \in \Theta} (\ME{u}{ij} - \tilde{u}_{ij})^2 & \geq & 0,
\end{eqnarray}
and that the inequality becomes equality when $\M{U} = \Mtilde{U}$. Adding the quadratic function in \Eqn{quad}
 to $\tilde{F}_\gamma(\M{U})$ gives us the following function
\begin{eqnarray}
g(\M{U} \mid \tilde{\M{U}}) & = & \tilde{F}_\gamma(\M{U}) + \frac{1}{2} \sum_{(i,j) \in \Theta} (\ME{u}{ij} - \tilde{u}_{ij})^2 \\
& = & \frac{1}{2} \left [ \sum_{(i,j) \in \Theta^c} (\ME{x}{ij} - \ME{u}{ij})^2 + \sum_{(i,j) \in \Theta} (\ME{U}{ij} - \tilde{u}_{ij})^2 \right ] + \gamma J(\M{U})\\
& = & \frac{1}{2} \lVert \M{M} - \M{U} \rVert_{\text{F}}^2 + \gamma J(\M{U}),
\end{eqnarray}
where $\M{M} = \mathcal{P}_{\Theta^c}(\M{X}) + \mathcal{P}_{\Theta}(\tilde{\M{U}})$. 
The function $g(\M{U} \mid \Mtilde{U})$ majorizes $\tilde{F}_\gamma(\M{U})$ at the point $\Mtilde{U}$, since
$g(\M{U} \mid \Mtilde{U}) \geq \tilde{F}_\gamma(\M{U})$ for all $\M{U}$ and $g(\Mtilde{U} \mid \Mtilde{U}) = \tilde{F}_\gamma(\Mtilde{U})$. Minimizing the majorization $g(\M{U} \mid \Mtilde{U})$ can be accomplished by invoking COBRA on the complete matrix $\M{M}$. Alternating between updating the majorization and applying COBRA to minimize the new majorization yields Algorithm~2 in the paper.

Having derived the majorization being minimized in Algorithm~2, we are almost ready to prove Proposition 5.1.
%Recall that the MM algorithm in question employs the following surrogate function.
%\begin{eqnarray}
%g(\M{U} \mid \tilde{\M{U}}) & = & \frac{1}{2} \lVert \M{M} - \M{U} \rVert_{\text{F}}^2 + \gamma J(\M{U}),
%\end{eqnarray}
%where $\M{M} = \mathcal{P}_{\Theta^c}(\M{X}) + \mathcal{P}_{\Theta}(\tilde{\M{U}})$ and $J(\M{U})$ is the non smooth regularizer that shrinks columns together and rows together.
We need one more ingredient. The convergence theory of monotonically decreasing algorithms, like the MM algorithm, hinges on the properties of the map $\psi(\M{U})$ which returns the next iterate given the last iterate. For easy reference, we state a simple version of Meyer's monotone convergence theorem \citep{Meyer1976} which is the key ingredient in proving convergence in our setting.
\begin{theorem}\label{thm:MM_limit_points}
  Let $f(\M{U})$ be a continuous function on a compact domain $S$ and
   $\psi(\M{U})$ be a continuous map from $S$ into $S$ satisfying
 $f(\psi(\M{U})) < f(\M{U})$ for all $\M{U} \in S$ with $\psi(\M{U}) \neq \M{U}$.
Then all limit points are fixed points of $\psi(\M{U})$.
\end{theorem}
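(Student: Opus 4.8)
The plan is to exploit the monotone decrease of $f$ along the trajectory generated by $\psi$ together with the continuity of both $f$ and $\psi$, and then invoke the strict descent hypothesis to rule out any limit point that is not fixed. Throughout I would work with the sequence of iterates defined by $\Mn{U}{m+1} = \psi(\Mn{U}{m})$ starting from an arbitrary $\Mn{U}{0} \in S$; since $\psi$ maps $S$ into itself, the entire sequence stays in the compact set $S$.

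First I would establish that the scalar sequence $\{f(\Mn{U}{m})\}$ converges. The hypothesis gives $f(\psi(\M{U})) < f(\M{U})$ whenever $\psi(\M{U}) \neq \M{U}$, and trivially $f(\psi(\M{U})) = f(\M{U})$ otherwise, so in all cases $f(\Mn{U}{m+1}) = f(\psi(\Mn{U}{m})) \leq f(\Mn{U}{m})$, i.e.\ the sequence is nonincreasing. Because $f$ is continuous on the compact set $S$, it is bounded below, so this monotone sequence converges to some limit $f^\star$.

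Next I would analyze an arbitrary limit point. Compactness of $S$ guarantees that at least one limit point exists; let $\M{U}^\star$ be one, realized as the limit of a subsequence $\Mn{U}{m_k} \to \M{U}^\star$. Continuity of $f$ forces $f(\M{U}^\star) = \lim_k f(\Mn{U}{m_k}) = f^\star$. Continuity of $\psi$ gives $\Mn{U}{m_k+1} = \psi(\Mn{U}{m_k}) \to \psi(\M{U}^\star)$, and applying continuity of $f$ once more yields $f(\psi(\M{U}^\star)) = \lim_k f(\Mn{U}{m_k+1}) = f^\star$, where the last equality holds because $\{f(\Mn{U}{m_k+1})\}$ is a subsequence of the convergent full sequence $\{f(\Mn{U}{m})\}$. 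Hence $f(\psi(\M{U}^\star)) = f(\M{U}^\star)$.

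The conclusion then follows by contradiction: if $\psi(\M{U}^\star) \neq \M{U}^\star$, the strict descent hypothesis would give $f(\psi(\M{U}^\star)) < f(\M{U}^\star)$, contradicting the equality just derived; therefore $\psi(\M{U}^\star) = \M{U}^\star$. The main obstacle --- really the only subtle point in an otherwise routine argument --- is the need to pass to the limit through $\psi$ as well as through $f$, which is exactly where continuity of the iteration map $\psi$ (and not merely of $f$) is indispensable, and where one must take care that the shifted subsequence $\{f(\Mn{U}{m_k+1})\}$ inherits the limit $f^\star$ of the full sequence rather than some other value.
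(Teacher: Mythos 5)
Your proof is correct. Note that the paper does not supply its own proof of this statement: it is quoted verbatim as ``a simple version of Meyer's monotone convergence theorem'' with a citation to Meyer (1976), so there is no in-paper argument to compare against. Your argument --- monotone descent plus boundedness of $f$ on the compact set $S$ gives a single limit value $f^\star$ along the whole orbit $\Mn{U}{m+1}=\psi(\Mn{U}{m})$; continuity of $\psi$ and of $f$ lets you pass to the limit through one application of the map along the subsequence and its shift, yielding $f(\psi(\M{U}^\star))=f(\M{U}^\star)=f^\star$; and the strict-descent hypothesis then rules out $\psi(\M{U}^\star)\neq\M{U}^\star$ --- is the standard proof of this result, and every step, including the subtle point that the shifted subsequence $\{f(\Mn{U}{m_k+1})\}$ inherits the limit $f^\star$ of the full convergent sequence, is correctly justified.
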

We now prove Proposition 5.1.
\begin{proof}
We first use the above theorem to establish that the iterates of the MM algorithm tend towards the fixed points of the corresponding map, $\psi(\Mtilde{U}) = \arg\min_{\M{U}} g(\M{U} \mid \Mtilde{U})$. Fix an arbitrary starting guess $\Mn{U}{0}$. Set $S = \{ \M{U} : g(\M{U} \mid \Mn{U}{0} ) \leq F_\gamma(\Mn{U}{0}) \}$. Since $g(\M{U} \mid \Mn{U}{0})$ is continuous and coercive in $\M{U}$, it follows that $S$ is compact. Since $g(\M{U} \mid \Mtilde{U})$ is strongly convex in $\M{U}$ it follows that if $\Mtilde{U}$ is not a fixed point then it is not the unique global minimizer of $g(\M{U} \mid \Mtilde{U})$ and therefore $F_\gamma(\psi(\Mtilde{U})) < F_\gamma(\Mtilde{U})$. By \Thm{MM_limit_points} the limit points of the sequence $\Mn{U}{n}$ are fixed points of $\psi(\M{U})$.

We argue that the fixed points of $\psi(\M{U})$ are global minimizers of the validation objective \Eqn{validation}. Note that the mapping $\Mtilde{U} \mapsto \psi(\Mtilde{U})$ is characterized by the condition
\begin{eqnarray}
\psi(\Mtilde{U}) - \mathcal{P}_{\Theta^c}(\M{X}) - \mathcal{P}_{\Theta}(\Mtilde{U}) & \in & \gamma\partial J(\psi(\Mtilde{U})).
\end{eqnarray}
If $\Mtilde{U}$ is a fixed point of $\psi$ then $\Mtilde{U} = \psi(\Mtilde{U})$, and the above optimality condition becomes
\begin{eqnarray}
\mathcal{P}_{\Theta^c}(\Mtilde{U}) - \mathcal{P}_{\Theta^c}(\M{X}) & \in & \gamma\partial J(\Mtilde{U}).
\end{eqnarray}
But this implies that $\Mtilde{U}$ is a global minimizer of the validation objective \Eqn{validation}. 

Putting everything together, we have that the limit points of the MM sequence $\Mn{U}{m}$ are global minimizers of the validation objective. $\square$
\end{proof}
Note that there might be infinitely many limit points. The set of limit points, however, must be contained in $S$ and is therefore bounded. It must also be convex and closed.

As a final remark, we point out that we obtain essentially the same MM algorithm, if we substitute the objective $\tilde{F}_\gamma(\M{U})$ with the objective
\begin{eqnarray}
\frac{1}{2} \lVert \mathcal{P}_{\Theta^c}(\M{x}) - \mathcal{P}_{\Theta^c}(\M{U}) \rVert_{\text{F}}^2 + \gamma \lVert \M{U} \rVert_*,
\end{eqnarray}
where $\lVert \M{U} \rVert_*$ denotes the nuclear norm of $\M{U}$. When we substitute $\lVert \M{U} \rVert_*$ for $J(\M{U})$, the resulting MM algorithm is the soft-impute algorithm of \cite{MazHasTib2010}.

\section*{Web Appendix E. Measures of Clustering Similarity}

\subsection*{Rand Index}
\label{sec:rand_index}

Consider the problem of clustering $q$ objects. Let $\mathcal{A} = \{A_1, \ldots, A_m\}$ and $\mathcal{B} = \{B_1, \ldots, B_n\}$ denote two partitions of the index set $\{1, \ldots, q\}$. The Rand index \citep{Ran1971} quantifies
how similar the partitions $\mathcal{A}$ and $\mathcal{B}$ are to each other by tallying up how often pairs of objects are similarly assigned and dividing this quantity by the total number of pairs of objects.  To compute the Rand Index,  we define four events:
\begin{equation}
\begin{split}
C_{ij} & =  \text{$\{ i,j \in A_k$ for some $k \}$} \\
D_{ij} & =  \text{$\{i,j \in B_l$ for some $l \}$} \\
E_{ij} & =  \text{$\{i \in A_k, j \in A_{k'}$ for $k \not = k'\}$} \\
F_{ij} & =  \text{$\{i \in B_l, j \in B_{l'}$ for $l \not = l'\}$}.
\end{split}
\end{equation}
The intersection $C_{ij} \cap D_{ij}$ denotes the event that $\mathcal{A}$ and $\mathcal{B}$ have assigned the pair of objects $i$ and $j$ similarly, namely under both partitions, $i$ and $j$ belong to the same cluster. 
The intersection $E_{ij} \cap F_{ij}$ denotes the event that $\mathcal{A}$ and $\mathcal{B}$ have assigned the pair of objects $i$ and $j$ similarly in an alternative sense, namely under both partitions, $i$ and $j$ are assigned to different clusters. The Rand index is given by the following ratio
\begin{eqnarray}
\frac{\sum_{i < j} I(C_{ij} \cap D_{ij}) + \sum_{i < j} I(E_{ij} \cap F_{ij})}{ {q \choose 2} },
\end{eqnarray}
where $I(Z)$ is 1 if event $Z$ occurs and 0 otherwise. By dividing by the total number of pairs, the Rand index takes on values between 0 (no agreement between $\mathcal{A}$ and $\mathcal{B}$) and 1 (perfect agreement between $\mathcal{A}$ and $\mathcal{B}$).

\subsection*{Adjusted Rand Index}
\label{sec:adjusted_rand_index}

\begin{table}[th]
\begin{tabular}{ c || c c c c || c }
& $\mathcal{B}_1$ & $\mathcal{B}_2$ & $\cdots$ & $\mathcal{B}_n$ & Row Marginal \\ \hline
$\mathcal{A}_1$ & $q_{11}$ & $q_{12}$ & $\cdots$ & $q_{1n}$ & $q_{1\cdot}$  \\ 
$\mathcal{A}_2$ & $q_{21}$ & $q_{22}$ & $\cdots$ & $q_{2n}$ & $q_{2\cdot}$  \\ 
$\vdots$ & $\vdots$ & $\vdots$ & $\ddots$ & $\vdots$ & $\vdots$  \\
$\mathcal{A}_m$ & $q_{m1}$ & $q_{m2}$ & $\cdots$ & $q_{mn}$ & $q_{m\cdot}$ \\
Column Marginal & $q_{\cdot 1}$ & $q_{\cdot 2}$ & $\cdots$ & $q_{\cdot n}$ & $q$ \\ \hline \\
\end{tabular}
\caption{\label{tab:adjusted_rand} 
Contingency table comparing two partitions $\mathcal{A} = \{A_1, \ldots, A_m\}$ and $\mathcal{B} = \{B_1, \ldots, B_n\}$ of the index set $\{1, \ldots, q\}$.}
\end{table}

While the Rand index is an intuitive and simple quantitative measure of assessing the similarity of two partitions, it does have some well known defects. For example, since the Rand index makes no probabilistic assumptions on the data, there are no guarantees on the behavior of the Rand index when assessing the similarity between two random partitions. 
To address these issues, \cite{HubAra1985} proposed the adjusted Rand index, which assumes a hypergeometric distribution in the null case when two random partitions are being compared. Note that this is a very strong assumption. Nonetheless, the benefit of making this assumption is that under the null case, the expected value of the adjusted Rand index is zero. When there is perfect agreement between the two partitions being compared the adjusted Rand index is 1. Unlike the Rand index, it is possible for the adjusted Rand index to take on negative values. We next review details on calculating the adjusted Rand index.

Again consider the problem of clustering $q$ objects. Let $\mathcal{A} = \{A_1, \ldots, A_m\}$ and $\mathcal{B} = \{B_1, \ldots, B_n\}$ denote two partitions of the index set $\{1, \ldots, q\}$. 
We illustrate how the score is computed using the contingency table shown in \Tab{adjusted_rand}. The $ij$th entry in the table $q_{ij}$ denotes the number of elements common to $\mathcal{A}_i$ and $\mathcal{B}_j$.
We denote the $i$th row marginal sum $q_{i\cdot} = \sum_{j} n_{ij}$ and the $j$th column marginal sum $q_{\cdot j} = \sum_{i} q_{ij}$. The adjusted Rand Index is given by the following ratio.

\begin{eqnarray}
\frac{\sum_{i=1}^m \sum_{j=1}^n {q_{ij} \choose 2} - \left [\sum_{i=1}^m {q_{i \cdot} \choose 2} \sum_{j=1}^n {q_{\cdot j} \choose 2}\right ]/{q \choose 2}}
{\frac{1}{2} \left [ \sum_{i=1}^m {q_{i \cdot} \choose 2} + \sum_{j=1}^n {q_{\cdot j} \choose 2} \right ] - \left [ \sum_{i=1}^m {q_{i \cdot} \choose 2} \sum_{j=1}^n {q_{\cdot j} \choose 2} \right ]/{q \choose 2}}.
\end{eqnarray}

\subsection*{Variation of Information}
\label{sec:variation_information}

We first need to define the entropy of a partition and the mutual information between two partitions in order to define the variation of information. As before suppose we have $q$ objects to cluster and two partitions $\mathcal{A}$ and $\mathcal{B}$. Let $\lvert \mathcal{A}_i \rvert$ denote the number of elements in the $i$th partition $\mathcal{A}_i$.
Entropy of a partition $\mathcal{A}$ is given by
\begin{eqnarray}
\mathcal{H}(\mathcal{A}) & = & \sum_{i=1}^m \frac{\lvert \mathcal{A}_i \rvert}{q}\log_2 \left(\frac{\lvert \mathcal{A}_i \rvert}{q} \right).
\end{eqnarray}
The mutual information between two partitions $\mathcal{A}$ and $\mathcal{B}$ is given by
\begin{eqnarray}
\mathcal{I}(\mathcal{A}, \mathcal{B}) & = & \sum_{i=1}^m \sum_{j=1}^n \frac{\lvert \mathcal{A}_i \cap \mathcal{B}_j \vert}{q} \log_2 \left ( \frac{\lvert \mathcal{A}_i \cap \mathcal{B}_j \rvert}{\lvert \mathcal{A}_i\rvert \lvert \mathcal{B}_j\rvert} \right ).
\end{eqnarray}
The variation of information between two clustering is given by
\begin{eqnarray}
\mathcal{VI}(\mathcal{A},\mathcal{B}) & = & \mathcal{H}(\mathcal{A}) + \mathcal{H}(\mathcal{B}) - 2\mathcal{I}(\mathcal{A},\mathcal{B}).
\end{eqnarray}

\section*{Web Appendix F. Non-Checkerboard Mean Structure}

In the checkerboard model, we have assumed that the mean structure is succinctly described by the cross-product of row and column partitions. This next example explores how COBRA performs when this assumption is violated. Consider the case where the observed data matrix $\M{X}$ is a noisy realization of 5 underlying biclusters $(a, b, c, d, e)$ that can be arranged as follows:
\begin{eqnarray}
\label{eq:ref2_1}
\M{X} & = & \begin{bmatrix}
\mu_a\M{1}_a & \mu_a\M{1}_a & \mu_d\M{1}_d \\
\mu_b\M{1}_b & \mu_c\M{1}_c & \mu_e\M{1}_e \\
\end{bmatrix} + \M{\mathcal{E}},
\end{eqnarray}
where $\M{1}_a$ is a matrix of ones, $\mu_a$ is the mean of the bicluster-$a$, and $\mathcal{E}$ is a matrix whose entries are independent draws from a Gaussian distribution. As noted by a referee, this is a scenario that is likely to occur in practice and consists of biclusters that violate the cross-product structure assumed by the checkerboard model. We simulated data $\ME{x}{ij} = \ME{\mu}{k} + \ME{\varepsilon}{ij}$ where $\mu := (\mu_a, \mu_b, \mu_c, \mu_d, \mu_e) = (1,0,0.25,-1,1.25)$ and $\ME{\varepsilon}{ij}$ are i.i.d.\@ draws from $N(0,0.1)$. Using the validation procedure to select the regularization parameter $\gamma$, COBRA identified 2 row partitions and 3 column partitions with bicluster means $\hat{\mu} = (1.000, 1.001, -0.002, 0.252, -0.999, 1.248)$. \Fig{non_checkerboard} shows the smoothed COBRA estimate. While COBRA cannot exactly identify the true bicluster structure, the true structure is readily identifiable from the COBRA output given that the two biclusters in the first row and first two columns have nearly identical estimated means. In short, by examining the estimated biclusters in conjunction with their estimated means, COBRA can potentially identify the correct biclusters even when the checkerboard assumption is violated.

\begin{figure}
\centering
\includegraphics[scale=0.5]{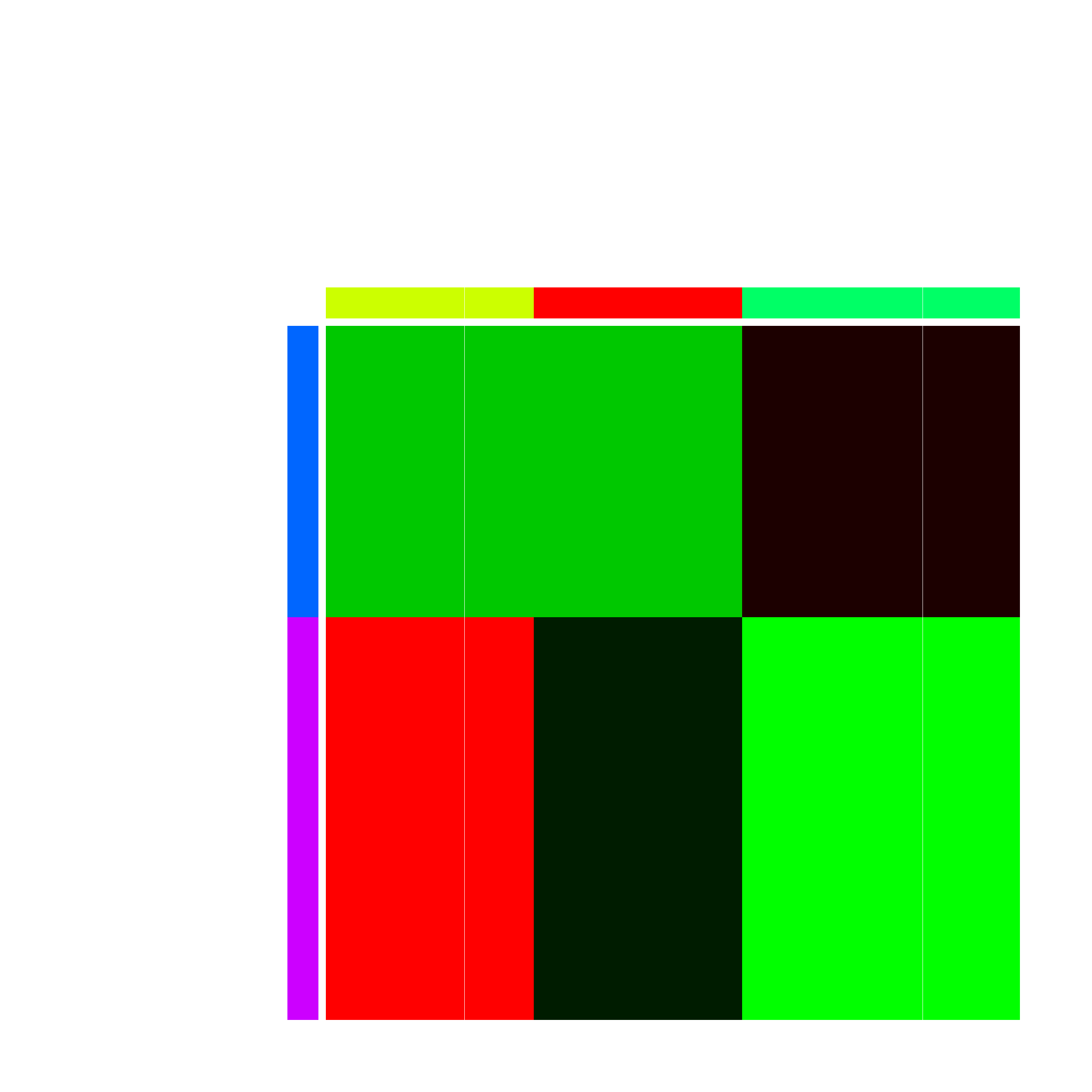}
\caption{Non-Checkerboard mean structure. While COBRA cannot exactly identify the true bicluster structure, the true biclustering structure is readily identifiable from the COBRA estimate.}
\label{fig:non_checkerboard}
\end{figure}

%% ----------------------------------------------------------------------
%% Simulation: Stability for non-checkerboard methods
%% ----------------------------------------------------------------------
\section*{Web Appendix G. COBRA Refinements}

\subsection*{Adaptive COBRA}

We describe a natural extension of the adaptive Lasso \citep{Zou2006} to the convex biclustering problem. The adaptive COBRA applies the COBRA method twice. Let $\M{U}^\star$ denote the first COBRA solution. Note this first application of COBRA includes the model selection step detailed in Section 5 in the main paper. So, $\M{U}^\star$ corresponds to the smoothed estimate at the chosen $\gamma$ to minimize the hold-out error. We then recompute the weights treating $\M{U}^\star$ as the data and using the same sparse Gaussian kernel weights procedure detailed in Web Appendix A. We then perform a second round of the COBRA method using these new weights on the original data $\M{X}$. This two step procedure shrinks together more strongly similar columns (rows) than the original COBRA method, mimicking the effect of reweighting in the adaptive Lasso.

\subsection*{Thresholded COBRA}

Before we can describe how the thresholded COBRA works, we need to review how clustering assignments are made in convex clustering. Recall that COBRA alternates between applying convex clustering on the rows and columns of the data matrix $\M{X}$.  To streamline the discussion, we focus on how convex clustering obtains column clusters; row clusters are obtained analogously. As noted at the end of Section 4, hard clustering assignments are trivially obtained from variables employed in the splitting method introduced in \cite{ChiLan2015}. We elaborate on this comment here. To cluster the columns of $\M{X}$, we solve the following minimization problem.
\begin{eqnarray}
\underset{\M{U}}{\min}\; \frac{1}{2} \lVert \M{X} - \M{U} \rvert_{\text{F}}^2 + \gamma \sum_{i < j} w_{ij} \lVert \M{U}_{\cdot i} - \M{U}_{\cdot j} \rVert_2,
\end{eqnarray}
where $w_{ij}$ are weights that differentially penalize pairwise differences between columns based on their degree of similarity and the tuning parameter $\gamma$ trades off the emphasis between the data fit and the smoothness of the solution. The AMA method proposed by \cite{ChiLan2015} solves the following equivalent minimization problem.
\begin{eqnarray}
\underset{\M{U}, \V{V}_{1,2}, \ldots, \V{V}_{n-1,n}}{\min}\; \frac{1}{2} \lVert \M{X} - \M{U} \rvert_{\text{F}}^2 + \gamma \sum_{i < j} w_{ij} \lVert \V{V}_{i,j} \rVert_2,
\end{eqnarray}
subject to $\V{v}_{i,j} = \M{U}_{\cdot i} - \M{U}_{\cdot j}$ for all $i < j$. We have introduced a dummy variable $\V{v}_{i,j}$ that is the difference between the $i$th and $j$th columns of $\M{U}$. The AMA method iteratively applies group-wise softthresholding to send $\V{v}_{i,j}$ vectors with small magnitude to zero. Thus, cluster assignments are made as follows. If $\V{v}_{i,j} = \V{0}$, then the $i$th and $j$th columns are put in the same group. An analogous set of dummy vectors are obtained after applying convex clustering on the rows of $\M{X}$.

We are now ready to describe a natural extension of the thresholded Lasso \citep{Meinshausen2009} to the convex biclustering problem. As with the adaptive COBRA, the thresholded COBRA performs a postprocessing step that groups together column centroids (row centroids) that are almost but not exactly identical. In solving the COBRA optimization problem we obtain a set of vectors associated with the column differences $\V{v}_{i,j}$. We compute a second set of column difference vectors $\Vtilde{V}_{i,j}$ from the column difference vectors $\V{V}_{i,j}$ as follows.
\begin{eqnarray}
\Vtilde{v}_{i,j} & = & \begin{cases}
\V{v}_{i,j} & \text{if $\lVert \V{v}_{i,j} \rVert_2 \geq \tau$} \\
\V{0} & \text{otherwise}
\end{cases}.
\end{eqnarray}
Under the thresholded COBRA, if $\Vtilde{v}_{i,j} = \V{0}$, then the $i$th and $j$th columns are assigned to the same column group.
In short, the postprocessing consists of hard thresholding of the column difference vectors $\V{V}_{i,j}$. The parameter $\tau$ controls how aggressive the hard-thresholding is. A natural question is how to set $\tau$. In the case of sparse linear regression $\tau$ should be on the order of the noise \citep{Meinshausen2009}. This is typically estimated in practice using the standard deviation in the residuals. In this work we choose $\tau$ to be a fraction of the standard deviation of the 2-norms of the vectors $\V{v}_{i,j}$. To be conservative, we chose $\tau$ to be 1/4 of this standard deviation.

%% ----------------------------------------------------------------------
%% Simulation: Stability for non-checkerboard methods
%% ----------------------------------------------------------------------
\section*{Web Appendix H. Additional Stability Experiments}

We repeat the stability experiments on the lung cancer data (Section 7 of the main paper) using three other biclustering methods with software available in R.
These methods are (i) the iterative signature algorithm \citep{BerIhm2003} implemented in the package {\tt isa2}, (ii) the sparse SVD method \citep{LeeSheHua2010,SilKaiKop2011} implemented in the package {\tt s4vd}, and (iii) Plaid models \citep{LazOwe2002} implemented in the package {\tt biclust}. 
These three methods are popular SVD-based approaches that seek overlapping biclusters. 
This is in contrast to the methods compared in the main paper that assume an underlying checkerboard mean structure. The reader should keep this difference in mind since the biclustering output of these three methods are not directly comparable to the biclustering output of the methods considered in the main paper. Nonetheless, despite these differences, it is possible to assess the stability of these methods in the same manner that we evaluated the stability for the methods that assume a checkerboard pattern. We include these results for completeness here. All parameters were selected according to the default methods in the R packages. 

As before, we restrict our attention to the 150 genes with the highest variance. We first apply the biclustering methods on the original data to obtain baseline biclusterings. We then add iid $N(0,\sigma^2)$, noise where $\sigma = 0.5, 1.0, 1.5$ to create a perturbed data set on which to apply the same set of methods. We compute the RI, ARI, and VI between the baseline clustering and the one obtained on the perturbed data.  \Tab{stability} shows the average RI, ARI, and VI of 50 replicates as well as run times. All three methods are relatively fast and take run times between that of COBRA and the clustered dendrogram with dynamic tree cutting. The Plaid model, however, clearly exhibits the best stability among these approaches that seek overlapping biclusters. For convenience, we have included \Tab{stability_checkerboard} which shows the stability results of the COBRA variants, DCT, and spBC that are presented in Table~2 of the main paper.

\begin{table}[th]
\begin{tabular}{ l c c c c }
& $\sigma$ & ISA & sparse SVD & Plaid \\ \hline
RI & 0.5 & 0.731 & 0.957 & 0.983 \\ 
& 1.0 & 0.595 & 0.955 & 0.963 \\
& 1.5 & 0.528 & 0.936 & 0.944 \\ \hline
ARI & 0.5 & 0.462 & 0.567 & 0.880 \\ 
& 1.0 & 0.191 & 0.515 & 0.655 \\
& 1.5 & 0.054 & 0.372 & 0.353 \\ \hline
VI & 0.5 & 1.243 & 0.213 & 0.097 \\ 
& 1.0 & 1.624 & 0.222 & 0.189 \\
& 1.5 & 1.746 & 0.273 & 0.231 \\ \hline
time (sec) & 0.5 & 2.60 & 9.07 & 2.65 \\ 
& 1.0 & 2.94 & 10.63 & 3.02 \\
& 1.5 & 2.89 & 10.22 & 2.95 \\ \hline \\
\end{tabular}
\caption{\label{tab:stability} 
Stability and reproducibility of biclusterings in lung cancer microarray data. The ISA, sparse SVD, and Plaid biclustering methods are applied to the lung cancer data to obtain baseline biclusterings. We then perturb the data by adding iid\@ $N(0,\sigma^2)$ noise where $\sigma = 0.5$ (Small Pert.), 1.0 (Medium Pert.), 1.5 (Large Pert.).} %Three criteria (RI, ARI, and VI) are used to quantify the variation between the baseline clustering and the one obtained on the perturbed data.}
\end{table}

\begin{table}[th]
\begin{tabular}{ l c c c c c c }
& $\sigma$ & COBRA & COBRA (A) & COBRA (T) & DCT & spBC \\ \hline
RI & 0.5 & 0.984 & 0.992 & 0.959 & 0.979 & 0.974 \\ 
& 1.0 & 0.981 & 0.990 & 0.944 & 0.974 & 0.965 \\
& 1.5 & 0.973 & 0.989 & 0.896 & 0.973 & 0.936 \\ \hline
ARI & 0.5 & 0.350 & 0.788 & 0.813 & 0.530 & 0.642 \\ 
& 1.0 & 0.233 & 0.686 & 0.766 & 0.439 & 0.544 \\
& 1.5 & 0.201 & 0.667 & 0.644 & 0.340 & 0.397 \\ \hline
VI & 0.5 & 1.924 & 0.882 & 0.776 & 2.120 & 1.568 \\ 
& 1.0 & 2.380 & 1.276 & 0.962 & 2.769 & 2.174 \\
& 1.5 & 2.721 & 1.312 & 1.320 & 3.505 & 2.915 \\ \hline
time (sec) & 0.5 & 15.44 & 23.46 & 15.65 & 0.07 & 151.59 \\ 
& 1.0 & 25.21 & 34.51 & 25.53 & 0.11 & 197.00 \\
& 1.5 & 18.18 & 26.43 & 18.50 & 0.12 & 207.88 \\ \hline \\
\end{tabular}
\caption{\label{tab:stability_checkerboard} 
Stability and reproducibility of biclusterings in lung cancer microarray data. COBRA variants, the clustered dendrogram with dynamic tree cutting, and sparse Biclustering are applied to the lung cancer data to obtain baseline biclusterings. We then perturb the data by adding iid\@ $N(0,\sigma^2)$ noise where $\sigma = 0.5$ (Small Pert.), 1.0 (Medium Pert.), 1.5 (Large Pert.).} %Three criteria (RI, ARI, and VI) are used to quantify the variation between the baseline clustering and the one obtained on the perturbed data.}
\end{table}

\bibliographystyle{imsart-nameyear}
\bibliography{biclustering}

\end{document}